\def\passOptions#1#2{\PassOptionsToPackage{#2}{#1}}
	\providecommand\BibTeX{{%
			\normalfont B\kern-0.5em{\scshape i\kern-0.25em b}\kern-0.8em\TeX}}}
\begin{document}
	
	\title[Subtyping Machines]{Study of the Subtyping Machine \\ of Nominal Subtyping with Variance (full version)}
	
	\author{Ori Roth}
	\affiliation{%
		\institution{Technion---IIT}
		\city{Haifa}
		\country{Israel}}
	\email{soriroth@cs.technion.ac.il}
	
	\begin{abstract}
		This is a study of the computing power of the subtyping machine behind
Kennedy and Pierce's nominal subtyping with variance.
We depict the lattice of fragments of Kennedy and Pierce's type system and characterize
their computing power in terms of
regular, context-free, deterministic, and non-deterministic tree languages.
Based on the theory, we present Treetop---a generator of \NonCitingUse{CSharp}
implementations of subtyping machines.
The software artifact constitutes the first feasible (yet POC) fluent API generator
to support context-free API protocols in a decidable type system fragment.

	\end{abstract}
	
	\begin{CCSXML}
		<ccs2012>
		<concept>
		<concept_id>10011007.10011006.10011008</concept_id>
		<concept_desc>Software and its engineering~General programming languages</concept_desc>
		<concept_significance>500</concept_significance>
		</concept>
		<concept>
		<concept_id>10011007.10011006.10011008.10011024.10011026</concept_id>
		<concept_desc>Software and its engineering~Inheritance</concept_desc>
		<concept_significance>500</concept_significance>
		</concept>
		<concept>
		<concept_id>10011007.10011006.10011050.10011051</concept_id>
		<concept_desc>Software and its engineering~API languages</concept_desc>
		<concept_significance>300</concept_significance>
		</concept>
		<concept>
		<concept_id>10011007.10011074.10011092</concept_id>
		<concept_desc>Software and its engineering~Software development techniques</concept_desc>
		<concept_significance>300</concept_significance>
		</concept>
		</ccs2012>
	\end{CCSXML}
	
	\ccsdesc[500]{Software and its engineering~General programming languages}
	\ccsdesc[500]{Software and its engineering~Inheritance}
	\ccsdesc[300]{Software and its engineering~API languages}
	\ccsdesc[300]{Software and its engineering~Software development techniques}
	
	\keywords{subtyping, variance, metaprogramming, fluent API, DSL}
	
	\maketitle
	
	\section{Introduction}
	\label{section:aa}
	Object-oriented programming is all about inheritance or subtyping, and how
software can and cannot be properly engineered using it. In the eyes of the
theoretical computer scientist or the type theorist, the question is what
(compile-time) computations can and cannot be done with subtyping. The question
becomes challenging when generics, i.e., parametric polymorphism, interact with
subtyping, raising the question of when a certain generic taking type
argument~$t_1$ is a subtype of the same generic taking type argument~$t_2$. In the
invariant regime of subtyping with generics, the answer is negative (unless~$t_1=t_2$);
in the covariant regime, the answer is positive precisely when~$t_1$ is a subtype
of~$t_2$; in contrast, in the contravariant regime, when~$t_2$ is a subtype of~$t_1$.

Consider, for example, the~\CSharp code in \cref{lst:example:cff:program}
(unless said otherwise, all code examples in this paper are written in \CSharp):
\begin{code}[style=csharp,caption={\protect\CSharp subtyping machine recognizing palindromes over~$❴a, b❵$},
	label={lst:example:cff:program}]
interface a<out x> {} interface b<out x> {} interface E {}
class v0<x> : a<v0<a<x>>>, a<a<x>>, a<x>, b<v0<b<x>>>, b<b<x>>, b<x> {}
a<b<b<a<b<b<a<E>>>>>>> w1 = new v0<E>(); // Compiles: @\color{comment}$abbabba$@ is a palindrome
a<b<b<a<b<a<a<E>>>>>>> w2 = new v0<E>(); // Does not compile: @\color{comment}$abbabaa$@ is not a palindrome
\end{code}
The program in \cref{lst:example:cff:program} defines four classes (lines 1--2):
generic (polymorphic) classes \cc{a} and \cc{b} employing a covariant type parameter \cc{x},
as indicated by the keyword \kk{out};
non-generic class \cc{E};
and generic class \cc{v0} that inherits from the six supertypes written after the colon.
Below the classes are two variable declarations (lines 3--4).
The type of variable \cc{w1} is \cc{a<b<b<a<b<b<a<E>{}>{}>{}>{}>{}>{}>}, encoding the word $w₁=abbabba$,
and the type of variable \cc{w2} is \cc{a<b<b<a<b<a<a<E>{}>{}>{}>{}>{}>{}>}, encoding the word $w₂=abbabaa$.
\cref{lst:example:cff:program} demonstrate the encoding of~$ℓ$, the non-deterministic
context-free formal language of palindromes over alphabet~$❴a, b❵$, in the type
system of~\CSharp.
Each variable is assigned an instantiation of type \cc{v0<E>}, that
type checks if and only if the variable type encodes a palindrome.
Thus, the assignment to variable \cc{w1} type checks since~$w₁∈ℓ$,
but the assignment to \cc{w2} fails since~$w₂∉ℓ$.
The mechanics behind this ‟magic”, and why it matters, is
revealed by our study of the computing power of subtyping in the
context of the work of \citet{Kennedy:Pierce:07} on the \emph{nominal
subtyping with variance} type system.

\subsection{Kennedy and Pierce's Type System and Its Decidable Fragments}

\cref{figure:lattice} places our results in context.
The figure shows a three-dimensional lattice, structured similarly
to the~$λ$-cube~\cite{Barendregt:91}.
The lattice is spanned by three type system features, which are
explained briefly here and in greater detail in \cref{section:subtyping}.
The head of the lattice \TKP is the abstract type system developed by \citet{Kennedy:Pierce:07},
which employs all three features. The other lattice points are type system
fragments of \TKP that use different feature combinations.
Nodes are adorned with references to previous works (shown as citations) and to our
contributions (shown as forward references).

\colorlet{dregf}{Fuchsia}
\colorlet{regf}{Cyan}
\colorlet{dcff}{SpringGreen}
\colorlet{cffgnf}{Yellow}
\colorlet{cff}{YellowOrange}
\colorlet{ref}{OrangeRed}
\begin{figure}[ht]
  \crefname{proposition}{Prop.}{Props.}
  \crefname{corollary}{Cor.}{Cors.}
  \centering
  \vspace{-5ex}
  \begin{tabular}{cc}
    \hspace{3ex}
    \begin{minipage}{.45\textwidth}
      \vspace{4ex}
      \adjustbox{scale=.7,center}{
        \begin{tikzcd}[row sep=2.5em,column sep=4.5em,execute at begin picture={
          \tikzstyle{aa}=[->,>=latex,thick,text=black]
          \tikzstyle{cc}=[aa]
          \tikzstyle{xx}=[aa,sloped]
          \tikzstyle{mm}=[aa,sloped,pos=.45]
          \tikzstyle{tt}=[rectangle,fill=##1]
          \tikzstyle{dec}=[tt=LightGreen!50]
          \tikzstyle{undec}=[tt=ref]
          \tikzstyle{reference}=[rectangle callout,draw=cyan,dashed,callout absolute pointer={(##1.base east)},below right=.5ex and .5ex of ##1]
          },execute at end picture={
              \node[text width=13ex,reference=cxm] {\small\citet{Kennedy:Pierce:07}};
              \node[text width=7ex,reference=cx] {\small\citet{Grigore:2017}};
              \node[text width=9ex,reference=bot] {\small\cref{theorem:dregf}};
              \node[text width=9ex,reference=c] {\small\cref{theorem:dregf}};
              \node[text width=8ex,reference=m] {\small\cref{theorem:regf}};
              \node[text width=8ex,reference=cm] {\small\cref{theorem:regf}};
              \node[text width=9ex,reference=x] {\small\cref{theorem:dcff}};
              \node[text width=8ex,reference=xm] {\small\cref{theorem:cff:gnf}};
            }]
          & |[alias=xm,dec,fill=cffgnf]| \Txm & & |[alias=cxm,undec]| \TKP ⏎
          |[alias=x,dec,fill=dcff]| \Tx & & |[alias=cx,undec]| \Tcx ⏎
 ⏎
          & |[alias=m,dec,fill=regf]| \Tm & & |[alias=cm,dec,fill=regf]| \Tcm ⏎
          |[alias=bot,dec,fill=dregf]|\Tbot & & |[alias=c,dec,fill=dregf]|\Tc ⏎
          \ar[from=bot,to=c,cc]{}{Contravariant}
          \ar[from=bot,to=x,xx]{}[anchor=south]{Expansive}
          \ar[from=bot,to=m,mm]{}{\shortstack{\scriptsize Multiple ⏎ \scriptsize Instantiation}}
          \ar[from=c,to=cm,mm]{}{\shortstack{\scriptsize Multiple ⏎ \scriptsize Instantiation}}
          \ar[from=x,to=xm,mm]{}{\shortstack{\scriptsize Multiple ⏎ \scriptsize Instantiation}}
          \ar[from=m,to=cm,cc]{}{Contravariant}
          \ar[from=m,to=xm,xx]{}[anchor=south]{Expansive}
          \ar[from=xm,to=cxm,cc]{}{Contravariant}
          \ar[from=cx,to=cxm,mm]{}{\shortstack{\scriptsize Multiple ⏎ \scriptsize Instantiation}}
          \ar[from=cm,to=cxm,xx]{}[anchor=south]{Expansive}
          \ar[from=c,to=cx,xx,crossing over]{}[fill=white,anchor=south]{Expansive}
          \ar[from=x,to=cx,cc,crossing over]{}[fill=white]{Contravariant}
        \end{tikzcd}
      }
    \end{minipage} \hspace{6ex} &
    \begin{adjustbox}{margin=1ex,minipage=.355\linewidth,bgcolor=black!5}\tiny
		\begin{tabular}{l@{\hskip 6pt}l}
			\multicolumn{2}{p{.9\linewidth}}{\emph{Background color indicates the computing power of the type system:}} \\[4ex]
			\raisebox{-1pt}{\color{dregf}\splash} & Deterministic regular forests \\
			\raisebox{-1pt}{\color{regf}\splash} & Regular forests \\
			\raisebox{-1pt}{\color{dcff}\splash} & Deterministic context-free forests \\
			\raisebox{-1pt}{\color{cffgnf}\splash} & Context-free forests with grammars \\
			                                       & \enskip in Greibach normal form \\
			\raisebox{-1pt}{\color{ref}\splash} & Recursively-enumerable
		\end{tabular}
    \end{adjustbox}
  \end{tabular}
  \vspace{-5ex}
  \caption{
    Lattice describing the eight subtyping type system fragments employing
    contravariance ({\normalfont C}), expansive inheritance ({\normalfont X}), and multiple
  instantiation inheritance ({\normalfont M}) (defined in \protect\cref{section:subtyping})}
  \label{figure:lattice}
\end{figure}

The lattice is spanned by three orthogonal Boolean coordinates or features
denoted by~$\text{C}$,~$\text X$, and~$\text M$, representing the three
core features of \TKP: \emph{\textbf Contravariance}, \emph{e\textbf
Xpansively-recursive inheritance}, and \emph{\textbf Multiple instantiation
inheritance}. The most familiar feature is the kind of variance;
the horizontal direction in the figure
denotes the distinction between covariant type systems (positioned
on the plane on the left-hand side of the lattice) and contravariant
type systems (right-hand).

Precise definitions of the three features follow in \cref{section:subtyping}; here we provide a quick
review of the concepts and the intuition behind them:
\begin{description}
  \item[(C)] \emph{Contravariance}:
    Type arguments assigned to contravariant type parameters can be substituted with less derived types.
    For example, using the ``\emph{is a}'' relationship we can say that if \cc{x} in \cc{a<x>} is contravariant and \cc{b} \emph{is a} \cc{c}
    then \cc{a<c>} \emph{is an} \cc{a<b>}.
    \begin{code}[style=csharp]
interface a<in x> {} // type parameter $\color{comment}\cc{x}$ of class $\color{comment}\cc{a}$ is contravariant
interface b : c {}   // type $\color{comment}\cc{b}$ is a subtype of $\color{comment}\cc{c}$
a<b> x=(a<c>) null;  // type $\color{comment}\cc{a<c>}$ is a subtype of $\color{comment}\cc{a<b>}$
\end{code}
  \item[(X)] \emph{Expansively-recursive inheritance}:
    The expansion of types through inheritance (as defined by \citet{Viroli:00}) is unbounded.
    Intuitively, this expansion involves the \emph{curiously recurring template pattern} \cite{Coplien:1996},
    where a class appears in one of its supertypes' generic arguments.
    \begin{code}
interface a<x, y> : b<a<b<x>, y>> {} // class $\color{comment}\cc{a}\!$'s inheritance is expansively-recursive
\end{code}
  \item[(M)] \emph{Multiple instantiation inheritance}:
    Inheritance is non-deterministic \cite{Kennedy:Pierce:07}, i.e., type arguments of supertypes are not unique.
    \begin{code}
interface a : b<c>, b<d> {} // supertype $\color{comment}\cc{b}$ is instantiated twice, with arguments $\color{comment}\cc{c}$ and $\color{comment}\cc{d}$
\end{code}
\end{description}

A point in the lattice is denoted by the set of its features. Type
system~$\TKP$ is denoted by the triple~$\Tcxm$, and the notation~$\Tcx$ designates
a type system that allows contravariance and expansive inheritance, but not
multiple instantiation inheritance. The code examples above also demonstrate
two key features ingrained within \TKP and its fragments: ‟polyadic” parametric
polymorphism and multiple inheritance, allowing a class to have multiple type
parameters and supertypes, respectively.

As shown in \cref{figure:lattice}, the new results here, together with the proof that \TKP
is undecidable due to Kennedy and Pierce and the subsequent proof
by~\citet{Grigore:2017} implying that \Tcx is undecidable, completely characterize
the computing power of all eight points of the lattice.
This characterization is not in terms of the Chomsky hierarchy of classes of
formal languages, but rather in terms of a similar hierarchy of classes of
forests, also called formal tree languages (see, e.g.,~\cite{Comon:07}), as
depicted by \cref{figure:hierarchy}.
The classes appearing in the figure are defined in \cref{section:forests}.
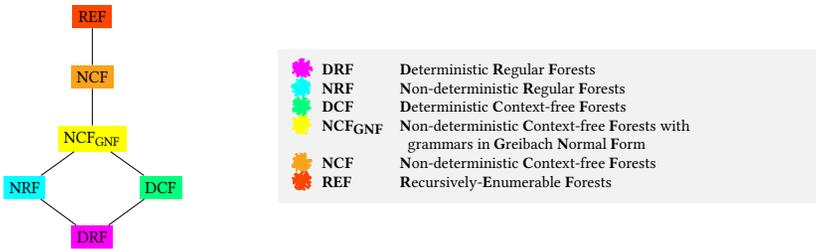
\begin{figure}[ht]
  \hspace{0ex}
  \begin{minipage}{.2\textwidth}\footnotesize
    \adjustbox{scale=.8,center}{
      \begin{tikzpicture}[node distance=5ex]
        \node[fill=dregf] (drf) {DRF};
      \node[above left=of drf,fill=regf] (nrf) {NRF};
        \node[above right=of drf,fill=dcff] (dcf) {DCF};
        \node[above=10ex of drf,fill=cffgnf] (cffgnf) {NCF\sub{GNF}};
        \node[above=of cffgnf,fill=cff] (cff) {NCF};
        \node[above=of cff,fill=ref] (re) {REF};
        \draw (drf)--(nrf);
        \draw (drf)--(dcf);
        \draw (nrf)--(cffgnf);
        \draw (dcf)--(cffgnf);
        \draw (cffgnf)--(cff);
        \draw (cff)--(re);
      \end{tikzpicture}
    }
  \end{minipage}
  \hspace{6ex}
  \begin{adjustbox}{margin=.5ex 1ex -7.5ex 1ex,minipage=.6\linewidth,bgcolor=black!5}\tiny
  	\begin{tabular}{l@{\hskip 6pt}l@{\hskip 6pt}l}
      \raisebox{-1pt}{\color{dregf}\splash} & \textbf{DRF} & \textbf Deterministic \textbf Regular \textbf Forests \\
      \raisebox{-1pt}{\color{regf}\splash} & \textbf{NRF} & \textbf Non-deterministic \textbf Regular \textbf Forests \\
      \raisebox{-1pt}{\color{dcff}\splash} & \textbf{DCF} & \textbf Deterministic \textbf Context-free \textbf Forests \\
      \raisebox{-1pt}{\color{cffgnf}\splash} & \multirow[t]{2}{*}{\textbf{NCF\sub{GNF}}} & \textbf Non-deterministic \textbf Context-free
        \textbf Forests with \\
        && \enskip grammars in \textbf Greibach \textbf Normal \textbf Form \\
      \raisebox{-1pt}{\color{cff}\splash} & \textbf{NCF} & \textbf Non-deterministic \textbf Context-free \textbf Forests \\
      \raisebox{-1pt}{\color{ref}\splash} & \textbf{REF} & \textbf Recursively-\textbf Enumerable \textbf Forests
    \end{tabular}
  \end{adjustbox}
  \caption{A Chomsky-like hierarchy of formal forest classes (defined in \cref{section:forests})}
  \label{figure:hierarchy}
\end{figure}

As shown in \cref{figure:hierarchy}, the hierarchy of tree
languages is a non-linear lattice, spanned by two orthogonal Boolean features
abbreviated by letters in the acronyms in the figure as follows:
\begin{itemize}
  \item \emph{Determinism} ($\text D<\text N$) distinguishing between
    \emph{\textbf Deterministic} forests and their super-set,
    \emph{\textbf Non-deterministic} forests;
  \item \emph{Method of specification} ($\text R<\text C$) distinguishing between \emph{\textbf
    Regular} forests and their super-set, \emph{\textbf Context-free} forests.
\end{itemize}
In \cref{figure:hierarchy}, we see that forest class DRF is the smallest class in the diagram.
Moving up the lattice, the two next classes, NRF and DCF, cannot be compared. Both are contained
in NCF, which is contained in REF, the class of all recursively enumerable
forests. We also make use of class~$\text{NCF\sub{GNF}}⊂\text{NCF}$, which
includes the non-deterministic context-free forests that can be specified by
a tree grammar in Greibach normal formal (GNF). Observe that class NCF\sub{GNF} also
contains DCF and NRF.

The theoretical contribution of our work is a
characterization of the computing power of the six decidable points in the
lattice of \cref{figure:lattice} in terms of the hierarchy
of~\cref{figure:hierarchy}.
These results are obtained by examining the \emph{subtyping machine} behind
subtyping queries---a kind of automata with finite control (dictated by the
nominal inheritance table) whose auxiliary storage is not a tape, stack, or
tree, but rather the contents of a specific subtyping query that changes along the
run of this automaton. These machines stand behind Kennedy and Pierce's study and Grigore's
emulation.

\subsection{Fluent APIs and Treetop}
\label{section:fluent}

Fluent API (application programming interface)~\cite{Fowler:2005} is a trendy
practice for the creation of software interfaces~\cite{Nakamaru:2020b}. In a
statically-typed, object-oriented setting, a fluent API method returns a
type that receives other API methods. In this way, fluent API methods are invoked
in a chain of consecutive calls:
\begin{equation}\label{equation:fluent:api}
  \cc{$σ₁$().$σ₂$().$⋯$.$σₙ$()}
\end{equation}
The main advantage of fluent API is its ability to enforce a protocol at
compile time. If an invocation of method~$σ_{i+1}$ after a call to~$σᵢ$ breaks
the protocol, then the type returned from method~$σᵢ$ must not
declare~$σ_{i+1}$, which would cause the expression to fail type checking.

A practical application of fluent API is embedding domain specific languages
(DSLs) in a host programming
language~\cite{Xu:2010,Gil:Levy:2016,Nakamaru:17,Gil:19,Yamazaki:2019,Nakamaru:2020}.
The syntax and semantics of a DSL are optimally designed for its domain; for
instance, \SQL is a DSL suited for composing database queries. If the protocol
of the fluent API is a grammar~$G$ describing the DSL, then the chain of method
calls of \cref{equation:fluent:api} type checks if and only
if~$w=σ₁σ₂⋯σₙ∈L(G)$, i.e., the expression encodes a legal DSL program. For
example, with an \SQL fluent API implemented in \Scala we can embed database
queries as \Scala expressions:
\[
\text{\inline[language=scala]{select("id").from(grades).where(_.grade>90)}.}
\]
The expression
\[
\text{\inline[language=scala]{select("name").where(_.length==4)},}
\]
however,
does not type check, as it represents an illegal \SQL query,
\[
\text{\inline[language=sql]{SELECT name WHERE length=4}.}
\]

The academic study of fluent APIs seeks new ways to encode increasingly complex
DSL classes. Results in the field are given in the form of \emph{generators}:
software artifacts that convert a formal grammar into a fluent API encoding its
language. Fling~\cite{Gil:19} and TypeLevelLR~\cite{Yamazaki:2019} are two
prominent fluent API generators that support all deterministic context-free
languages (DCFLs). \citet{Grigore:2017} implemented a generator that accepts
context-free grammars describing general context-free DSLs---a proper
super-set of DCFLs. Specifically, his implementation coded a CYK parser
in an intermediate simple, yet high level language, and then translated
the code in this language into a Turing machine, being emulated in a
subtyping query. Unfortunately, the resulting APIs are
unusable in practice, as they lead to extremely long compilation times and
crash the compiler.

Examining \cref{figure:lattice} we see that the \emph{contravariant} regime for
combining genericity with subtyping is essential for the undecidability
result of Kennedy and Pierce and its use by Grigore to encode Turing machines with \Java generics.
Contravariance is also essential to Grigore's implementation of a
fluent API generator for \emph{all} context-free languages.
Grigore's result is noteworthy since the best previous fluent API generators
\linebreak\cite{Gil:Levy:2016,Gil:19,Yamazaki:2019}, relying on invariance,
were only able to encode the deterministic subset of context-free
languages (languages with an LR grammar). As shown in \cref{figure:lattice}, our results
prove that the covariant regime, exemplified by the simple phrase ‟\emph{a bag
of apples is bag of fruit}”, is weaker than its contravariant counterpart.

Establishing that covariance is weaker than contravariance and noticing that it is
stronger than invariance, it is natural to ask whether the entire class of
context-free languages can be encoded in a type system with nominal subtyping
combined with only covariant genericity. Conversely, the question is motivated
by the argument that there is no theoretical difficulty in recognizing context-free
languages in a Turing-complete type system, but it might be a challenge
to recognize these in a decidable type system.

Unfortunately, the code produced by Grigore's fluent API generator collapses
under its own weight (measured in gigabytes) and crashes rugged modern
compilers. Another natural question to ask is: Is the (computationally)
weaker artillery of covariance also lighter in weight (in terms of code size)?

Treetop is a product of our theoretical study that answers both questions
affirmatively. It is a proof of concept (POC) generator of ‟covariant generic
subtyping” machines that are implemented as \NonCitingUse{CSharp} code. 
Given $G$, a CFG defining a formal language~$ℓ$ (such as the language
of palindromes in the example of \cref{lst:example:cff:program}), 
Treetop generates a library in \CSharp
that defines a fluent API of~$ℓ$. 
The library uses expansive inheritance (the X feature in the lattice of
\cref{figure:lattice}) and multiple instantiation inheritance (M in this
lattice), but not contravariance~(C).
Thus, Treetop not only supports general context-free grammars,
but it also does this in a decidable type system fragment.
We demonstrate in an experiment non-trivial subtyping machines
generated by Treetop that compile relatively fast, yet we
also identify at least one example which induces impractically long
compilation times using the CSC \CSharp compiler.

	\paragraph*{Outline}
	\cref{section:forests,section:subtyping} elaborate on
	Kennedy and Pierce's work on subtyping, to refresh readers' understanding of concepts such as formal forests and
	tree grammars, and set up our notations. Subtyping machines are then defined
	in \cref{section:machines} as computational devices that model nominal
	subtyping. \cref{section:expressiveness} derives the results of
	\cref{figure:lattice}. Treetop, our API generator for context-free protocols,
	is the focus of \cref{section:generator}. \cref{section:zz} discusses the
	results of our study.
	
	\paragraph*{Note on bold notation.}
	The bold version of a symbol denotes a (possibly empty) sequence of instances of the
	non-bold version of this symbol, e.g.,~$\vv{x}$ stands for~$x₁,x₂,…,xₖ$ for
	some~$k≥0$.
	This notation is abused when the intent is clear from the context, e.g., \[
		\vv{τ}[\vv{x}←\vv{t}]
	\] abbreviates \[
		τ₁[x₁← t₁,…,xₖ← tₖ],…,τₙ[x₁← t₁,…,xₖ← tₖ],
	\] where~$\vv{τ}$,~$\vv{x}$, and~$\vv{t}$ are the three
	sequences~$\vv{τ}=τ₁…,τₙ$,~$\vv{t}=t₁…,tₙ$, and~$\vv{x}=x₁,…,xₙ$.
	
	\section{Preliminaries: Formal Forests and Their Grammars}
	\label{section:forests}
	We assume that the reader is familiar with the classical theory of automata and formal
languages (e.g.,~\cite{Hopcroft:Motwani:Ullman:07}). Here we provide a quick
review of the standard generalization of the theory to \emph{forests}, also
called (formal) \emph{tree languages}. (For a more detailed review, see standard references such as
\cite{Comon:07,Gecseg:97,Osterholzer:18}.)

We use parenthetical notation for terms (trees), e.g.,
\begin{equation}\label{eq:example:term}
  t=σ₁(σ₂(σ₃()),σ₄())
\end{equation}
Rank~$n∈ℕ$ of symbol~$σ$, denoted~$\rank(σ)$, is the number of its children.
If~$n=0$, then~$σ$ is called a leaf, and if~$n=1$, then~$σ$ is monadic.
In both cases,~$σ$'s parentheses may be omitted, e.g.,~$t$ in
\cref{eq:example:term} becomes~$σ₁(σ₂σ₃,σ₄)$.
The height of a tree~$t$ is denoted~$\height(t)$, e.g., for~$t$ in
\cref{eq:example:term},~$\height(t)=3$. The set of trees over ranked
alphabet~$Σ$ (i.e., an alphabet of symbols with ranks) is denoted~$Σ^▵$,
generalizing the Kleene closure~$K^*$ for forests.

Consider the following definitions, adapted from \citet[\S{}2.1.1]{Comon:07}:
\begin{equation}\label{eq:example:regular:grammar}
\begin{aligned}
  \text{(a)~}\-Nat-&\produce \+z+& \text{(b)~}\-Nat-&\produce \+s+(\-Nat-) ⏎
  \text{(c)~}\-List-&\produce \+nil+& \text{(d)~}\-List-&\produce \+cons+(\-Nat-, \-List-)
\end{aligned}
\end{equation}
These are tree grammar productions deriving Peano numbers (productions (a,b))
and cons-lists of those integers (productions (c,d)).
An integer~$n∈ℕ$ is encoded by the monadic term~$\+s+ⁿ\+z+$.
Instead of strings, tree grammar productions derive terms:
The left-hand side of a production is a variable, and the right-hand
side is a term with which the variable can be substituted.
As in the string case, terminal (ground) terms are derived from variables by
repeatedly substituting the latter according to the given productions.
For example, we encode the list~$⟨{}2,1,0⟩$ using the term
\begin{equation}\label{eq:example:list}
  \+cons+(\+ssz+, \+cons+(\+sz+, \+cons+(\+z+, \+nil+)))
\end{equation}
This term can be derived from variable \-List- from
\cref{eq:example:regular:grammar}, as illustrated in \cref{figure:derivation:example}.
\cref{figure:derivation:example} uses arrows to depict variable substitution; an arrow label
references the production from \cref{eq:example:regular:grammar} used to apply the derivation.
\begin{figure}
  \def\-#1-{{\textit{\textsf{#1}}}}
  \begin{tikzpicture}
    \tikzstyle{e}=[->,draw=black]
    \tikzstyle{l}=[left,midway,font=\tiny,scale=.75,color=black!60,yshift=-.5pt]
    \tikzstyle{l2}=[l,below left=-3pt,yshift=1pt]

    \node (l0) {$\-\underline{List}-$};
    \node (l1) [below=1ex of l0] {$\+cons+(\-\underline{Nat}-, \-\underline{List}-)$};
    \path[e] ($(l0.south)+(0pt,4pt)$)--($(l1.north)+(0pt,-2pt)$) node[l] {$(d)$};
    \node (l2) [below right=1ex and 6ex of l1.south] {$\+cons+(\-\underline{Nat}-, \-\underline{List}-)$};
    \path[e] ($(l1.south east)+(-15pt,4.5pt)$)--($(l2.north west)+(13pt,-3pt)$) node[l2] {$(d)$};
    \node (l3) [below right=1ex and 6ex of l2.south] {$\+cons+(\-\underline{Nat}-, \-\underline{List}-)$};
    \path[e] ($(l2.south east)+(-15pt,4.5pt)$)--($(l3.north west)+(13pt,-3pt)$) node[l2] {$(d)$};
    \node (l4) [below right=1.75pt and 6ex of l3.south] {$\+nil+$};
    \path[e] ($(l3.south east)+(-15pt,4.5pt)$)--($(l4.north)+(-3pt,-2pt)$) node[l2] {$(c)$};

    \node (n00) [below=1ex and 3ex of l1.south] {$\+s+(\-\underline{Nat}-)$};
    \path[e] ($(l1.south)+(0pt,4.5pt)$)--($(n00.north)+(0pt,-1pt)$) node[l] {$(b)$};
    \node (n01) [below right=1ex and .5ex of n00.south west,anchor=north west] {$\+s+(\-\underline{Nat}-)$};
    \path[e] ($(n00.south)+(.5ex,4.5pt)$)--($(n01.north)+(0pt,-1pt)$) node[l] {$(b)$};
    \node (n02) [below right=1ex and-3.5pt of n01.south] {$\+z+$};
    \path[e] ($(n01.south)+(.5ex,4.5pt)$)--($(n02.north)+(0pt,-1pt)$) node[l] {$(a)$};
    \node (n10) [below=1ex and 3ex of l2.south] {$\+s+(\-\underline{Nat}-)$};
    \path[e] ($(l2.south)+(0pt,4.5pt)$)--($(n10.north)+(0pt,-1pt)$) node[l] {$(b)$};
    \node (n11) [below right=1ex and-3.5pt of n10.south] {$\+z+$};
    \path[e] ($(n10.south)+(.5ex,4.5pt)$)--($(n11.north)+(0pt,-1pt)$) node[l] {$(a)$};
    \node (n20) [below=1ex and 3ex of l3.south] {$\+z+$};
    \path[e] ($(l3.south)+(0pt,4.5pt)$)--($(n20.north)+(0pt,-1pt)$) node[l] {$(a)$};
  \end{tikzpicture}
  \caption{The derivation of the term in \protect\cref{eq:example:list}, encoding
  	the list~$⟨2,~1, 0⟩$, from variable \textit{\textsf{List}} according to the
  	productions of \protect\cref{eq:example:regular:grammar}}
  \label{figure:derivation:example}
\end{figure}

For readers familiar with functional programming languages as
\ML and \Haskell, it is intuitive to see the grammar in \cref{eq:example:regular:grammar}
as a set of datatype definitions, as shown in \cref{lst:sml:regular:grammar}.
The grammar variables \-Nat- and \-List- are datatypes whose value
constructors are the grammar terminals \+z+, \+s+, \+nil+, and \+cons+.
For each grammar production $v \produce \sigma$, we define $\sigma$ as a
value constructor of $v$;
if $\sigma$ has children, they are used as constructor parameters.
Type checking in the resulting \ML program coincides with derivation of
Peano numbers.
In particular, an \ML expression belongs to type \cc{List} if and only
if it can be derived from variable \-List-, e.g., the expression in line 3
of \cref{lst:sml:regular:grammar} encodes the tree in \cref{eq:example:list}
and compiles as expected.

\begin{code}[language=ml,caption={The grammar of \cref{eq:example:regular:grammar} encoded as~\ML datatypes},label={lst:sml:regular:grammar}]
datatype Nat = z | s of Nat;
datatype List = nil | cons of Nat * List;
cons(s(s(z)), cons(s(z), cons(z, nil))) : List;
\end{code}

Formally, a forest grammar~$G=⟨Σ, V, v₀, R⟩$ employs a finite ranked alphabet~$Σ$, a
finite ranked set of variables~$V$, a designated initial leaf~$v₀∈V$, and a
finite set of productions~$R$. To generate a terminal tree, grammar~$G$ repeatedly
applies~$R$ productions to the initial leaf~$v₀$, as explained below.
Intermediate trees, members of~$(Σ∪V)^▵$, are called tree forms, similar to
sentential forms of string grammars.

Let~$X$ be a finite set of \emph{parameter} leaf symbols, disjoint to sets~$Σ$
and~$V$. Terms over~$Σ∪V$, denoted~$t$, are ground, while terms over~$Σ∪V∪X$, denoted~$τ$, are unground. Unground terms are used as tree
patterns, where the~$X$ parameters can be substituted with terms (usually
ground). The application of substitution~$s=❴\vv{x}←\vv{t} ❵=❴x₁← t₁,…,xₖ←
tₖ ❵$ to unground term~$τ$, denoted~$τ[s]$, yields a term where every
occurrence of parameter~$xᵢ$ is substituted with term~$tᵢ$.
(A substitution's curly brackets may be omitted if written inside square brackets.)
A term~$τ$ matches term~$τ'$ if and only if there exists a substitution~$s$ such
that~$τ'[s]=τ$, denoted~$τ⊑ₛτ'$.
Let~$\params(τ)⊆X$ denote the set of parameters appearing in term~$τ$.

A production~$ρ=τ \produce τ'∈R$ describes a tree transformation.
If tree~$t$ matches~$τ$ with substitution~$s$,~$t⊑ₛτ$, the
application of~$ρ$ to~$t$, denoted~$t[ρ]$, yields~$τ'$ substituted by~$s$, \[
  t⊑ₛτ~⇒~t[τ\produceτ']=τ'[s].
\] We require that every parameter~$x∈X$ on the right-hand side of production~$ρ∈R$
be found on the left-hand side, so that all the applications of~$ρ$ to ground trees are ground.

Note the distinction between variables and parameters.
Variables are found in tree forms and constitute the nodes remaining to be derived.
Parameters, on the other hand, appear only in productions and are used as
templates in tree patterns.

We say that tree form $t$ derives tree form $t'$, denoted $t \rightarrow t'$ (or $t \rightarrow_G t'$),
if $t$ can be transformed into $t'$ by a single application of an $R$ production: \[
	t \rightarrow t' ~\Leftrightarrow~ \exists \rho \in R, ~ t[\rho] = t'.
\]
The tree language (forest) of grammar $G$, $L(G)$, is the set of terminal trees that can be
derived from the initial variable $v_0$ in any number of steps: \[
	L(G) = \{ t \in \Sigma^\triangle ~|~ v_0 \rightarrow^*_G t \}.
\]

Each class (family) of tree grammars is defined by a set of restrictions imposed on the
shape of productions.
In regular tree grammars that define regular forests (NRF), productions are
of the form~$v \produceσ(\vv{v})$, where~$σ∈Σ$ and~$v,\vv{v}∈V$; a more relaxed definition simply restricts
variables to be leaves.
Observe that the productions of \cref{eq:example:regular:grammar} belong to
a regular grammar.

Productions of context-free tree grammars (CFTG), yielding context-free forests (NCF),
are of the form~$v(\vv{x}) \produceτ$ (the parameters in~$\vv{x}$ are unique).
For example, the following productions belong to a CFTG:
\begin{equation}\label{eq:example:derivation:cf}
\text{(a)~}v₀\produce v₁(\+leaf+) \quad \text{(b)~}v₁(x)\produce v₁(\+node+(x, x)) \quad \text{(c)~}v₁(x)\produce x
\end{equation}
These productions derive complete binary trees over~$❴\+leaf+, \+node+❵$.
The binary trees are derived recursively.
The child of variable~$v₁$, captured by parameter~$x$, describes a tree
of height~$n≥1$; production (b) is then used to replace~$x$ with~$\+node+(x, x)$,
the complete binary tree of height~$n+1$.
The recursion terminates by production (c), which simply returns the constructed tree.

To encode context-free tree grammars in \ML, we use polymorphic datatypes,
i.e., datatypes that employ type parameters.
Consider, for example, the following grammar:
\begin{equation}\label{eq:context:free:grammar:ml}
\begin{aligned}
	v_0(x) &\produce b(x) & v_1 &\produce E \\
	v_2(x) &\produce m(x) & v_2(x) &\produce a(v_2(v_0(x)))
\end{aligned}
\end{equation}
This grammar defines the context-free language (forest)~$a^n m b^n E$ ($n \ge 0$).
The \ML program encoding the grammar is shown in \cref{lst:sml:context:free:grammar}.
Variables $v_0$ and $v_2$, which use parameters, are encoded by
polymorphic datatypes that define the parameterized type constructors
\cc{v0} and \cc{v2}.
Following the datatype definitions are three expressions:
While the expression $a^3 m b^3 E$ (line 4) compiles,
the expressions $a^1 m b^2 E$ and $a^2 m b^1 E$ (lines 5--6)
do not, as expected.

\begin{code}[float=ht,language=ml,caption={The grammar of \cref{eq:context:free:grammar:ml} encoded as polymorphic~\ML datatypes},label={lst:sml:context:free:grammar}]
datatype 'x v0 = b of 'x;
datatype v1 = E;
datatype 'x v2 = m of 'x | a of 'x v0 v2;
a(a(a(m(b(b(b(E))))))) : v1 v2;
a(m(b(b(E)))) : v1 v2; (* Does not compile *)
a(a(m(b(E)))) : v1 v2; (* Does not compile *)
\end{code}

Not all context-free grammars, however, can be encoded in \ML using this method.
A production deriving a grammar variable to one of its parameters, e.g.,
production (c) in \cref{eq:example:derivation:cf}, results in a nonsensical
\ML declaration \inline[language=ml]{datatype 'x v='x}.
In addition, a grammar terminal that appears in more than one production
introduces overloaded value constructors, that not supported by \ML.

In CFTG in Greibach normal form (GNF), the roots of the right-hand side of productions are
terminal,~$v(\vv{x}) \produceσ(\vv{τ})$~\cite{Guessarian:83}.
Observe that none of the productions in \cref{eq:example:derivation:cf} are in GNF:
The roots of the right-hand sides are variables in productions (a,b) and a parameter
in production (c), both forbidden in GNF.
In contrast, the following productions do belong to a CFTG in GNF:
\begin{equation}\label{eq:example:cff:grammar}
\begin{array}{ccc}
	\text{(a)~}v₀x \produce av₀ax & \text{(b)~}v₀x \produce aax & \text{(c)~}v₀x \produce ax ⏎
	\text{(d)~}v₀x \produce bv₀bx & \text{(e)~}v₀x \produce bbx & \text{(f)~}v₀x \produce bx
\end{array}
\end{equation}
Variable $v_0$ derives palindromes over $\{a,b\}$. For instance, the palindrome
$abbabba$ is derived from the tree form $v_0 E$ ($E$ is a dummy leaf) as follows: \[
	v_0E \xrightarrow{\text{(a)}} av_0aE \xrightarrow{\text{(d)}} abv_0baE \xrightarrow{\text{(d)}} abbv_0bbaE \xrightarrow{\text{(c)}} abbabbaE
\] (The production from \cref{eq:example:cff:grammar} used in each derivation is written above the arrow.)
The productions in \cref{eq:example:cff:grammar} are context-free, as $v_0$ employs a parameter,
and they are in GNF, as the root of the right-hand side of each
production is a terminal (either $a$ or $b$).

Notice that all regular grammars are context-free and in GNF.

In deterministic context-free tree grammars in GNF,
including deterministic regular tree grammars, if both~$ρ₁=v \produceσ(\vv{τ})$
and~$ρ₂=v \produceσ(\vv{τ}')$ are productions in~$R$, then~$\vv{τ}=\vv{τ}'$.
Deterministic regular grammars define deterministic regular forests (DRF), a proper
subset of regular forests, also called path-closed forests~\cite{Comon:07}.
Likewise, deterministic context-free tree grammars define deterministic
context-free forests (DCF).
In contrast to context-free string grammars, not every context-free tree grammar
has a GNF---but every deterministic tree grammar does~\cite{Guessarian:83}.
The class of context-free forests that can be described by a grammar in GNF
is denoted as NCF\sub{GNF}.

In this paper, we use an extended version of context-free tree grammars where
the initial variable~$v₀$ is replaced by an initial tree.
\begin{definition}[ECFTG]\label{definition:ecftg}
  Extended context-free tree grammars (ECFTGs) generalize standard CFTGs
  by allowing initialization by any tree~$t₀$. Otherwise, the semantics are the same.
\end{definition}
A standard CFTG is extended by definition, with~$t₀=v₀$.
We show that extended grammars are as expressive as standard grammars:
\begin{lemma}\label{lemma:ecftg}
  For every ECFTG (in GNF), there exists a CFTG (in GNF) that defines the same forest.
\end{lemma}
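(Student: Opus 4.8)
The plan is to introduce a fresh leaf variable $v_0'$ whose productions replay, in one step, whatever the ECFTG's initial tree $t_0$ does, and to check that this single extra layer is transparent.

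Let $G=\langle\Sigma,V,t_0,R\rangle$ be an ECFTG, and let $v_0'\notin\Sigma\cup V$ be a new leaf symbol. If we are content with an ordinary CFTG (the first claim), I take $G'=\langle\Sigma,V\cup\{v_0'\},v_0',R\cup\{v_0'\produce t_0\}\rangle$: the added production has a leaf on its left and the ground term $t_0\in(\Sigma\cup V)^\triangle$ on its right, so $G'$ is a CFTG. If moreover $G$ is in GNF and we want $G'$ in GNF, there are two cases. When $t_0$ has a terminal root, $v_0'\produce t_0$ is itself in GNF and the construction above already works. When $t_0$ has a variable root, write $t_0=v(\vv{s})$ with $\vv{s}\in(\Sigma\cup V)^\triangle$, let $v(\vv{x})$ be the left-hand-side shape of $v$, and replace $v_0'\produce t_0$ by the set of productions $\{\,v_0'\produce\sigma(\vv{\tau})[\vv{x}\leftarrow\vv{s}]\ :\ (v(\vv{x})\produce\sigma(\vv{\tau}))\in R\,\}$. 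Because $G$ is in GNF each such $\sigma$ is terminal, and after the substitution the right-hand side is ground (its only parameters, those in $\vv{x}$, get replaced by the ground subtrees $\vv{s}$); hence every new production is in GNF, the old productions of $R$ are untouched, and $G'$ is in GNF.

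Correctness of the plain construction is immediate: $v_0'$ is fresh and its only production rewrites it to $t_0$, so every derivation $v_0'\rightarrow^*_{G'}t$ to a terminal factors through $v_0'\rightarrow t_0\rightarrow^*_{G}t$, and conversely prepending $v_0'\rightarrow t_0$ to any $G$-derivation of $t$ from $t_0$ yields a $G'$-derivation; thus $L(G')=L(G)$. For the GNF unfolding, one inclusion is equally direct: a $G'$-derivation of a terminal must open with some step $v_0'\rightarrow\sigma(\vv{\tau})[\vv{x}\leftarrow\vv{s}]$ arising from $v(\vv{x})\produce\sigma(\vv{\tau})\in R$, and $t_0=v(\vv{s})\rightarrow_{G}\sigma(\vv{\tau})[\vv{x}\leftarrow\vv{s}]$ is precisely the $G$-step that applies that same production at the root, after which the remaining steps (all in $R\subseteq R'$) transfer verbatim, so $t\in L(G)$. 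The reverse inclusion is where the work is: I would show that any $G$-derivation of a terminal tree from $t_0=v(\vv{s})$ can be reordered so that its first step rewrites the root occurrence of $v$; that first step is then $v(\vv{s})\rightarrow_{G}\sigma(\vv{\tau})[\vv{x}\leftarrow\vv{s}]$ for some $v(\vv{x})\produce\sigma(\vv{\tau})\in R$, which the new production $v_0'\produce\sigma(\vv{\tau})[\vv{x}\leftarrow\vv{s}]$ reproduces, and the rest of the derivation carries over, giving $t\in L(G')$.

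The only genuine obstacle is this reordering, a standard outermost-normalization property of context-free tree grammar derivations; here it is especially mild. The root of $t_0$ is uniquely the topmost position, so before it is rewritten every step acts strictly below it, inside the argument subtrees. Such a step commutes past the root step: rewriting the root $v(\vv{u})$ by $v(\vv{x})\produce\sigma(\vv{\tau})$ gives $\sigma(\vv{\tau})[\vv{x}\leftarrow\vv{u}]$, in which each $u_j$ appears once per occurrence of $x_j$ in $\vv{\tau}$ (possibly none), so a rewrite previously performed on $u_j$ may instead be performed, after the root step, at each of those occurrences, yielding the same tree. Iterating this exchange floats the first root-rewriting step to the front of the derivation, at the harmless cost of duplicating or deleting some later steps. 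This gives the rearrangement, and hence the lemma.
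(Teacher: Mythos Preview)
Your proof is correct and follows essentially the same route as the paper's: introduce a fresh leaf variable whose productions either produce $t_0$ directly (plain CFTG case, or GNF case with terminal root) or unfold one root-rewriting step of $t_0$ (GNF case with variable root). Your correctness argument is in fact more careful than the paper's, which simply observes $v_0'[v_0'\produce\tau[\vv{x}\leftarrow\vv{s}]]=t_0[v(\vv{x})\produce\tau]$ and concludes equality of forests without explicitly justifying the outermost-reordering step you spell out.
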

\begin{proof}
  Let $G=⟨Σ, V, t₀, R⟩$ be an ECFTG.
  If the initial tree~$t₀$ is a variable,~$t₀=v∈V$ and we are done.
  Otherwise, let CFTG~$G'=⟨Σ, V', v₀, R'⟩$,
  introducing the initial variable~$v₀∉V$,~$V'=V∪❴v₀ ❵$, and
  employing the original productions,~$R⊆R'$.
  If the initial tree is terminal,~$t₀=σ₀∈Σ$, simply add the rule~$v₀ \produce σ₀$ to~$R'$.
  Else~$t₀$ is not a leaf: Let~$t₀=ξ₀(\vv{t₀})$, where the root~$ξ₀∈Σ∪V$
  is of rank one or above.
  For every production~$ξ₀(\vv{x}) \produceτ∈R$ if~$ξ₀$ is a variable, or production~$ξ₀(\vv{x}) \produceτ$,~$τ=ξ₀(\vv{x})$ if~$ξ₀$ is terminal,
  add production~$v₀ \produceτ[\vv{x}←\vv{t₀}]$ to~$R'$.

  Derivations of the initial variable~$v₀$ expand it into the initial tree~$t₀$ and
  apply an original~$R$ derivation in a single step.
  After~$v₀$ is derived, the resulting tree form is identical to the one derived from~$t₀$, \[
    v₀[v₀ \produceτ[\vv{x}←\vv{t₀}]]=t₀[ξ₀(\vv{x}) \produceτ],
  \] thus they derive the same forest.

  This construction preserves GNF.
  Newly introduced rules derive tree~$τ[\vv{x}←\vv{t₀}]$,
  whose root is terminal given that either \1~the root of~$t₀$ is terminal,
  or otherwise \2~$τ$ is the right-hand side of a rule in~$R$, starting with a terminal
  given that the original grammar is in GNF.
\end{proof}

	\section{Preliminaries: Nominal Subtyping with Variance}
	\label{section:subtyping}
	This work focuses on \citets{Kennedy:Pierce:07} abstract type system of \emph{nominal subtyping with variance}.
This type system supports nominal subtyping with declaration-site variance
as found in, e.g., \CSharp and \Scala.
\cref{figure:type:system} describes the abstract syntax (sub-figures (a,b))
and semantics (sub-figures (c,d,e)) of the type system, both adopted from
\citet{Kennedy:Pierce:07}.
Note that we reuse notations introduced in \cref{section:forests}
in the context of formal forests, foreshadowing the connection
between the two formalisms.

\begin{figure}[t]
  \small
  \begin{adjustbox}{center,scale=.8}
    \setlength{\tabcolsep}{1ex}
    \begin{tabular}{|c|c|}
      \hline
      \renewcommand{\arraystretch}{1.1}
      \begin{tabular}{ll}
        $P \produceΔ~q$ & ⏎
        $Δ\produceδ₁~…~δₙ$ & \constraint{$n≥0$} ⏎
        $δ\produce γ(\vv{◇ x}):\vv{τ}$ &
          \!\!\!\!\constraint{$\begin{cases}\params(\vv{τ})⊆\vv{x} ⏎\vv{τ}∩\vv{x}=∅\end{cases}$} ⏎
        $\vv{x} \produce◇ x₁,…,◇ xₖ$ &
          \constraint{$k≥0$} ⏎
        $◇\produce∘~|~+~|~-$ ⏎
        $\vv{τ}\produceτ₁,…,τₖ$ &
          \constraint{$k≥0$} ⏎
        $τ\produce γ(\vv{τ})~|~x$ ⏎
        $q \produce t_⊥\subtype t_⊤$ ⏎
        $t\produceγ(\vv{t})$ ⏎
        $\vv{t} \produce t₁,…,tₖ$ &
          \constraint{$k≥0$} ⏎
      \end{tabular} &
      \renewcommand{\arraystretch}{1.1}
      \begin{tabular}{ll}
        $P$ & program ⏎
        $δ$ & class declaration ⏎
        $Δ$ & class table; a set of class declarations ⏎
        $γ$ & class name drawn from finite set~$Γ$ ⏎
        $τ$ & unground type ⏎
        $\vv{τ}$ & sequence of unground types ⏎
        $x$ & type parameter drawn from finite set~$X$ disjoint from~$Γ$ ⏎
        $◇x$ & type parameter with variance ⏎
        $\vv{◇x}$ & sequence of type parameters with variances ⏎
        $∘$ & invariance ⏎
        $+$ & covariance ⏎
        $-$ & contravariance ⏎
        $q$ & subtyping query ⏎
        $t$ & ground type ⏎
        $\vv{t}$ & sequence of ground types ⏎
      \end{tabular} ⏎
      \hline
      (a) \textbf{Syntax} & (b) \textbf{Nomenclature} ⏎
      \hline\hline
      \renewcommand{\arraystretch}{3}
      \begin{tabular}{c}
        $\infer{t \inheritsτ[s]}{t⊑ₛ γ(\vv{x}) & γ(\vv{x}):τ}$ ⏎
      \end{tabular} &
      \renewcommand{\arraystretch}{3}
      \begin{tabular}{c}
        $\rulename{Var}\infer{γ(\vv{t}) \subtype γ(\vv{t}')}{∀ i, tᵢ \subtype_{\vof[i]{γ}} t'ᵢ}$ \quad
        $\infer{t \subtype₊ t'}{t \subtype t'}$ \quad
        $\infer{t \subtype_∘ t}{}$ \quad
        $\infer{t \subtype₋ t'}{t \suptype t'}$ ⏎
        $\rulename{Super}\infer{t \subtype t”}{t \inherits t' & t' \subtype t”}$
      \end{tabular}
 ⏎
      \hline
      (c) \textbf{Inheritance} & (d) \textbf{Subtyping} ⏎
      \hline\hline
      \multicolumn{2}{|c|}{
        \renewcommand{\arraystretch}{2}
        \begin{tabular}{c}
          \rulenames{Acyclic ⏎ Inheritance}
          $γ(\vv{τ}) \inherits⁺ γ'(\vv{τ}')~⇒~γ≠γ'$ ⏎
          \raisebox{2ex}{
            \rulenames{Valid ⏎ Variance}
            $¬◇=\begin{cases}
- &◇=+⏎
                  ∘ &◇=∘ ⏎
            +&◇=-
            \end{cases}$} \quad
          $\infer{\vv{◇ x}⊢xᵢ}{◇ᵢ∈❴∘,+❵}$ \quad
          $\infer{\vv{◇ x}⊢γ(\vv{t})}{∀ i, \begin{cases}
            \vof[i]{γ}∈❴∘,+❵⇒\vv{◇ x}⊢τᵢ ⏎
            \vof[i]{γ}∈❴∘,- ❵⇒¬\vv{◇ x}⊢τᵢ
            \end{cases}}$ \quad
          $\infer{γ(\vv{◇ x}):τ~\checkmark}{\vv{◇ x}⊢τ}$
        \end{tabular}
      } ⏎
      \hline
      \multicolumn{2}{|c|}{(e) \textbf{Well-Formedness}} ⏎
      \hline
    \end{tabular}
  \end{adjustbox}
  \caption{Kennedy and Pierce's type system (without mixin inheritance)}
  \label{figure:type:system}
\end{figure}

The full syntax of the type system is given in
\aref[(a)]{figure:type:system} and a nomenclature is available in
\aref[(b)]{figure:type:system}.
A program consists of a class table~$Δ$ and a subtyping query~$q$.
A class table over a finite set of class names~$Γ$ is a finite set of
inheritance rules $\boldsymbol\delta$. An inheritance rule~$\delta=γ(\vv{◇ x}):\vv{τ}$ declares a
parameterized class (polytype) named~$γ$ and its inheritance from (super-)
types~$\vv{τ}$; to highlight that a sequence of supertypes~$\vv{τ}$ is empty,
we write~$∅$ after the colon. We require supertypes to reference only class
parameters,~$\params(\vv{τ})⊆\vv{x}$. Although Kennedy and Pierce included
mixin inheritance in their type system, allowing a class to inherit one of
its parameters, we chose to exclude it; accordingly,~$\vv{τ}∩\vv{x}=∅$ (nevertheless, the implications of
mixin inheritance are discussed in \cref{section:discussion}). Variance is denoted by
a diamond~$◇$, which can be covariant,~$◇=+$, contravariant,~$◇=-$, or
invariant,~$◇=∘$. The variance of the \nth{i} parameter of class~$γ$ is
denoted~$\vof[i]{γ}$, and the vector describing all~$γ$ parameter variances is
denoted~$\vof{γ}$. A subtyping query~$t_⊥\subtype t_⊤$ (also
written~$t_⊥\subtype_Δ t_⊤$), for subtype~$t_⊥$ and supertype~$t_⊤$, both
ground, type checks if and only if~$t_⊥$ is a subtype of~$t_⊤$ (considering
class table~$Δ$).

\paragraph*{Example}
For example, we can write the \CSharp program of \cref{lst:example:cff:program}
using abstract syntax as follows (recall that we use an abbreviated notation of monadic terms):
\begin{equation}\label{eq:example:program}
\begin{aligned}
\Delta&\begin{cases}
	&a(+x): \emptyset \\
	&b(+x): \emptyset \\
	&E: \emptyset \\
	&v_0(\circ x): a v_0 a x, a a x, a x, b v_0 b x, b b x, b x \\
\end{cases} \\
q&\begin{cases}
	&\:v_0 E \subtype a b b a b b a E
\end{cases}
\end{aligned}
\end{equation}
Class table $\Delta$ has four classes, $a$, $b$, $E$, and $v_0$.
In \CSharp, type parameters are invariant by default, covariance is denoted by
the keyword \kk{out}, and contravariance by the keyword \kk{in}.
Therefore, the type parameters of $a$ and $b$ are covariant ($\vof[1]{a}=\vof[1]{b}=+$),
and the type parameter of $v_0$ is invariant ($\vof[1]{v_0}=\circ$).
Query $q$ is the subtyping query invoked by the variable assignment in line 3 of \cref{lst:example:cff:program}.

According to the type inheritance relation $t \inherits t'$, defined in \aref[(c)]{figure:type:system},
type $t$ inherits type $t'$ if $t'$ matches one of the supertypes of $t$'s root.
Unlike Kennedy and Pierce, we distinguish between the type inheritance relation and
class inheritance,~$γ:γ'$, found in class declarations.
We also use the transitive closures of these:
\begin{equation}\label{eq:transitive:inheritance}
\begin{aligned}
  \infer{γ(\vv{x}):^* γ(\vv{x})}{} \qquad
  \infer{γ(\vv{x}):^*τ[\vv{x}'←\vv{τ}]}
    {γ(\vv{x}): γ'(\vv{τ}) & γ'(\vv{x}'):^*τ}
\end{aligned}
\end{equation}
The same rules apply for transitive type inheritance. The asterisk is replaced
with a plus~$+$ in the non-reflexive variants~$:⁺$ and~$\inherits⁺$.

The rules of subtyping are detailed in \aref[(d)]{figure:type:system}.
The \textsc{Super} rule establishes that a class is a subtype of its supertypes.
The \textsc{Var} rule handles the type parameters, depending on their variance:
Statement~$γ(t) \subtypeγ(t')$ implies~$t \subtype t'$ if~$γ$'s type parameter
is covariant,~$t=t'$ if invariant, and~$t' \subtype t$ if contravariant.
If~$γ$ accepts multiple type parameters, then each may have its own variance.
We also define transitive subtyping:
\begin{equation}
\begin{aligned}
  \infer{t \subtype^* t}{\relax} \qquad
  \infer{t \subtype^* t”}{t \subtype t' & t' \subtype^* t”}
\end{aligned}
\end{equation}
For example, we can use the \textsc{Super} and \textsc{Var} rules to prove the
subtyping query of \cref{eq:example:program}:
\begin{equation*}
\newcommand{\alert}[1]{#1}
\begin{tabular}{rll}
	& $\alert{v_0} E \subtype a b b a b b a E$ & \\
	$\Leftarrow$\quad & $\alert{a} v_0 a E \subtype \alert{a} b b a b b a E$ & $(\textsc{Super},~v_0(\circ x) : a v_0 a x)$ \\
	$\Leftarrow$\quad & $\alert{v_0} a E \subtype b b a b b a E$ & $(\textsc{Var},~\vof[1]{a}=+)$ \\
	$\Leftarrow$\quad & $\alert{b} v_0 b a E \subtype \alert{b} b a b b a E$ & $(\textsc{Super},~v_0(\circ x) : b v_0 b x)$ \\
	$\Leftarrow$\quad & $\alert{v_0} b a E \subtype b a b b a E$ & $(\textsc{Var},~\vof[1]{b}=+)$ \\
	$\Leftarrow$\quad & $\alert v_0 b b a E \subtype \alert{b} a b b a E$ & $(\textsc{Super},~v_0(\circ x) : b v_0 b x)$ \\
	$\Leftarrow$\quad & $\alert{v_0} b b a E \subtype a b b a E$ & $(\textsc{Var},~\vof[1]{b}=+)$ \\
	$\Leftarrow$\quad & $a b b a E \subtype a b b a E$ \checkmark & $(\textsc{Super},~v_0(\circ x) : a x)$
\end{tabular}
\end{equation*}
Note that this subtyping query is recursive because the type parameters of classes $a$ and $b$
are covariant. If the parameters were contravariant, then the query would reverse ($\suptype$)
after applying the \textsc{Var} rule.

A class table is well-formed if it complies with the rules in
\aref[(e)]{figure:type:system}. First, inheritance must be acyclic, i.e., a
class may not (transitively) inherit itself. Thus, the class inheritance closure~$:^*$ is
finite. Kennedy and Pierce also explain that transitive subtyping and semantic
soundness rely on certain restrictions to class inheritance,~$γ(\vv{◇ x}):τ$.

To demonstrate the problem, consider the following \emph{invalid} class declarations:
\begin{equation*}
\begin{aligned}
	& a(-x) : \emptyset \\
	& b(+y) : a(y)
\end{aligned}
\end{equation*}
Type parameter $y$ of class $b$ is covariant, but it also substitutes for $a$'s
contravariant parameter~$x$.
This means that $y$ can be used both covariantly and contravariantly.
For instance, if $c \inherits d$, then
\begin{equation*}
	b(c) \subtype b(d)
\end{equation*}
but also
\begin{equation*}
	b(d) \subtype a(c)
\end{equation*}
One way to solve this issue is by changing $b$'s supertype to $a(a(y))$:
\begin{equation*}
	b(+y) : a(a(y))
\end{equation*}
Once done, $y$ behaves covariantly, as expected:
\begin{equation*}
\begin{tabular}{rll}
	& $bc \subtype aad$ & \\
	$\Leftarrow$\quad & $aac \subtype aad$ & $(\textsc{Super},~b(+y) : aay)$ \\
	$\Leftarrow$\quad & $ac \suptype ad$ & $(\textsc{Var},~\vof[1]{a}=-)$ \\
	$\Leftarrow$\quad & $c \subtype d$ \checkmark & $(\textsc{Var},~\vof[1]{a}=-)$ \\
\end{tabular}
\end{equation*}

Intuitively, one may think of a covariant type parameter as defining a
‟positive position”, and a contravariant one a ‟negative position”. A covariant
type parameter~$x∈\vv{x}$ can be placed in supertype~$τ$ only in positive
positions, and if it is contravariant, only in negative positions. A
sub-tree~$τ'$ of~$τ$ in a negative position reverses its polarity: Positive
positions become negative, and negative positions become positive. The initial
position of~$τ$ is positive. In this context, an invariant position is both
positive and negative, and so are invariant type parameters. Thus, invariant
type parameters can appear in any position, while a type in an invariant
position contains only invariant type parameters, because its positions are both
positive and negative.

For example, consider the following class table:
\begin{equation}\label{eq:example:class:table}
  \begin{aligned}
    a(+x):∅&\qquad\qquad d(+x₁): adax₁ ⏎
    b(-x):∅&\qquad\qquad e(-x₂): bbabx₂ ⏎
    c(∘ x):∅&\qquad\qquad f(-x₃,∘x₄,+x₅): af(x₄, ax₄, x₄)
  \end{aligned}
\end{equation}
This class table, ranging over classes~$a,b,c,d,e,$ and $f$, is well-formed.
The table has no inheritance cycles, e.g., although type~$d$ appears in the
type it inherits, we consider only the roots of supertypes in checks of
circularity.  Restrictions on the use of type parameters are also respected.
Covariant type parameter~$x₁$ in type~$adax₁$ is in a positive position, as both~$a$ and~$d$ are covariant.
Contravariant type parameter~$x₂$ in type~$bbabx₂$ is in a negative position: Type~$b$ reverses the
polarity an odd number of times and type~$a$ does not change it, so the final position is negative.
Invariant type parameter~$x₄$ can be placed in all the positions of~$f$ in type~$af(x₄, ax₄, x₄)$,
and it is the only parameter suitable for the invariant (second) position.
The~\CSharp programming language enforces the rules of \aref[(e)]{figure:type:system};
in \cref{lst:example:class:table}, we encode the class table of \cref{eq:example:class:table} as
\CSharp interface declarations that compile as expected.
\begin{code}[float=ht,style=csharp,caption={The class table of \cref{eq:example:class:table} encoded in~\CSharp},label={lst:example:class:table}]
interface a<out x> {}        interface d<out x1> : a<d<a<x1>>> {}
interface b<in x> {}         interface e<in x2> : b<b<a<b<x2>>>> {}
interface c<x> {}            interface f<in x3, x4, out x5> : a<f<x4,a<x4>,x4>> {}
\end{code}

Let the type system featured in \cref{figure:type:system} be denoted as \TKP.
\citet{Kennedy:Pierce:07} proved that \TKP is undecidable.
Nevertheless, they introduced three restrictions on \TKP, each defining a
decidable type system fragment of \TKP.
Each restriction removes a certain feature of the type system:
contravariance (C), expansive inheritance (X), or multiple instantiation inheritance (M).
We denote a type system by the set of features used by its programs, e.g.,
\Tcxm is an alias of \TKP, and its subset containing only programs without
contravariance is denoted by \Txm.
Each element of the power set of~$❴\text{C}, \text{X}, \text{M}❵$ defines
a type system. We organize the eight type systems in a lattice, depicted in
\cref{figure:lattice}, ordered with respect to inclusion (of features
and programs). The bottom of the lattice is \Tbot, bare of any special features.
All these type systems employ polyadic parametric polymorphism, i.e.,
where a class can have multiple type parameters.

The first restriction of Kennedy and Pierce forbids contravariant type
parameters,~$-x$, yielding type system \Txm.  The second restriction disallows
multiple instantiation inheritance, making inheritance deterministic:
If~$γ(\vv{x}):⁺ γ'(\vv{τ})$ and~$γ(\vv{x}):⁺ γ'(\vv{τ}')$,
then necessarily~$\vv{τ}=\vv{τ}'$.
For instance, the class table of \cref{eq:example:program} does not conform
to this restriction, as class $v_0$ extends both $a(x)$ and $a(a x)$.
Kennedy and Pierce suspected that single instantiation inheritance, featured in \Tcx, is not enough
to achieve decidability.
\citet{Grigore:2017} verified the suspicion by simulating Turing machines
with subtyping queries in \Java, which employs deterministic inheritance.
To make single inheritance decidable, it had to be further restricted; the resulting type
system will not be discussed in this paper.

The third restriction, due to \citet{Viroli:00}, limits the growth of types during subtyping.
Consider, for example, the class table depicted in \cref{eq:example:class:table}.
Given type $d t$, for some type argument $t$, on either side of a subtyping query,
it can change to $adat$ by applying the \textsc{Super} rule and then to $dat$ by \textsc{Var},
yielding the type $dt'$ for $t'=at$.
Type $d t$ can, therefore, expand indefinitely during subtyping.
Kennedy and Pierce named this kind of inheritance \emph{recursively expansive} and showed its
connection to infinite subtyping proofs.

Originally, Viroli illustrated this growth of types by a graph, depicting the dependencies between
the different type variables employed by the class table.
In this paper, however, we use a more approachable definition, following an equivalence result of
Kennedy and Pierce.
Consider the following definition:
\begin{definition}[Inheritance and decomposition closure]\label{definition:inheritance:decomposition:closure}
  Set~$T$ of types is closed under inheritance and decomposition if for every type~$γ(\vv{t})$ in~$T$,
  its sub-trees are in~$T$,~$\vv{t}⊂T$, and so are its supertypes,~$γ(\vv{t}) \inherits t⇒ t∈T$.
  The minimal super-set of set~$S$ closed under inheritance and decomposition is denoted~$\cl(S)$.
  (The curly braces of set~$S$ may be omitted inside~$\cl(·)$, e.g.,~$\cl(t₁, t₂)$.)
\end{definition}
With the class table of \cref{eq:example:class:table}, for instance,~$\cl(dt)⊃❴daⁿt~|~n≥0❵$,
as explained above.
\citet{Kennedy:Pierce:07} noted that the set of types involved in a subtyping query~$t_⊥\subtype t_⊤$ is
contained in their inheritance and decomposition closure,~$\cl(t_⊥, t_⊤)$, following the
\textsc{Super} (inheritance) and \textsc{Var} (decomposition) rules.
If a closure~$\cl(t_⊥, t_⊤)$ is finite, then the subtyping query~$t_⊥\subtype t_⊤$ is decidable,
as we can trace infinite subtyping cycles.
Therefore, if the closure~$\cl(t, t')$ of any two types~$t$ and~$t'$ is finite,
subtyping is decidable, and the class table is called finitary.
Kennedy and Pierce proved that subtyping with non-expansive inheritance is decidable,
by showing that a class table is non-expansive if and only if it is finitary.
Thus, we can use \cref{definition:inheritance:decomposition:closure} to define
non-expansive inheritance directly:
\begin{definition}[Non-expansive class table]\label{definition:non:expansive}
	Class table $\Delta$ is non-expansive if and only if the inheritance and
	decomposition closure of any set of types~$T$ is finite, $|\cl(T)|<\infty$.
\end{definition}
\noindent
We denote the type system fragment where all the class tables are non-expansive as \Tcm.

If a class table~$Δ$ conforms to type system~$𝓢$, we write~$𝓢⊢Δ$.  For example,
class table~$Δ$ of \cref{eq:example:class:table} conforms to \TKP, written~$\TKP⊢Δ$, as
it is well-formed, but not to \Tcm, written~$\Tcm \nvdashΔ$, as its inheritance is
expansively recursive.

	\section{Subtyping Machines as Forest Recognizers}
	\label{section:machines}
	A subtyping machine is a set of type system definitions that utilize subtyping and
its rules to perform a computation. \citet{Grigore:2017}, who coined the term,
used subtyping machines to simulate Turing machines in \Java, thus proving its
type system is undecidable. 

This section formalizes subtyping machines as
general computation devices, following a standard approach in fluent API
research~\cite{Gil:19,Yamazaki:2019,Gil:20}.
Subtyping machines recognize formal forests, i.e., sets of trees
or terms. In \cref{section:expressiveness}, we use this formalism to classify
the computational power of decidable subtyping in its various forms.

This section's running example is the program shown in \cref{lst:example:cff:program}.
We focus on the type declaration in the first two lines of the code:
\begin{code}
interface a<out x> {} interface b<out x> {} interface E {}
class v0<x> : a<v0<a<x>>>, a<a<x>>, a<x>, b<v0<b<x>>>, b<b<x>>, b<x> {}
\end{code}
Let us name this program \textsf{Pali}.
In \cref{section:aa}, we introduced \textsf{Pali} as an example of a subtyping
machine that recognizes palindromes, but did not explain what ``recognize'' means.
Our goal is, therefore, to formalize an interpretation of \textsf{Pali} as
a formal forest of palindromes.

In our type system, presented in \cref{figure:type:system}, a program~$P=Δ~q$
comprises a set of declarations~$Δ$ constituting a class table and a query (or
assertion)~$q=t_⊥\subtype t_⊤$. The query succeeds (compiles) if and only if
type~$t_⊥$ is a subtype of type~$t_⊤$ considering class table~$Δ$ and the
subtyping semantics of the type system,~$t_⊥\subtype_Δ t_⊤$. Therefore, a
given class table defines the set of queries that compile against it; we call
this set the \emph{extent} of the class table.

The extent of \textsf{Pali}'s class table $\Delta$ (written using abstract
syntax in \cref{eq:example:program}) contains, for instance, the queries
\begin{equation*}
	\text{$v_0 E \subtype aE$, \qquad $v_0 E \subtype abaE$, \qquad and\ \ \ $v_0 E \subtype abbabbaE$,}
\end{equation*}
because they type check against $\Delta$.
In general, the extent of $\Delta$ contains all the queries of the form
\begin{equation*}
	v_0 E \subtype p E
\end{equation*}
where $p$ is a palindrome over $\{a, b\}$ (we prove this in \cref{section:non:contravariant}).
$\Delta$'s extent, however, also contains queries that do not include palindromes at all,
such as
\begin{equation*}
	\text{$E \subtype E$, \qquad $v_0 a E \subtype baE$, \qquad and\ \ \ $a v_0 E \subtype abE$.}
\end{equation*}
To correlate \textsf{Pali}'s extent with a forest of palindromes,
we first need to restrict the contents of its queries.

Recall that formal grammars distinguish between
terminal and non-terminal (variable) symbols, drawn from sets~$Σ$ and~$V$,
respectively. This dichotomy allows grammars to use variables as auxiliary
symbols, which cannot appear as part of derived words. To make our definition of
subtyping machines effective, we introduce an analogous separation in~$Γ$, the
set of class names in~$Δ$. Let sub-alphabet~$Σ_⊥⊆Γ$ and
super-alphabet~$Σ_⊤⊆Γ$ be the domains of subtypes~$t_⊥$ and
supertypes~$t_⊤$ in subtyping queries, i.e., types~$t_⊥$
and~$t_⊤$ consist of type names drawn from designated~$Γ$ subsets~$Σ_⊥$
and~$Σ_⊤$.
We amend Kennedy and Pierce's type system, defined in \cref{figure:type:system},
to include the sub- and super-alphabets:
\begin{equation}\label{eq:amendment}
	q \produce t_⊥\subtype t_⊤ \quad {\scriptstyle\constraint{t_⊥∈Σ_⊥^▵,~t_⊤∈Σ_⊤^▵}} \qquad (Σ_⊥^▵,Σ_⊤^▵\subseteq\Gamma)
\end{equation}
From this point on, we discuss programs that include the amendment presented in \cref{eq:amendment}.
We define the extent of class table~$Δ$ as the subset of~$Σ_⊥^▵⨉Σ_⊤^▵$:
\begin{definition}[The extent of a class table]
  \label{definition:class:table:extent}
  Let~$Δ$ be a class table, as described in \cref{figure:type:system}.
  The extent of~$Δ$, denoted~$L(Δ)$, contains all the pairs of types~$⟨t_⊥, t_⊤⟩$ such
  that the subtyping query~$t_⊥\subtype t_⊤$ type checks against~$Δ$: \[
    L(Δ)=❴⟨t_⊥,t_⊤⟩~|~t_⊥∈Σ_⊥^▵,~t_⊤∈Σ_⊤^▵,~t_⊥\subtype_Δ t_⊤❵
    \qquadΣ_⊥,Σ_⊤⊆Γ
\]
\end{definition}

\cref{definition:class:table:extent} gives us more control over the expressiveness
of programs.
The extent of a class table, however, does not yet describe a formal forest.
Although types are homomorphic to trees, an extent is a set of type pairs
$⟨t_⊥,t_⊤⟩$, rather than a set of lone types.
To solve this issue, we fix the subtype~$t_⊥$ to some constant type that is
no longer a variable part of the subtyping query.
Then, a class table~$Δ$ defines a (tree) language of~$t_⊥$ supertypes;
this set is the projection of type~$t_⊥$ onto extent~$L(Δ)$:
\begin{definition}[The language of a class table]
  \label{definition:class:table:language}
  Let~$Δ$ be a class table and~$t_⊥$ be a type, as described in \cref{figure:type:system}.
  The language of~$Δ$ over~$t_⊥$, denoted~$L_{t_⊥}(Δ)$, contains all the supertypes of~$t_⊥$ against~$Δ$, \[
    L_{t_⊥}(Δ)=❴t_⊤~|~⟨t_⊥, t_⊤⟩∈L(Δ) ❵=❴t_⊤~|~t_⊤∈{Σ_⊤}^▵,~t_⊥\subtype_Δ t_⊤❵
    \qquadΣ_⊤⊆Γ
\] \end{definition}

We can now use \cref{definition:class:table:language} to describe \textsf{Pali}'s
language over subtype $t_\bot = v_0 E$ as a forest of palindromes:
\begin{equation}\label{eq:language:pali}
	L(\textsf{Pali})_{v_0E} = \{ p E ~|~ \text{$p$ is a palindrome} \} \qquad (\Sigma_\top = \{a, b, E\})
\end{equation}
In other words, \textsf{Pali} is a subtyping machine that recognizes palindromes
as supertypes of type $t_\bot = v_0 E$.
We prove \cref{eq:language:pali} in \cref{section:non:contravariant}.

Considering \cref{definition:class:table:language}, we see that class tables describe formal
languages (forests), as grammars do, and their
compilers serve as language (forest) recognizers, similarly to grammar parsers.
Correspondingly, a type system is a family of programs that recognizes a class of languages:
\begin{definition}[The class of languages of a type system]
  \label{definition:type:system:extent}
  The computational class (expressiveness) of subtyping-based type system~$𝓢$, denoted~$𝓛(𝓢)$,
  is the set of languages described by its programs: \[
    𝓛(𝓢)=❴L_{t_⊥}(Δ)~|~t_⊥∈Σ_⊥^▵,~𝓢⊢Δ❵ \qquad Σ_⊥⊆Γ
\] 
\end{definition}
\noindent (Recall that~$𝓢⊢Δ$ means class table~$Δ$ is well-formed
and conforms to the restrictions of type system~$𝓢$.)

	\section{Expressiveness of Decidable Subtyping}
	\label{section:expressiveness}
	Recall that the type system of Kennedy and Pierce \TKP features contravariance (C),
expansive inheritance (X), and multiple instantiation inheritance (M).
Subset~$S$ of~$❴C,X,M❵$ describes the type system fragment of
\TKP endorsing only the features included in~$S$.
Subtyping in \TKP and \Tcx is known to be undecidable, but it is decidable in
the other six fragments~\cite{Kennedy:Pierce:07,Grigore:2017}.
In \cref{section:machines}, we described class tables as subtyping machines that recognize formal forests
(\cref{definition:class:table:language}) and defined the expressiveness
of a type system as the class of forests that its tables recognize (\cref{definition:type:system:extent}).
Now we use these formalizations to describe the computational power of Kennedy and Pierce's
decidable type system fragments.

Given type system~$𝓢$, we show~$𝓢$ is at least as
complex as class of languages~$𝓛$ by proving that any language~$ℓ$
in~$𝓛$ is described by some class table~$Δ$ and subtype~$t_⊥$ of~$𝓢$: \[
  \big(∀ℓ∈𝓛,~∃Δ, t_⊥,~𝓢⊢Δ,~ℓ=L_{t_⊥}(Δ)\big)~⇒~𝓛⊆𝓛(𝓢)
\] To show that type system~$𝓢$ is not more complex than class of
languages~$𝓛$, we prove that any class table~$Δ$ and subtype~$t_⊥$ describe an~$𝓛$
language: \[
  \big(∀Δ, t_⊥,~𝓢⊢Δ,~∃ℓ∈𝓛,~ℓ=L_{t_⊥}(Δ)\big)~⇒~𝓛(𝓢)⊆𝓛
\] If the expressiveness of~$𝓢$ is bounded by~$𝓛$ from both directions,
then it is exactly~$𝓛$: \[
  𝓛⊆𝓛(𝓢)⊆𝓛~⇒~𝓛(𝓢)=𝓛
\] In all the following proofs, formal languages are specified by their corresponding grammars.

\subsection{Non-Expansive Subtyping Is Regular}

\citet{Greenman:14} raised a concern about non-expansive inheritance restricting
some conventional class table use-cases.
We go one step further by showing that the expressiveness of non-expansive class tables is regular,
i.e., of the lowest tier.

\begin{theorem}\label{theorem:regf:lower}
  Any regular forest can be encoded in a class table using only multiple instantiation inheritance: \[
    \|NRF|⊆𝓛(\Tm)
\] \end{theorem}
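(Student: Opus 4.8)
The plan is to read a regular tree grammar as a class table almost verbatim --- grammar variables become nullary (leaf) classes, grammar terminals become covariant classes --- so that a subtyping query started from the initial variable replays a grammar derivation step by step. Recall that a regular tree grammar $G=\langle\Sigma,V,v_0,R\rangle$ has productions of the form $v\produce\sigma(v_1,\dots,v_k)$ with $\sigma\in\Sigma$, $\rank(\sigma)=k$, and $v,v_1,\dots,v_k\in V$; I assume $V\cap\Sigma=\emptyset$. I would build the class table $\Delta$ as follows: each terminal $\sigma\in\Sigma$ of rank $k$ becomes a class $\sigma(+x_1,\dots,+x_k):\emptyset$ with $k$ covariant parameters and no supertypes, and each variable $v\in V$ becomes a nullary class whose list of supertypes is exactly the list of right-hand sides of its productions, so that a production $v\produce\sigma(v_1,\dots,v_k)\in R$ contributes the parameterless ground supertype $\sigma(v_1,\dots,v_k)$ to the declaration of $v$. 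Finally set $t_\bot=v_0$, $\Sigma_\bot=V$, and $\Sigma_\top=\Sigma$. This $\Delta$ never uses contravariance, and it uses multiple instantiation inheritance precisely when $G$ is non-deterministic --- some variable carrying two productions with the same terminal head but different children --- so M is the only special feature invoked.

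Next I would check that $\Delta$ is a legal $\Tm$ class table: well-formed (\aref[(e)]{figure:type:system}) and non-expansive (\cref{definition:non:expansive}). Well-formedness is immediate, since variable classes carry no parameters, so the side conditions on $\params(\cdot)$ and the variance-validity checks hold vacuously (every supertype is a parameterless ground term over class names), and inheritance is acyclic because in the class-inheritance graph each $v\in V$ points only to terminals $\sigma\in\Sigma$, which themselves have no supertypes. For non-expansiveness, observe that for any finite set $T$ of types the inheritance-and-decomposition closure satisfies $\cl(T)\subseteq\{\text{subterms of members of }T\}\cup V\cup\{\sigma(v_1,\dots,v_k)\mid v\produce\sigma(v_1,\dots,v_k)\in R\}$: decomposition yields only subterms, inheritance sends a type with variable head to one of the finitely many production right-hand sides, and a type with terminal head has no supertypes. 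All three sets are finite, so $|\cl(T)|<\infty$ and $\Tm\vdash\Delta$.

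The heart of the proof is the claim that $v\subtype_\Delta t$ iff $v\rightarrow_G^{*}t$, for every $v\in V$ and every $t\in\Sigma^\triangle$; I would prove each direction by induction. For $(\Leftarrow)$: any derivation from $v$ starts by applying some production $v\produce\sigma(v_1,\dots,v_k)\in R$ and then derives $t=\sigma(t_1,\dots,t_k)$ with $v_i\rightarrow_G^{*}t_i$; the induction hypothesis gives $v_i\subtype_\Delta t_i$, the rule \textsc{Var} lifts these (every parameter of $\sigma$ being covariant) to $\sigma(v_1,\dots,v_k)\subtype\sigma(t_1,\dots,t_k)$, and since $v\inherits\sigma(v_1,\dots,v_k)$, the rule \textsc{Super} yields $v\subtype t$. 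For $(\Rightarrow)$: because $v\notin\Sigma$ while the root of $t$ lies in $\Sigma$, the roots differ, so a proof of $v\subtype t$ cannot end with \textsc{Var} and must end with \textsc{Super}, $v\inherits t'$ and $t'\subtype t$; the only supertypes of $v$ are its production right-hand sides, so $t'=\sigma(v_1,\dots,v_k)$ for some $v\produce\sigma(v_1,\dots,v_k)\in R$; now $\sigma$ has no supertypes, so \textsc{Super} is inapplicable to $t'\subtype t$, forcing it to end with \textsc{Var}, whence $t=\sigma(t_1,\dots,t_k)$ and $v_i\subtype_\Delta t_i$ for all $i$; the induction hypothesis gives $v_i\rightarrow_G^{*}t_i$, hence $v\rightarrow_G\sigma(v_1,\dots,v_k)\rightarrow_G^{*}t$. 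Instantiating $v:=v_0$ and using $\Sigma_\top=\Sigma$ gives $L_{t_\bot}(\Delta)=\{t\in\Sigma^\triangle\mid v_0\subtype_\Delta t\}=L(G)$, and since $G$ ranges over all regular tree grammars we conclude $\|NRF|\subseteq\mathcal L(\Tm)$.

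The construction is elementary, so I expect the only delicate point to be the $(\Rightarrow)$ direction: establishing that every subtyping proof out of a variable is a rigid alternation of \textsc{Super} (one production step) and \textsc{Var} (simultaneous descent into all children), with no other rule ever applicable. This rests precisely on the two structural invariants of the encoding --- terminal classes have no supertypes, so \textsc{Super} can neither be iterated nor fire below a terminal; variable classes are nullary, so \textsc{Var} never matches a variable and no variable-rooted type is a valid supertype in $\Sigma^\triangle$. The same two facts also ensure that no spurious reflexive query $v\subtype v$ leaks into $L_{t_\bot}(\Delta)$ and that the absence of expansive growth in $\Delta$ is genuine rather than merely assumed.
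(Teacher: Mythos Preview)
Your proposal is correct and follows essentially the same approach as the paper: the construction (terminals as covariant classes, variables as nullary classes inheriting their production right-hand sides, $t_\bot=v_0$, $\Sigma_\top=\Sigma$) is identical, and the correctness claim $v\subtype_\Delta t\iff v\to_G^* t$ is proved by the same induction. Your treatment is slightly more explicit than the paper's in two places---you spell out the rule-by-rule case analysis for the $(\Rightarrow)$ direction (the paper folds both directions into a single chain of equivalences indexed by $\height(t)$), and you give a concrete finite superset of $\cl(T)$ rather than the paper's one-line height observation---but these are presentational, not substantive, differences.
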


\paragraph*{Intuition}
We want to simulate the derivation of a regular tree grammar with subtyping.
In the first step, we encode every terminal node $\sigma$ of rank $k$ by
class $\sigma$ that has $k$ type parameters.
Thus, terminal trees (derived from the grammar) and types (in subtyping queries)
coincide, as both use the same term notation and base alphabet, e.g.,
$\sigma_1(\sigma_2, \sigma_3 \sigma_4)$ is both a tree and a type.
Next, every regular derivation rule $
	v \produce \sigma(\vv{v})
$ is directly encoded by a corresponding inheritance rule $
	v : \sigma(\vv{v}).
$
With these definitions, the subtyping query $v \subtype t$ simulates the grammar
derivation $v \rightarrow^* t$:
Let $t=\sigma(\vv{t})$.
Grammar variable $v$ derives $t$ if $v$ derives $\sigma$, $v \produce \sigma(\vv{v})$
and then, recursively, variables $\vv{v}$ derive $\sigma$'s children, $\vv{v} \rightarrow^* \vv{t}$.
Correspondingly, the subtyping query $v \subtype t$ type checks if class $v$
inherits type $\sigma$, $v : \sigma(\vv{v})$ and then, recursively, types $\vv{v}$ are subtypes
of $\vv{t}$, $\vv{v} \subtype \vv{t}$.
The recursion of subtyping originates from the \textsc{Var} rule, under the condition that
the type parameters of $\sigma$ are covariant.

\paragraph*{Example}
Consider the regular tree grammar in \cref{eq:example:regular:grammar} in \cref{section:forests},
describing lists of Peano numbers.
This grammar can be encoded in a~\CSharp program with neither contravariance nor expansive recursive inheritance, as
demonstrated in \cref{lst:example:regular:program}.
\begin{code}[style=csharp,caption={\protect\CSharp program encoding the grammar of \protect\cref{eq:example:regular:grammar}},label={lst:example:regular:program}]
interface z {}                    // Terminal leaf $\color{comment}\+z+$
interface s<out x> {}             // Terminal node $\color{comment}\+s+(x)$
interface nil {}                  // Terminal leaf $\color{comment}\+nil+$
interface cons<out x1, out x2> {} // Terminal node $\color{comment}\+cons+(x1, x2)$
interface Nat: // Variable $\color{comment}\-Nat-$
    z,         // Production $\color{comment}\-Nat-\produce\+z+$
    s<Nat> {}  // Production $\color{comment}\-Nat-\produce\+s+(\-Nat-)$
interface List:        // Variable $\color{comment}\-List-$
    nil,               // Production $\color{comment}\-List-\produce\+nil+$
    cons<Nat, List> {} // Production $\color{comment}\-List-\produce\+cons+(\-Nat-, \-List-)$
// Subtyping query $\color{comment}\-List-\subtype\+cons+(\+ssz+, \+cons+(\+sz+, \+cons+(\+z+, \+nil+)))$
cons<s<s<z>>, cons<s<z>, cons<z, nil>>> t=(List) null;
\end{code}
The grammar terminals \+s+, \+z+, \+cons+, and \+nil+ are encoded
by interfaces with covariant type parameters, and the grammar variables
\-Nat- and \-List- are encoded by interfaces whose supertypes are the possible productions
of each variable.
The last line of the listing contains a variable assignment that
ensures type~$v₀=\-List-$ is a subtype of
the~\CSharp type encoding the list~$⟨{}2,1,0⟩$ (described in \cref{eq:example:list}).
This assignment compiles, as expected.

\begin{proof}[Proof of \cref{theorem:regf:lower}]
  A given regular forest is described by a regular tree grammar
  $G=⟨Σ, V, v₀, R⟩$.
  We encode~$G$ as class table~$Δ$,~$\Tm⊢Δ$, such that
  the language of~$Δ$ (over~$t_⊥=v₀$) is exactly the language described
  by~$G$,~$L(G)$.

  \proofpart{Construction}
  The class table~$Δ$ employs type names drawn from~$Γ=Σ∪V$.
  The number of type parameters of each type~$σ∈Σ$ is identical to its rank in~$G$,
  and all type parameters are covariant (nodes~$v∈V$ are leaves).
  For each regular production rule~$v \produceσ(\vv v)∈R$, set
  type~$σ(\vv v)$ as a super-type of type~$v$,~$v:σ(\vv v)$.
  We set the subtype to~$t_⊥=v₀$, and the super-alphabet to~$Σ_⊤=Σ$.

  \proofpart{Correctness of the construction}
  Subtyping against~$Δ$ directly emulates derivations of grammar~$G$,
  where the left-hand side is the current variable (tree form),
  starting with the initial variable~$v₀$, and the right-hand side is the
  derived tree.
  We show that a tree~$t$ is produced by variable~$v$ if and only if type~$t$ is a supertype
  of type~$v$,
  \begin{equation}\label{eq:statement:1}
    v→^*_G t~⇔~v \subtype t
  \end{equation}
  by induction on the height of~$t$,~$n=\height(t)$.

  If~$n=0$, then tree~$t$ is a leaf,~$t=σ$,~$σ∈Σ$.
  Variable~$v$ derives~$σ$ only if production~$v \produceσ$ is in~$R$.
  In that case, type~$σ$ is a supertype of~$v$,~$v:σ∈Δ$,
  so~$v$ is a subtype of~$t$ by inheritance: \[
    v→^*_Gσ~⇔~v \produceσ∈R~⇔~
    v:σ∈Δ~⇔~v \subtype \sigma
\] Suppose \cref{eq:statement:1} holds for~$n≥0$,
  and let~$t$ be a tree of height~$n+1$,~$t=σ(\vv t)$,~$σ∈Σ$,~$\vv t∈Σ^▵$,
  where the heights of sub-trees~$\vv t$ are~$n$ at most.
  If variable~$v$ derives~$t$, then there is a production~$v \produceσ(\vv v)$ in~$R$, for
  which every child variable~$vᵢ∈\vv v$ derives the corresponding child tree~$tᵢ$,~$\vv v→^*_G \vv t$.
  This production ensures type~$v$ inherits type~$σ(\vv v)$.
  As the heights of~$\vv t$ are no higher than~$n$, we can use the inductive assumption to deduce~$\vv v \subtype \vv t$.
  Because the type parameters of~$σ$ are covariant, we can show that~$v$ is a subtype of~$t$ by
  applying inheritance, decomposition, and the inductive assumption: \[
    \begin{aligned}
      v \subtypeσ(\vv t)~&⇐~v:σ(\vv v)~∧~σ(\vv v) \subtypeσ(\vv t) ⏎
σ(\vv v) \subtypeσ(\vv t)~&⇐~\vv v \subtype \vv t~∧~\vof{σ}=\vv+
    \end{aligned}
\] Overall, we have now proved \cref{eq:statement:1}: \[
    v→^*_G t~⇔~v \produceσ(\vv v)∧\vv v→^*_G \vv t~⇔~
    v:σ(\vv v)∧\vv v \subtype \vv t~⇔~v \subtype t
\] By applying \cref{eq:statement:1} to type~$t_⊥=v₀$, we get \[
    v₀→_G^* t~⇔~t_⊥\subtype t,
\] i.e.,~$L(G)=L_{t_⊥}(Δ)$ by definition.

  Class table~$Δ$ conforms to the restrictions of the type system,~$\Tm⊢Δ$.
  First, the class table is well-formed. Only types~$v∈V$ inherit types of the form~$σ(\vv v)$
  for~$σ∈Σ$, so there are no inheritance cycles and no type parameters in the wrong position.
  Second,~$Δ$ does not employ contravariance and is non-expansive: 
  The structure of inheritance rules guarantees that any type in~$\cl(t)$ is
  higher than~$t$ by one level at the most.
\end{proof}

\begin{theorem}\label{theorem:regf:upper}
  Non-expansive class tables describe regular forests: \[
𝓛(\Tcm)⊆\|NRF|
\] \end{theorem}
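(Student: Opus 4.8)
To prove \cref{theorem:regf:upper}, the plan is to compile the relation $\subtype_Δ$ into a finite bottom‑up tree automaton. Fix a class table $Δ$ with $\Tcm⊢Δ$ and a subtype $t_⊥∈Σ_⊥^▵$; the goal is to show that $L_{t_⊥}(Δ)$ is a regular forest. By the standard equivalence between regular tree grammars and finite (bottom‑up) tree automata, it suffices to build such an automaton $A$ over $Σ_⊤$ with $L(A)=L_{t_⊥}(Δ)$. Non‑expansiveness supplies two finiteness ingredients. First, by \cref{definition:non:expansive} the set $C=\cl(t_⊥)$ is finite, and by construction it is closed under sub‑trees and under $\inherits$, so $\cl(C)=C$. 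Second — and this is the point where non‑expansiveness is essential — the analysis of \citet{Viroli:00} and \citet{Kennedy:Pierce:07} yields a \emph{uniform} constant $D=D(Δ)$ such that for every type $t$, each element of $\cl(t)$ is $θ[\vv x←\vv u]$ for an unground term $θ$ with $\height(θ)≤D$ whose parameters are filled by sub‑trees $\vv u$ of $t$; I call such a $θ$ a \emph{shape}, and for each class name there are only finitely many shapes over its parameters.

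The core of the argument is a structural invariant on subtyping derivations. The rules of \cref{figure:type:system} are inductive, so $t_⊥\subtype_Δ t_⊤$ is witnessed by a finite proof tree, finite because $\cl(t_⊥,t_⊤)$ is finite (both $\cl(t_⊥)$ and $\cl(t_⊤)$ being finite by non‑expansiveness). I claim that every query $a\mathrel{R}b$ occurring in such a proof, with $R∈\{\subtype,\suptype,{=}\}$, has exactly one side in $C$ and the other side in $\cl(u)$ for some sub‑tree $u$ of $t_⊤$; and moreover, once \textsc{Var} has descended in lockstep to a node $w$ of $t_⊤$, this $t_⊤$‑side is a shape of height $≤D$ over sub‑trees of $u_w$ at bounded depth. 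The proof is by induction on the derivation: \textsc{Super} rewrites only the left side and only via the class table, hence sends a $C$‑type to a $C$‑type and a $t_⊤$‑type to a $t_⊤$‑type (staying of bounded shape by the uniform bound when it wraps on another template layer); \textsc{Var} passes to sub‑trees of both sides, while a contravariant component merely swaps the two sides and an invariant component turns the query into a ground equality — so the one‑side‑in‑$C$ property and the ``node $w$'' bookkeeping are preserved. Multiple‑instantiation inheritance only contributes finite branching to the \textsc{Super} step.

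Given the invariant, $A$ is the subset construction it suggests. A state at a node labelled $σ$ is a set of \emph{facts}: each fact fixes some $c∈C$, some $R∈\{\subtype,\suptype,{=}\}$, and a bounded‑size unground context whose holes are addressed by paths of length $≤D$ from the node, and asserts that the context, filled with the corresponding sub‑trees of $u_w$, stands in relation $R$ to $c$; since $C$ is finite and both the contexts and the address‑patterns range over finite sets, the state space is finite. The transition at a node with label $σ$ and children‑states $S_1,\dots,S_k$ decides each candidate fact by unfolding it with \textsc{Super} and \textsc{Var}: every premise so produced is either a fact \emph{at the same node} — resolved by a least fixed point over the finite fact set, which terminates and, the ambient closure being finite, agrees with $\subtype_Δ$ — or a fact about a sub‑tree $u_{w_j}$ versus an element of $C$, read off from $S_j$ (here the relevant sub‑trees of $c$ remain in $C$ because $C$ is decomposition‑closed, and the contexts met along the way remain $D$‑bounded by the uniform bound). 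Thus the transition is a function of $(σ,S_1,\dots,S_k)$; leaf states are the $k=0$ case. A state is \emph{accepting} iff it contains the fact ``$u_w\suptype t_⊥$'' — i.e.\ the sub‑tree at the root of $t_⊤$ is a supertype of $t_⊥$, i.e.\ $t_⊥\subtype_Δ t_⊤$. By the invariant, $A$ accepts exactly the trees of $L_{t_⊥}(Δ)$, so that forest is regular; since $Δ$ and $t_⊥$ were arbitrary, $𝓛(\Tcm)⊆\|NRF|$.

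The hard part is the uniform bound $D$, and with it the finiteness of the state space. Without non‑expansiveness the $t_⊤$‑side of a query grows without bound — each time \textsc{Super} is chained through a template whose root is a class name rather than a parameter, one more layer is wrapped on — producing infinitely many distinct facts; this is precisely the behaviour of the palindrome table $v_0(∘x):av_0ax,\ldots$, for which $\cl(v_0E)$ is infinite. So the real work is (i) extracting $D$ from non‑expansiveness, for which one invokes the Viroli/Kennedy–Pierce characterization rather than reproving it; (ii) pinning down the precise state space (which bounded contexts and address patterns suffice) and proving the structural invariant; and (iii) checking that the fixed‑point transition is sound and complete for $\subtype_Δ$ on the finite ambient closure. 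The subset construction itself and the translation back to a regular tree grammar are then routine.
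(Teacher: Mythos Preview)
Your plan—compile $\subtype_\Delta$ into a finite bottom-up tree automaton over $\Sigma_\top$—is the paper's plan at the top level (the paper builds a regular tree grammar directly, which amounts to the same thing). The execution, however, is heavier than it needs to be. The paper's state is simply a set of queries $t\sim x$ with $t\in\cl(t_\bot)$ and ${\sim}\in\{\subtype,=,\suptype\}$: no shapes, no bounded contexts, no depth-$D$ addressing. A production $v\to\sigma(v_1,\ldots,v_n)$ is added exactly when the queries carried by the $v_i$ (on placeholders $x_1,\ldots,x_n$) jointly entail every query of $v$ after the substitution $x\mapsto\sigma(x_1,\ldots,x_n)$. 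All the shape-level reasoning you push into the state—the template layers that \textsc{Super} wraps around the $t_\top$-side once contravariance has swapped roles—is absorbed into the \emph{offline} computation of that entailment: one unfolds each $t\sim\sigma(x_1,\ldots,x_n)$ symbolically, treating the $x_i$ as fresh ground leaves, and the search terminates because every intermediate type lies in $\cl(t)\cup\cl(\sigma(x_1,\ldots,x_n))$, both finite by non-expansiveness. The leaves of that symbolic unfolding are already of the form $t'\sim x_i$ with $t'\in\cl(t_\bot)$, so the children's states need carry nothing else. Consequently the three ``hard parts'' you flag—extracting a uniform $D$, fixing the context/address vocabulary, and verifying a fixed-point transition—evaporate once you batch every chain of \textsc{Super}/\textsc{Var} steps between two consecutive descents into $t_\top$ into a single grammar production. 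Your automaton is not wrong, just over-instrumented: it simulates the subtyping derivation step by step where the paper simulates it node by node.
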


\paragraph*{Intuition}
A grammar variable $v \in V$ encodes a set of queries $Q$, $v=Q$.
These queries restrict the set of types (trees) that $v$ can derive.
Each query $q \in Q$ is of the form $t \sim x$, where
\1 $t$ is a type,
\2 relation $\sim$ is subtyping ($\sim=\subtype$), suptyping (the opposite
	of subtyping, $\sim=\suptype$), or equivalence ($\sim$ is $=$), and
\3 symbol $x$ is a placeholder for the type to be derived.
If query $q=t \sim x$ is included in variable $v$, then
our construction ensures that the type derived by $v$ satisfies $q$: \[
	t \sim x\in v,~v \rightarrow^* t' ~\Rightarrow~ t \sim t'
\]
For example, the initial variable $v_0$ includes a single query $t_\bot \subtype x$,
so each type $t'$ derived by $v_0$ satisfies $t_\bot \subtype t'$---the set of
these types is, by \cref{definition:class:table:language}, exactly the language of
the class table over $t_\bot$.

As the class table is non-expansive, both sets of variables $V$ and productions
$R$ are finite and computable.
If the initial query $t_\bot \subtype x$ is reduced to some
other query $t \sim x$, then, as observed by \citet{Kennedy:Pierce:07},
type $t$ is included in $t_\bot$'s inheritance and decomposition closure
$\cl(t_\bot)$ (see \cref{definition:inheritance:decomposition:closure}).
Because the class table is non-expansive, this set must be finite
(see \cref{definition:non:expansive}).
Thus, for a general query $t \sim x$, there are $|\cl(t_\bot)|$ choices
for $t$, three choices for $\sim$, and, consequently, $3|\cl(t_\bot)|$ possible
queries overall.
As variables are sets of queries, we get that the
set of variables $V$ is finite, $|V|\le 2^{3|\cl(t_\bot)|}$.
A variable $v$ can derive terminal $\gamma$, $v \produce \gamma(\vv{v})$,
if and only if the queries in $v$ can be reduced to the queries in variables
$\vv{v}$; variables $\vv{v}$ then recursively simulate the subtyping algorithm with $\gamma$'s children.
In the proof below we show that as the class table is non-expansive,
it is possible to compute the derived queries $\vv{v}$ in finite time.

\paragraph*{Example}
Consider the following non-expansive class table~$Δ$, first presented in \citet{Kennedy:Pierce:07}: \[
\begin{aligned}
	N(-x) &:∅⏎
	C &: NNC
\end{aligned}
\]
Let subtype~$t_⊥=C$ and super-alphabet~$Σ_⊤=❴N, C ❵$;
then, class table $\Delta$ is encoded by the following grammar $G$:
\begin{equation*}
	\begin{array}{ll}
		v₀=❴C \subtype x ❵ &\quad v₀→C ⏎
		v₁=❴NC \suptype x ❵ &\quad v₀→Nv₁ ⏎
		&\quad v₁→Nv₀
	\end{array}
\end{equation*}
(Definitions not reachable from the initial variable are omitted.)
The resulting language is indeed regular:~$L(G)=❴N^{2k}C~|~k≥0 ❵$.
The grammar productions simulate subtyping against $\Delta$ during
derivation:
\begin{itemize}
	\item Variable $v_0=C \subtype x$ derives $C$ because if $x=C$, $C \subtype C$ holds.
	\item Variable $v_0=C \subtype x$ derives $N v_1$ because if $x=N x'$ for some type $x'$,
		the query in $v_0$ holds if $x'$ satisfies the query in $v_1$: \[
			C \subtype N x' ~\Leftarrow~ NNC \subtype Nx' ~\Leftarrow~ NC \suptype x' = v_1.
		\]
	\item Similarly, variable $v_1=NC \suptype x$ derives $N v_0$ because if $x=N x'$ for some type $x'$,
		the query in $v_1$ holds if $x'$ satisfies the query in $v_0$: \[
			NC \suptype Nx' ~\Leftarrow~ C \subtype x' = v_0.
		\]
	\item Nevertheless, $v_1=NC \suptype x$ cannot derive $C$, because the query $NC \suptype C$
		leads to an infinite proof:
		\begin{equation}\label{eq:non:terminating}
			NC \suptype C~⇐~NC \suptype NNC~⇐~C \subtype NC~⇐~…
		\end{equation}
\end{itemize}
As the infinite proof in the latter case is cyclic, it can be identified during
the grammar construction.

\paragraph*{Additional notations}
A query of the form~$t \sim x$, where~$t$ is a type,~$x$ is a type parameter, and~$\sim{}\!∈❴\subtype,=, \suptype❵$,
is denoted by~$q$.
A set of queries is denoted by~$Q$.
The set of~$Q$ queries where parameter~$x$ is substituted with term~$τ$ is denoted by~$Q[τ]$: \[
  Q[τ]=❴t \sim τ~|~t \sim x∈Q ❵
\]
The conjunction of the queries in~$Q$ is denoted by~$\bigwedge\!Q$.
We write~$\bigwedge\!Q \vdash \bigwedge\!Q'$ to say that the queries in~$Q$ imply the
queries in~$Q'$, i.e.,~$\bigwedge\!Q\Rightarrow\bigwedge\!Q'$.
If the queries in~$Q$ are true, we write~$\vdash\bigwedge\!Q$.
A set of queries is minimal if it does not contain redundant queries, e.g.,
including~$t \subtype x$ in addition to~$t = x$ although the latter implies
the former.

\begin{proof}[Proof of \cref{theorem:regf:upper}]
  Let~$Δ$ be a class table of type system \Tcm,
  and consider subtyping-queries between type~$t_⊥$ and types over
  super-alphabet~$Σ_⊤$.
  We show that the set of~$t_⊥$ supertypes is a regular forest,
  described by a regular tree grammar~$G=⟨Σ, V, v₀, R⟩$.

  \proofpart{Construction}
  Let~$Σ=Σ_⊤$.
  The set of variables~$V$ contains all sets of queries over~$\cl(t_⊥)$: \[
    V=❰Q~\left|~∀(t \sim{} x)∈Q,~t∈\cl(t_⊥),~
    \sim{}\!∈❴\subtype,=, \suptype❵\right.❱
\] Let~$v₀=❴t_⊥\subtype x ❵$.
  For every variable~$v=Q∈V$ and terminal~$σ∈Σ$,~$\rank(σ)=n$,
  add derivation rule~$v \produceσ(\vv v)$ to~$R$ if and only if there exists a sequence
  of variables~$\vv v=~\vv{Q}$, such that the sets in~$\vv Q$ are minimal and~$\bigwedge\!{\vvp{Q}}⊢\bigwedge\!{Q'}$,
  for set~$Q'$ and sets~$\vvp{Q}$,
  \begin{enumerate}
    \item~$Q'=Q[σ(\vv x)]=❴t \simσ(x_⊥,…,xₙ)~|~(t \sim x)∈Q ❵$
    \item~$∀ i=1,…,n,~Qᵢ'=Qᵢ[xᵢ]=❴t \sim xᵢ~|~(t \sim x)∈Qᵢ ❵$
  \end{enumerate}
  Expressed in words,
  parameter~$x$ in~$v$ is replaced with the term~$σ(x₁,…,xₙ)$,~$n=\rank(σ)$;
  the type parameters~$x$ in~$\vv v$ are enumerated according to their position in the sequence;
  then we require that the queries of~$\vv v$ prove all the queries in~$v$.
  As the inheritance and decomposition closure of~$t_⊥$ is always finite,
  variables~$\vv v$ can be found by traversing every possible proof of~$t \simσ(\vv x)$.

  \proofpart{Correctness of the construction}
  Observe that the grammar is well-defined.
  As~$Δ$ is non-expansive,~$\cl(t_⊥)$ is finite, thus~$V$ is finite.
  For every~$R$ production~$v \produceσ(\vv v)$, types~$t$ on the left-hand side of
  $v$ queries are drawn from~$\cl(t_⊥)$, so their sub-queries contain types also drawn from
  $\cl(t_⊥)$ by definition of the closure. Therefore,~$\vv v⊆V$, and thus~$R$ is finite.

  We show that variable~$v=Q$ derives tree~$t$,~$v→_G^* t$ if and only if substituting
  parameter~$x$ in~$Q$'s queries with type~$t$ satisfies them:
  \begin{equation}\label{eq:statement:2}
    v→_G^* t~⇔~⊢\bigwedge\!{v[t]}
  \end{equation}
  \cref{eq:statement:2} is proved by induction on the height of~$t$,~$n=\height(t)$.

  If~$n=0$, then~$t$ is a leaf,~$t=σ$. Variable~$v$, describing a set of
  queries~$Q$, produces~$σ$ if and only if~$v \produceσ∈R$.
  For nodes of rank~$0$, the construction introduces transition~$v \produceσ$
  if and only if the queries in set~$Q'=❴t \simσ~|~(t \sim x)∈Q ❵$
  are satisfied (by empty~$\vvp{Q}$), i.e., if replacing parameter~$x$ in~$Q$ with
  $t=σ$ satisfies its queries: \[
    v→_G^*σ~⇔~v \produceσ∈R~⇔~
⊢\bigwedge\!{v[σ]}
\] Suppose \cref{eq:statement:2} holds for~$n≥0$,
  and let~$t$ be a tree of height~$n+1$,~$t=σ(\vv t)$,~$σ∈Σ$,~$\vv t∈Σ^▵$,
  where the heights of~$\vv t$ are, at most,~$n$.
  Variable~$v$ derives~$t$,~$v→_G^* t$ if and only if~$v$ derives~$σ$ with production~$v \produceσ(\vv v)∈R$,
  where variables~$\vv v$ derive~$σ$'s children,~$\vv v→_G^* \vv t$.
  According to the construction,~$v \produceσ(\vv v)∈R$ if and only if~$\bigwedge\!{\vv{v}[\vv x]}⊢\bigwedge\!{v[σ(\vv x)]}$.
  By the inductive assumption,~$⊢\bigwedge\!{\vv v[\vv t]}$ if and only if~$\vv v→_G^* \vv t$,
  as the heights of types in~$\vv t$ are not higher than~$n$.
  Thus, by assigning~$\vv t$ to~$\vv x$ we get that~$⊢\bigwedge\!{\vv{v}[\vv t]}⊢\bigwedge\!{v[σ(\vv t)]}$,
  or~$⊢\bigwedge\!{v[σ(\vv t)]}$ in short: \[
    v→_G^*σ(\vv t)~⇔~v \produceσ(\vv v)∈R~∧~
    \vv v→_G^* \vv t~⇔~
⊢\bigwedge\!{v[σ(\vv t)]}
\] By applying \cref{eq:statement:2} to~$v₀=❴t_⊥\subtype x ❵$, we get that \[
    v₀→_G^* t~⇔~⊢\bigwedge\!{❴t_⊥\subtype t ❵}~⇔~t_⊥\subtype t,
\] and, therefore,~$L(G)=L_{t_⊥}(Δ)$ by definition.
\end{proof}

From \cref{theorem:regf:lower,theorem:regf:upper}, it follows that type systems \Tm and \Tcm
are precisely as complex as regular forests.
\begin{corollary}\label{theorem:regf}
  Class tables of non-expansive type systems describe regular forests: \[
𝓛(\Tm)=𝓛(\Tcm)=\|NRF|
\] \end{corollary}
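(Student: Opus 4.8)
The plan is to obtain \cref{theorem:regf} by sandwiching, using the two preceding theorems together with the monotonicity of expressiveness along the feature lattice of \cref{figure:lattice}. First I would record the one inclusion that is not yet stated explicitly: $\Tm$ is the fragment of $\Tcm$ obtained by additionally forbidding contravariant type parameters $-x$, so every well-formed $\Tm$ class table is also a well-formed $\Tcm$ class table, $\Tm⊢Δ⇒\Tcm⊢Δ$. Since \cref{definition:type:system:extent} defines $𝓛(𝓢)$ as the set of languages $L_{t_⊥}(Δ)$ ranging over $t_⊥$ and over class tables with $𝓢⊢Δ$, enlarging the admissible class tables can only enlarge the language class; hence $𝓛(\Tm)⊆𝓛(\Tcm)$.

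Next I would invoke \cref{theorem:regf:lower}, which gives $\|NRF|⊆𝓛(\Tm)$, and \cref{theorem:regf:upper}, which gives $𝓛(\Tcm)⊆\|NRF|$. Chaining these three facts produces the cycle of inclusions
\[
  \|NRF|⊆𝓛(\Tm)⊆𝓛(\Tcm)⊆\|NRF|,
\]
and a cycle of inclusions collapses: every set in it equals every other, so $𝓛(\Tm)=𝓛(\Tcm)=\|NRF|$, which is exactly the claim.

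There is essentially no obstacle here; the mathematical content is carried entirely by \cref{theorem:regf:lower,theorem:regf:upper}, and the corollary only observes that their bounds coincide and thereby pin down the intermediate fragment $\Tm$ as well. The one point worth stating carefully is the monotonicity step, and it is immediate from the definitions: the restriction cutting $\Tm$ out of $\Tcm$ is purely syntactic, the projected language $L_{t_⊥}(Δ)$ of a class table does not depend on which type system we regard that table as belonging to, and so $\{L_{t_⊥}(Δ)\mid \Tm⊢Δ\}⊆\{L_{t_⊥}(Δ)\mid\Tcm⊢Δ\}$.
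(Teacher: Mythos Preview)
Your proposal is correct and matches the paper's own proof essentially line for line: the paper also chains $\|NRF|\subseteq\mathcal L(\Tm)$ from \cref{theorem:regf:lower}, $\mathcal L(\Tm)\subseteq\mathcal L(\Tcm)$ from $\Tm\subset\Tcm$, and $\mathcal L(\Tcm)\subseteq\|NRF|$ from \cref{theorem:regf:upper}, then collapses the resulting cycle of inclusions.
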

\begin{proof}
  Consider that
  \1~$\|NRF|⊆𝓛(\Tm)$ by \cref{theorem:regf:lower},
  \2~$L(\Tm)⊆𝓛(\Tcm)$ as any \Tm class table conforms to \Tcm,~$\Tm⊂\Tcm$, and that
  \3~$𝓛(\Tcm)⊆\|NRF|$ by \cref{theorem:regf:upper};
  thus, \[
    \|NRF|⊆𝓛(\Tm)⊆𝓛(\Tcm)⊆\|NRF|~⇒~
𝓛(\Tm)=𝓛(\Tcm)=\|NRF| \qedhere
\] \end{proof}

When both expansive inheritance and multiple instantiation inheritance are taken away,
the complexity of the type system is trimmed down to deterministic forests.
\begin{proposition}\label{theorem:dregf}
  Class tables of non-expansive type systems with single instantiation inheritance
  describe deterministic regular forests: \[
𝓛(\Tbot)=𝓛(\Tc)=\|DRF|
\] \end{proposition}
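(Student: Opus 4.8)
The plan is to obtain both equalities by the usual sandwich, specializing the constructions behind \cref{theorem:regf:lower,theorem:regf:upper} and checking that forbidding multiple instantiation inheritance forces determinism at both ends. It suffices to prove $\|DRF| \subseteq 𝓛(\Tbot)$ and $𝓛(\Tc) \subseteq \|DRF|$: since every class table conforming to $\Tbot$ also conforms to $\Tc$ (contravariance being merely an extra, unused feature), $𝓛(\Tbot) \subseteq 𝓛(\Tc)$, whence $\|DRF| \subseteq 𝓛(\Tbot) \subseteq 𝓛(\Tc) \subseteq \|DRF|$ and therefore $𝓛(\Tbot) = 𝓛(\Tc) = \|DRF|$.

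For $\|DRF| \subseteq 𝓛(\Tbot)$ I would take a \emph{deterministic} regular tree grammar $G = \langle \Sigma, V, v_0, R \rangle$ and feed it to the encoding from the proof of \cref{theorem:regf:lower}: one class $\sigma$ with $\rank(\sigma)$ covariant parameters per terminal, one parameterless class $v$ per variable, and the inheritance rule $v : \sigma(\vv{v})$ per production $v \produce \sigma(\vv{v}) \in R$. The correctness induction on $\height(t)$ is untouched, so $L_{v_0}(\Delta) = L(G)$. The one new observation is that determinism of $G$ — at most one production $v \produce \sigma(\vv{v})$ per pair $(v, \sigma)$ — makes $v$ directly inherit at most one instantiation of $\sigma$; and since the only classes carrying supertypes are the $v \in V$, whose supertypes themselves have no supertypes, the transitive inheritance of each $v$ also contains at most one instantiation of each $\sigma$, so $\Delta$ uses single instantiation inheritance. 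Together with the absence of contravariance and of expansive recursion already recorded for that construction, this gives $\Tbot \vdash \Delta$.

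For $𝓛(\Tc) \subseteq \|DRF|$ I would fix $\Delta$ with $\Tc \vdash \Delta$ and a subtype $t_\bot$, and reuse the query-set grammar $G = \langle \Sigma, V, v_0, R \rangle$ from the proof of \cref{theorem:regf:upper}, whose variables are minimal sets of queries over the finite set $\cl(t_\bot)$ (finite because $\Tc$ is non-expansive), but refine it so that, for each reachable $v = Q$ and each terminal $\sigma$, it emits the single production $v \produce \sigma(\vv{v})$ whose child tuple $\vv{v}$ is the \emph{weakest} one satisfying $\bigwedge \vv{v} \vdash \bigwedge Q[\sigma(\vv{x})]$. The crux is a determinism lemma asserting that this weakest $\vv{v}$ exists, is unique, and is computable. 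This holds because, once multiple instantiation inheritance is excluded, discharging a conjunct $t \sim \sigma(\vv{x})$ (with $t \in \cl(t_\bot)$ concrete and $\sim{} \in \{\subtype, =, \suptype\}$) offers no choice: equality forces $t$ to be $\sigma$-rooted and decomposes structurally; for $\subtype$ either $t$ is already $\sigma$-rooted — and \textsc{Var} decomposes it, acyclicity of inheritance forbidding any \textsc{Super} step to a different $\sigma$-rooted type — or $t$ is $\gamma$-rooted with $\gamma \ne \sigma$, and \textsc{Super} must climb $t$ to its \emph{unique} $\sigma$-rooted supertype (uniqueness being exactly single instantiation inheritance) before \textsc{Var}; and $\suptype$ is symmetric, climbing $\sigma(\vv{x})$ instead, which may leave the placeholders $\vv{x}$ nested inside the reached ancestor and so spawns further, equally forced, decompositions. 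All sub-queries stay inside $\cl(t_\bot)$ and the iteration terminates because $\Delta$ is non-expansive, delivering for each parameter $x_i$ of $\sigma$ a single query between $x_i$ and a concrete type of $\cl(t_\bot)$, its relation symbol determined from $\sim$ by the variances met along the way — a contravariant step swapping $\subtype$ and $\suptype$, an invariant one forcing $=$. Canonicalizing the conjunction of these over all conjuncts of $Q$ yields the unique minimal weakest $v_i$; if instead some conjunct loops or becomes unsatisfiable, no production is emitted. The soundness and completeness argument of \cref{theorem:regf:upper} survives verbatim with only these weakest productions — completeness never calls for a stronger child tuple — so the refined $G$ is a deterministic regular tree grammar and $L_{t_\bot}(\Delta) = L(G) \in \|DRF|$.

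The main obstacle is exactly this determinism lemma. One must argue carefully that, after the substitution $x \mapsto \sigma(\vv{x})$, the interleaving of \textsc{Super} and \textsc{Var} needed to discharge $\bigwedge Q[\sigma(\vv{x})]$ offers no genuine choice once the $M$ feature is absent — this subsumes the bookkeeping of the three relations $\subtype$, $=$, $\suptype$ and of the contravariant positions permitted by the $C$ feature, together with the point that distinct inheritance chains to a given ancestor class name still yield the same instantiation — and, in addition, that the resulting per-child obligation has a canonical minimal representative, so that the child variable is \emph{literally} unique as the definition of a deterministic tree grammar demands, not merely unique up to logical equivalence of query sets.
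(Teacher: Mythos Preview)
Your proposal is correct and takes essentially the same approach as the paper: specialize the constructions of \cref{theorem:regf:lower,theorem:regf:upper} to the single-instantiation setting and observe that grammar determinism corresponds exactly to the absence of multiple instantiation inheritance. You are considerably more careful than the paper—which merely asserts in one line that deterministic subtyping yields at most one minimal child tuple $\vv v$—in isolating the determinism lemma, the handling of the $\suptype$ case where placeholders become nested, and the need for a canonical minimal representative so that uniqueness is literal rather than up to logical equivalence.
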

\begin{proof}
  By repeating the constructions in the proofs of \cref{theorem:regf:lower} and \cref{theorem:regf:upper}
  without multiple instantiation inheritance, we get the corresponding results with deterministic
  regular forests (and grammars).

  In the proof of \cref{theorem:regf:lower}, a production~$v \produceσ(\vv v)$ is directly
  encoded as the inheritance rule~$v:σ(\vv v)$.
  Thus, the determinism of the grammar coincides with the class table's single instantiation inheritance: \[
    \infer{
      v:σ(\vv{v₁}),σ(\vv{v₂})~⇒~\vv{v₁}=\vv{v₂}
    }{
      \begin{array}{l}
        v \produceσ(\vv{v₁})~∧~v \produceσ(\vv{v₂})~⇔~v:σ(\vv{v₁}),σ(\vv{v₂}) ⏎
        v \produceσ(\vv{v₁})~∧~v \produceσ(\vv{v₂})~⇒~\vv{v₁}=\vv{v₂}
      \end{array}
    }
\] If a class table employs single instantiation inheritance, then the subtyping algorithm,
  consisting of the \textsc{Var} and \textsc{Super} rules, becomes deterministic.
  In the proof of \cref{theorem:regf:upper}, a production~$v \produceσ(\vv v)$ is
  introduced if and only if the queries encoded by~$\vv v$ prove the queries of~$v$
  when node~$σ$ is encountered.
  As subtyping is deterministic, at most one minimal such sequence of queries~$\vv v$ exists.
\end{proof}

\subsection{Non-Contravariant Subtyping Is Context-Free}
\label{section:non:contravariant}

Yet another way to make subtyping decidable is by prohibiting contravariant type
parameters.
Computationally, this restriction is less severe than non-expansive inheritance.
We show that non-contravariant class tables are as expressive as context-free forests
with grammars in GNF.
Recall that Greibach productions are of the form~$v(\vv x) \produce σ(\vv{τ})$,
i.e., where the root of the derived tree form is terminal,~$σ∈Σ$.

In this section, we use the extended variant of context-free grammars, defined in \cref{definition:ecftg},
where the initial tree form can be any tree---not just a variable.
Recall that in \cref{lemma:ecftg} we proved that extended context-free tree grammars are
equivalent in expressiveness to standard context-free grammars.

\begin{theorem}\label{theorem:cff:gnf:lower}
  Any context-free tree grammar in GNF
  can be encoded in a non-contravariant class table: \[
    \|NCF\sub{GNF}|⊆𝓛(\Txm)
\] \end{theorem}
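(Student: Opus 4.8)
The plan is to mimic the construction of \cref{theorem:regf:lower}, lifting it from regular tree grammars to context-free tree grammars in GNF. Given a forest $\ell \in \|NCF\sub{GNF}|$, by \cref{lemma:ecftg} we may take $\ell = L(G)$ for an (extended) context-free tree grammar $G = \langle \Sigma, V, t_0, R \rangle$ in GNF, so that every production has the shape $v(\vv{x}) \produce \sigma(\vv{\tau})$ with a terminal head $\sigma \in \Sigma$. I would encode $G$ as a class table $\Delta$ over $\Gamma = \Sigma \cup V$: every terminal $\sigma$ and every variable $v$ becomes a class whose number of \emph{covariant} type parameters equals its rank in $G$ (so, exactly as in the regular case, ground trees over $\Gamma$ and ground types are literally the same objects), and every production $v(\vv{x}) \produce \sigma(\vv{\tau}) \in R$ is turned verbatim into the inheritance rule $v(\vv{x}) : \sigma(\vv{\tau})$. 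I would set the subtype to $t_\bot = t_0$ --- a legitimate ground type over $\Gamma$ --- and the super-alphabet to $\Sigma_\top = \Sigma$. Note that the terminal classes $\sigma \in \Sigma$ receive no inheritance rules whatsoever.

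The core of the argument is a correctness equivalence: for every ground tree form $s$ over $\Sigma \cup V$ and every terminal tree $t \in \Sigma^\triangle$,
\[
  s \to^*_G t \quad\Longleftrightarrow\quad s \subtype_\Delta t,
\]
which I would establish by induction on $\height(t)$, by case analysis on the root of $s$. If $s = \sigma(\vv{s})$ with $\sigma \in \Sigma$, then --- since $\sigma$ carries no supertype --- the query $s \subtype_\Delta t$ can be discharged only by \textsc{Var}, which forces $t = \sigma(\vv{t})$ and, the parameters being covariant, reduces componentwise to $\vv{s} \subtype_\Delta \vv{t}$; this mirrors exactly how a $\sigma$-rooted tree form derives a $\sigma$-rooted terminal tree, and the children have strictly smaller height, so the inductive hypothesis applies. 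If $s = v(\vv{s})$ with $v \in V$, then on the grammar side $s \to^*_G t$ holds iff --- after permuting the derivation so that the outermost occurrence of $v$ is rewritten first, which is legitimate for context-free derivations and terminates after one such step because GNF makes the new root terminal --- there is a production $v(\vv{x}) \produce \sigma(\vv{\tau})$ with $\sigma(\vv{\tau})[\vv{x} \leftarrow \vv{s}] \to^*_G t$; and on the subtyping side, because $v \notin \Sigma_\top$, the query $v(\vv{s}) \subtype_\Delta t$ can be discharged only by \textsc{Super}, via the inheritance $v(\vv{s}) \inherits \sigma(\vv{\tau})[\vv{x} \leftarrow \vv{s}]$ afforded by some production, after which the new root $\sigma$ is terminal and only \textsc{Var} can fire, forcing $t = \sigma(\vv{t})$ and reducing to the child queries $\tau_i[\vv{x} \leftarrow \vv{s}] \subtype_\Delta t_i$. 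Hence a single \textsc{Super}/\textsc{Var} pair consumes precisely one grammar production and then recurses on children of strictly smaller height, so the induction on $\height(t)$ closes. Instantiating the equivalence at $s = t_0 = t_\bot$ gives $L(G) = L_{t_\bot}(\Delta)$ by \cref{definition:class:table:language}.

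It remains to check that $\Txm \vdash \Delta$. Well-formedness is immediate: nothing is contravariant, so every position inside any pattern over $\Sigma \cup V \cup X$ is positive, and each covariant class parameter $x_i$ of $v$ may therefore sit anywhere inside the supertype $\sigma(\vv{\tau})$; and inheritance is acyclic because every supertype is rooted at a terminal class, which itself has no supertype, so the supertype-root relation halts after one step. The class table uses expansive inheritance (a variable may recur inside the arguments of its own supertype, as in the curiously recurring pattern) and multiple instantiation inheritance (the same terminal $\sigma$ may be a supertype of $v$ with several distinct argument vectors, one per production $v(\vv{x}) \produce \sigma(\vv{\tau})$), but never contravariance, so $\Delta$ is a legal $\Txm$ class table.

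I expect the main obstacle to be the bookkeeping in the correctness equivalence --- in particular, pinning down the derivation-permutation step and arranging the induction so that each ``\textsc{Super} then \textsc{Var}'' segment of a subtyping derivation corresponds to exactly one production application followed by a descent into strictly shorter subtrees. This is precisely where GNF is essential: without a terminal head one could face an arbitrarily long chain of \textsc{Super} steps with no accompanying decrease in $\height(t)$, and the induction would not be well-founded.
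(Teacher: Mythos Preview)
Your proposal is correct and follows essentially the same approach as the paper: encode each GNF production $v(\vv{x}) \produce \sigma(\vv{\tau})$ directly as an inheritance rule, set $t_\bot = t_0$ and $\Sigma_\top = \Sigma$, and prove $s \to_G^* t \Leftrightarrow s \subtype_\Delta t$ for all tree forms $s$ and terminal trees $t$ by induction on $\height(t)$. The only difference is cosmetic: the paper makes the parameters of variable classes \emph{invariant} (terminal classes covariant), whereas you make all parameters covariant; since every supertype in a query lies in $\Sigma^\triangle$, a variable never sits at the root on the right-hand side, so the \textsc{Var} rule is never exercised at a variable-rooted pair and the two variance choices yield the same language and both pass the well-formedness checks.
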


\paragraph*{Intuition}
As in the regular case (\cref{theorem:regf:lower}), we want to use subtyping
to simulate grammar derivation.
Also, as before, we directly encode productions as inheritance rules and use
covariance to enable recursive subtyping.
In this case, however, we harness expansively-recursive inheritance to encode
the slightly more complex structure of context-free productions.

\paragraph*{Example}
Consider our palindrome running example.
The context-free grammar productions in \cref{eq:example:cff:grammar} (shown in \cref{section:forests}),
together with the initial tree $t_0=v_0E$, derive palindromes over $\{a,b\}$ terminated
by end-marker $E$.
We already encoded this grammar as a program in abstract syntax in
\cref{eq:example:program} (shown in \cref{section:subtyping})
and in \CSharp in \cref{lst:example:cff:program} (shown in \cref{section:aa}).
Notice that the parameterized variable $v_0(x)$ is encoded by the polymorphic
\inline[style=csharp]{interface v0<x>}.

\begin{proof}[Proof of \cref{theorem:cff:gnf:lower}]
  Let~$G=⟨Σ, V, t₀, R⟩$ be an extended context-free tree grammar in GNF.
  We encode grammar~$G$ as class table~$Δ$,~$\Txm⊢Δ$, such that the language of
  $Δ$ (over~$t_⊥=t₀$) is exactly the extent of the grammar,~$L(G)$.

  \proofpart{Construction}
  Class names are drawn from~$Γ=Σ∪V$. The rank of each type~$γ∈Σ∪V$
  is the same as its rank in the grammar. The parameters of class~$γ$ are covariant if~$γ∈Σ$,
  or invariant if~$γ∈V$.
  The subtype in queries is~$t_⊥=t₀$, and supertype names are drawn from~$Σ_⊤=Σ$.
  For every production~$v(\vv x) \produce σ(\vv{τ})∈R$, introduce inheritance rule
  $v(\vv{∘ x}):σ(\vv{τ})$ to~$Δ$.

  \proofpart{Correctness of the construction}
  We show that tree form~$t$ derives terminal tree~$t'∈Σ^▵$ if and only if type~$t$ is a subtype of type~$t'$:
  \begin{equation}\label{eq:statement:3}
    t→_G^* t'~⇔~t \subtype t',~t'∈Σ^\triangle
  \end{equation}
  Proof is by induction on the height of~$t'$,~$n=\height(t')$.

  If~$n=0$, then~$t'$ is a leaf,~$t'=σ$.
  Tree form~$t$ derives~$σ$ only if~$t=σ$ (in which case,~$t \subtypeσ$ is immediate),
  or if the root of~$t$ is a variable~$v$, and there is a production~$v(\vv x) \produceσ∈R$.
  In the latter case, class~$v$ inherits class~$σ$, so type~$t$ is still a subtype of~$σ$: \[
    v(\vv t)→_G^*σ~⇔~v(\vv x) \produceσ∈R~⇔~
    v(\vv{∘ x}):σ∈Δ~⇔~v(\vv t) \subtype \sigma
\] Assume \cref{eq:statement:3} holds for all trees of height~$n$,
  and let us prove this for tree~$t'=σ(\vvp{t})$ of height~$n+1$.
  Let~$t=ξ(\vv t)$.
  If the root of~$t$ is a terminal node~$ξ∈Σ$, then it must to be~$σ$ to derive~$t'$.
  To complete the derivation, we require the children of~$t$ to derive the children of~$t'$,
  $\vv t→_G^* \vvp{t}$. In this case, type~$σ(\vv t)$ is a subtype of type
  $σ(\vvp{t})$, by applying the \textsc{Var} rule and~$σ$'s covariance to
  get~$\vv t \subtype \vvp{t}$, which is true by the inductive assumption, given that~$\height(\vvp{t})\le n$: \[
σ(\vv t)→_G^*σ(\vvp{t})~⇔~\vv t→_G^* \vvp{t}~⇔~
    \vv t \subtype \vvp{t}~⇔~σ(\vv t) \subtypeσ(\vvp{t})
\] Otherwise the root of~$t$ is a variable,~$ξ=v∈V$.
  To derive~$t'$, there must be a production~$ρ=v(\vv x) \produceσ(\vv{τ})∈R$,
  and its application to~$t$ has to produce~$\vvp{t}$, \[
    \vv{τ}[\vv x←\vv t]→_G^* \vvp{t}.
\] Given production~$ρ$, the construction introduces the corresponding
  inheritance rule~$v(\vv{∘ x}):σ(\vv{τ})∈Δ$, so type~$t$ is a subtype of type~$t'$:
  By applying the \textsc{Super} rule, replacing~$v(\vv t)$ with~$σ(\vv{τ}[\vv x←\vv t])$,
  and then the \textsc{Var} rule, to reduce the query to~$\vv{τ}[\vv x←\vv t] \subtype \vvp{t}$,
  which is true by the inductive assumption since~$\height(\vvp{t})\le n$: \[
    \begin{aligned}
      v(\vv t)→_G^*σ(\vvp{t})~&⇔~
      v(\vv x) \produceσ(\vv{τ})~∧~\vv{τ}[\vv x←\vv t]→_G^* \vvp{t} ⏎
      &⇔~v(\vv x):σ(\vv{τ})~∧~τ[\vv x←\vv t] \subtype \vvp{t} ⏎
      &⇔~v(\vv t) \subtypeσ(\vvp{t})
    \end{aligned}
\] By applying \cref{eq:statement:3} to~$t_⊥=t₀$, we get that~$t₀→_G^* t~⇔~t_⊥\subtype t$;
  thus,~$L(G)=L_{t_⊥}(Δ)$ by definition.
\end{proof}

\begin{theorem}\label{theorem:cff:gnf:upper}
  Non-contravariant class tables describe context-free grammars in GNF: \[
𝓛(\Txm)⊆\|NCF\sub{GNF}|
\] \end{theorem}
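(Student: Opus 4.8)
The plan is to adapt the construction used for \cref{theorem:regf:upper} so that it produces a \emph{context-free} tree grammar in GNF instead of a regular one, the new feature being that the grammar's parameters carry the ``deferred'' type arguments that accumulate on the left-hand side of a subtyping query under expansive inheritance. As in that proof, a grammar variable is a minimal set~$Q$ of queries that a derived tree must satisfy; because the type system is non-contravariant, no query ever reverses direction under the \textsc{Var} rule, so queries are of the form~$t \subtype x$ or, coming from invariant parameter positions, $t = x$---never~$t \suptype x$. What changes is that the types~$t$ on the left of a query descending from~$t_\bot$ may now grow without bound. The key structural observation is that such a type always splits as a \emph{pattern} applied to deferred subtrees, where the pattern---the part strictly above the deferred material---is a subexpression occurring in a supertype declaration of~$\Delta$ (or a subterm of~$t_\bot$), of which there are only finitely many. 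A grammar variable of~$G$ is therefore a set of such patterned queries, and its formal parameters~$\vv{y}$ hold the deferred subtrees. The initial tree form is~$t_\bot$ itself, so~$G$ is an extended CFTG (\cref{definition:ecftg}); \cref{lemma:ecftg} then converts it into an ordinary CFTG in GNF recognizing the same forest.

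The productions are built exactly as in \cref{theorem:regf:upper}, but with parameters threaded through: for a variable~$v(\vv{y}) = Q$ and a terminal~$\sigma \in \Sigma_\top$ of rank~$n$, add~$v(\vv{y}) \produce \sigma(\vv{v})$ precisely when, under the assumption that the tree being derived has root~$\sigma$, the queries in~$Q$ can be discharged by a run of \textsc{Super} climbs followed by a single \textsc{Var} decomposition into the subconfigurations~$\vv{v}$, the deferred subtrees being passed into their parameters in the same way the \textsc{Var} rule propagates actual type arguments. Two points then need checking. \emph{Finiteness}: absent contravariance, the pattern of a reachable left-hand side ranges over a fixed finite set, so~$V$ is finite (each variable is a subset of a finite set of patterned queries) and hence so is~$R$; both are effectively computable since, as \citet{Kennedy:Pierce:07} observed for the \textsc{Super}/\textsc{Var} rules, only the finitely many relevant supertypes and decompositions can occur. \emph{Greibach normal form}: since mixin inheritance is excluded, no supertype expression has a bare parameter at its root, so every \textsc{Super} climb lands on a class-rooted type and every production~$v(\vv{y}) \produce \sigma(\vv{v})$ carries the terminal~$\sigma \in \Sigma$ at its root. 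Correctness then repeats the induction behind \cref{eq:statement:2}: by induction on~$\height(t')$, a configuration~$v(\vv{y}) = Q$ derives~$t'$ (after binding~$\vv{y}$ to the current deferred subtrees) if and only if substituting the derived tree for~$x$ makes every query in~$Q$ true; applied to~$v_0 = \{t_\bot \subtype x\}$ this yields~$L(G) = L_{t_\bot}(\Delta)$, so~$L_{t_\bot}(\Delta) \in \|NCF\sub{GNF}|$, and together with \cref{theorem:cff:gnf:lower} the inclusion becomes an equality.

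The hard part is getting the interaction between deferred subtrees, context-freeness, and GNF exactly right. The delicate case is a \textsc{Var} step that exposes a position whose left-hand component is an entire deferred subtree~$s$ rather than a freshly produced type: in a \emph{covariant} position the output there must be \emph{some} supertype of~$s$, which looks as though it required processing the unknown contents of~$s$, while in an \emph{invariant} position the output must \emph{equal}~$s$ verbatim. In both cases the resolution rests on the fact that~$s$ itself splits as a pattern---again drawn from the same finite set---applied to earlier deferred subtrees, and that, mixin inheritance being absent, this pattern is ``eventually class-rooted'', bottoming out at subterms of the ground type~$t_\bot$; hence ``produce a supertype of~$s$'' is just another reachable configuration, and ``reproduce~$s$ verbatim'' is realized by a GNF derivation that emits the (statically known) pattern and copies the relevant parameters into its holes, never needing a bare-parameter-rooted production. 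Making this precise requires showing that the unfolding of a bare-parameter left-hand side is well-founded and stays within the finite configuration set, and that multiple instantiation inheritance only adds \emph{alternative} productions---a disjunction over the class-rooted supertypes of the required shape---rather than new conjunctions of queries. Verifying that the non-contravariance hypothesis is exactly what keeps the pattern set finite is where the argument really lives.
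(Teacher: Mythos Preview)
Your approach is workable in spirit but substantially heavier than the paper's, and the ``hard part'' you flag is precisely where your sketch stops short of a proof while the paper's construction dissolves the difficulty by a different choice of encoding.

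The paper does \emph{not} adapt the query-set machinery of \cref{theorem:regf:upper}. Instead it observes that in the non-contravariant setting each node of the derivation carries a \emph{single} query---never a set---and that this query is determined entirely by the root class name of the left-hand type together with the ambient variance ($+$ or~$\circ$). Accordingly the grammar has exactly two variables $\gamma_+$ and $\gamma_\circ$ per class name $\gamma\in\Gamma$, with $\rank(\gamma_\circ)=\rank(\gamma)$ and $\rank(\gamma_+)=2\rank(\gamma)$. The initial tree form is the structural encoding $(t_\bot)_+$, defined recursively by $\gamma(\vv t)_+ = \gamma_+(\vv{t_+},\vv{t_\circ})$ and $\gamma(\vv t)_\circ = \gamma_\circ(\vv{t_\circ})$; productions are $\sigma_\circ(\vv{x_\circ})\produce\sigma(\vv{x_\circ})$ and, for every transitive inheritance $\gamma(\vv x):^*\sigma(\vv\tau)$, the rule $\gamma_+(\vv{x_+},\vv{x_\circ})\produce\sigma(\vv{\tau_{\vof{\sigma}}})$. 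That is all.

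This design makes your ``delicate case'' vanish. When a \textsc{Var} step would expose a bare deferred subtree, what actually sits at that position in the tree form is not an opaque parameter but the \emph{encoded} subtree $(t_j)_+$ or $(t_j)_\circ$, whose root is already a grammar variable $\gamma'_+$ or $\gamma'_\circ$ ready to be rewritten. The doubling of arity in $\gamma_+$ is exactly what lets the construction choose, at the moment of decomposition, whether the $j$-th child should continue covariantly or invariantly---no pattern-tracking in the variable state, no inspection of parameter contents, no well-foundedness argument for ``unfolding a bare-parameter left-hand side''. Your resolution (``$s$ itself splits as a pattern\ldots'') implicitly requires the grammar to know the shape of what a parameter holds, which a CFTG cannot do; the paper instead arranges that the parameter always holds a variable-rooted tree form, so there is nothing to know. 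If you push your idea to its conclusion---encode deferred subtrees as tree forms rather than ground types---you recover the paper's construction, at which point the query-set layer and the finite-pattern bookkeeping become redundant.
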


\paragraph*{Intuition}
For starters, assume that all the type parameters in the class table are covariant
(no invariance).
Consider the subtyping query $t \subtype t'$ where $t=\gamma(\vv{t})$ and $t'=\gamma'(\vv{t'})$.
To prove this query, we first use the \textsc{Super} rule to replace $t$ with $\gamma'(\vv{\tau}[\vv{x}\leftarrow\vv{t}])$,
assuming that class $\gamma(\vv{x})$ transitively inherits ($:^*$) the type $\gamma'(\vv{\tau})$.
In the next step, we use the \textsc{Var} rule to recursively validate the children of $\gamma'$,
$\vv{\tau}[\vv{x}\leftarrow\vv{t}] \subtype \vv{t'}$.
To simulate subtyping with grammar derivation, we simply encode transitive
inheritance as productions, i.e., if $\gamma(\vv{x}) :^* \gamma'(\vv{\tau})$,
$\gamma(\vv{x}) \produce \gamma'(\vv{\tau})$.
Then, the query $t \subtype t'$ is true if and only if $t \rightarrow^* t'$
because, as in the subtyping case, $t=\gamma(\vv{t})$ derives $\gamma'(\vv{\tau}[\vv{x}\leftarrow\vv{t}])$
and then $\vv{\tau}[\vv{x}\leftarrow\vv{t}]$ recursively derive $\vv{t'}$: \[
	\begin{tabular}{l@{}l@{}l}
		$\gamma(\vv{t}) \subtype \gamma'(\vv{t'})$ &$~\Leftrightarrow~ \gamma(\vv{x}) :^* \gamma'(\vv{\tau})$ &
		$~\wedge~ \vv{\tau}[\vv{x}\leftarrow\vv{t}] \subtype \vv{t'}$ \\
		$\gamma(\vv{t}) \rightarrow^* \gamma'(\vv{t'})$ &$~\Leftrightarrow~ \gamma(\vv{x}) \produce \gamma'(\vv{\tau})$ &
		$~\wedge~ \vv{\tau}[\vv{x}\leftarrow\vv{t}] \rightarrow^* \vv{t'}$
	\end{tabular}
\]

Nonetheless, this clean correspondence between subtyping and context-free grammars
collapses when invariance is re-introduced to the class table.
For instance, consider the following inheritance rules:
\begin{equation}\label{eq:example:class:table:2}
  \begin{aligned}
    a(+x,∘ y) &:∅⏎
    b(∘ z) &: a(z, z)
  \end{aligned}
\end{equation}
In the query~$b(t) \subtype a(t₁, t₂)$, we use inheritance and decomposition to
deduce both~$t \subtype t₁$ and~$t=t₂$.
If type~$t$ were to be encoded directly as a tree form, as before, then it should be able to derive
the two trees~$t₁$ and~$t₂$, while simulating both covariance and invariance simultaneously.

To reproduce this behavior in a context-free tree grammar, we encode the kind of position
of type $t$ in the tree form that represents it.
Let $\gamma$ be the root of $t$.
If $t$ appears in a covariant position, then $\gamma$ is encoded by variable $\gamma_+$,
or otherwise by variable $\gamma_\circ$ in an invariant position.
If $t$ is in a covariant position, $t \subtype t_1$, variable $\gamma_+$ derives
the tree $t'$ if and only if $t' \subtype t_1$;
otherwise, if $t$ is in an invariant position, $t = t_2$, variable $\gamma_\circ$ derives
the tree $t'$ if and only if $t' = t_2$.

Variable $\gamma_\circ$ simply derives the node $\gamma$.
If type $t$ is in an invariant position, $t=t_2$, we encode $t$
by annotating each of its nodes with $\circ$, yielding the tree form $t_\circ$.
This tree form derives, recursively, the tree $t$ exactly, so \[
	t_\circ \rightarrow^* t' ~\Leftrightarrow~ t'=t \quad\Rightarrow\quad t_\circ \rightarrow^* t_2 ~\Leftrightarrow~ t=t_2
\] as required.
On the other hand, if $t$ appears in a covariant position, $t \subtype t_1$,
then its children may appear in either subtyping or equivalence queries (or even both simultaneously).
Let $t=\gamma(\vv{t})$ and $t_i$ be some child of $\gamma$.
We encode $t_i$ as two tree forms, ${t_i}_+$ and ${t_i}_\circ$.
Then, we can use ${t_i}_+$ whenever $t_i$ is covariant, or otherwise ${t_i}_\circ$ when $t_i$
is invariant.
Thus, type $t=\gamma(\vv{t})$ is encoded as the tree form $t_+=\gamma_+(\vv{t_+},\vv{t_\circ})$,
where children $\vv{t_+}$ are, recursively, encoded in a similar way, and $\vv{t_\circ}$ are
encoded as mentioned above.

Continuing our example, we get that class~$b$ in \cref{eq:example:class:table:2} is
encoded by the following productions: \[
  b₊(z₊, z_∘) \produce a(z₊, z_∘) \qquad b_∘(z_∘) \produce a(z_∘, z_∘)
\] Variable~$b₊$ encodes class~$b$ in covariant positions, so the first argument to type~$a$
is in a covariant position, and~$z₊$ is used, while the second position is invariant, and~$z_∘$ is
picked instead.
Parameter~$z_∘$ in variable~$b_∘$ fits both positions, as the initial position (of root~$a$)
is invariant.

\paragraph*{Additional notations}
To encode unground type~$τ$ as invariant tree form~$τ_∘$, add
annotation~$∘$ to all its nodes:
\begin{equation}\label{eq:invariant:encoding}
τ_∘=\begin{cases}
    x_∘ &τ=x ⏎
γ_∘(\vv{τ_∘}) &τ=γ(\vv{τ})
  \end{cases}
\end{equation}
We get the covariant tree form~$τ₊$ by encoding both the covariant and invariant
versions of its sub-trees:
\begin{equation}\label{eq:covariant:encoding}
τ₊=\begin{cases}
    x₊ &τ=x ⏎
γ₊(\vv{τ₊}, \vv{τ_∘}) &τ=γ(\vv{τ})
  \end{cases}
\end{equation}
To decode an invariant tree form~$τ_∘$ into the tree it represents~$τ$,
simply remove all the~$∘$ annotations from its nodes.
If the tree form is covariant~$τ₊$, also remove any~$+$ annotations,
and trim-out half of the sub-trees:
\begin{equation}\label{eq:invariant:covariant:decoding}
τ=\begin{cases}
    x &τ_∘=x_∘~∨~τ₊=x₊ ⏎
γ(\vv{τ}) &τ_∘=γ_∘(\vv{τ_∘})~∨~τ₊=γ₊(\vv{τ₊}, \vv{τ_∘})
  \end{cases}
\end{equation}

Observe that if variables~$\vv{x₊}$ in covariant tree form~$τ₊$ are substituted with covariant trees~$\vv{t₊}$,
and the invariant variables~$\vv{x_∘}$ are substituted with the corresponding invariant trees~$\vv{t_∘}$,
then the resulting tree~$τ₊[ \vv{x₊}←\vv{t₊},\vv{x_∘}←\vv{t_∘} ]$
covariantly encodes type~$τ[ \vv x←\vv t ]$. The same holds for invariant tree forms:
\begin{equation}\label{eq:auxiliary:lemma}
τ_◇[ \vv{x₊}←\vv{t₊},\vv{x_∘}←\vv{t_∘} ]=(τ[ \vv x←\vv t ])_◇,
~◇∈❴+,∘ ❵
\end{equation}
(The proof is technical, by induction on the height of~$τ_◇$.)

\begin{proof}[Proof of \cref{theorem:cff:gnf:upper}]
  Given class table~$Δ$,~$\Txm⊢Δ$, fixed subtype~$t_⊥$, and
  super-alphabet~$Σ_⊤$, we construct extended context-free tree
  grammar~$G=⟨Σ, V, t₀, R⟩$ that describes the same language
  as~$Δ$,~$L(G)=L_{t_⊥}(Δ)$.

  In this proof we use covariant and invariant tree encoding of types,
  defined in \cref{eq:invariant:encoding,eq:covariant:encoding,eq:invariant:covariant:decoding}.

  \proofpart{Construction}
  Let~$Σ=Σ_⊤$.
  For every class name~$γ∈Γ$, add variables~$γ₊$ and~$γ_∘$ to~$V$, where~$\rank(γ_∘)=\rank(γ)$
  but~$\rank(γ₊)=2·\rank(γ)$.
  Let~$t₀=(t_⊥)₊$.
  For each terminal type~$σ∈Σ_⊤$, add rule~$σ_∘(\vv{x_∘}) \produceσ(\vv{x_∘})$
  to~$R$.
  For each class name~$γ∈Γ$ and terminal type~$σ∈Σ_⊤$ of rank~$n$, if
  class~$γ$ inherits (or, is)~$σ$,~$γ(\vv x):^*σ(\vv{τ})$,
  introduce the following production to~$R$: \[
γ₊(\vv{x₊}, \vv{x_∘}) \produceσ(\vv{τ_{\vof[\relax]{σ}}})=σ({(τ₁)}_{\vof[1]{σ}},…,{(τₙ)}_{\vof[n]{σ}})
\] (Recall that~$\vof{σ}$ denotes the sequence of parameter variances of class~$σ$,
  and~$\vof[i]{σ}$ denotes the variance of the \nth{i} parameter.)
  As inheritance is acyclic, transitive inheritance~$:^*$ is finite, so the construction of~$R$ must terminate.

  \proofpart{Correctness of the construction}
  Each variable~$σ_∘$, for terminal node~$σ∈Σ$, directly derives~$σ$.
  Therefore, any invariant tree form~$t_∘$ over~$Σ$,~$t∈Σ^▵$ derives the
  terminal tree~$t$, and only it:
  \begin{equation}\label{eq:fact:1}
    t_∘→_G^* t',~t∈Σ^▵~⇔~t=t'
  \end{equation}
  Conversely, let~$t₊=γ₊(\vv{t₊},\vv{t_∘})$ be a covariant tree form.
  We show that~$t₊$ derives tree~$t'$ if and only if type~$t$ is a subtype of type~$t'$:
  \begin{equation}\label{eq:statement:4}
    t₊→_G^* t'~⇔~t \subtype t'
  \end{equation}
  \cref{eq:statement:4} is proved by induction on the height of~$t'$,~$n=\height(t')$.

  If~$n=0$, then~$t'$ is a leaf,~$t'=σ$.
  The root of~$t₊$, variable~$γ₊$, derives~$σ$ if and only if~$γ=σ$,
  or if class~$γ$ inherits class~$σ$,~$γ(\vv x):^*σ$; in both cases
  the construction introduces production~$γ₊(\vv{x₊},\vv{x_∘}) \produceσ$ to~$R$: \[
γ₊(\vv{t₊},\vv{t_∘})→_G^*σ~⇔~γ₊(\vv{x₊},\vv{x_∘}) \produceσ∈R~⇔~
γ(\vv{◇ x}):^*σ~⇔~γ(\vv t) \subtype \sigma
\] Assume \cref{eq:statement:4} holds for~$n≥0$, and let us prove it for tree
  $t'$ of height~$n+1$,~$t'=σ(\vvp{t})$.
  As grammar~$G$ is in GNF, tree form~$t₊$ can
  derive terminal tree~$t'$ if and only if
  \1~there is an~$R$ production that derives~$γ₊$, the root of~$t₊$, to~$σ$, the root of~$t'$,~$
γ₊(\vv{x₊},\vv{x_∘}) \produceσ(\vv{τ_{\vof[\relax]{σ}}})∈R,
  $ such that \2~the resulting sub-tree forms
  $\vv{τ_◇}$, substituted with~$\vv{t₊}$ and~$\vv{t_∘}$, derive~$σ$'s children,~$\vvp{t}$:
  \begin{equation}\label{eq:part:1}
    \begin{aligned}
      &γ₊(\vv{t₊},\vv{t_∘})→_G^*σ(\vvp{t})~⇔⏎
      &\qquadγ₊(\vv{x₊},\vv{x_∘}) \produceσ(\vv{τ_{\vof[\relax]{σ}}})∈R~∧~
      \vv{τ_{\vof[\relax]{σ}}}[\vv{x₊}←\vv{t₊},\vv{x_∘}←\vv{t_∘}]→_G^* \vvp{t}
    \end{aligned}
  \end{equation}

  Now focus on tree forms~$\vv{τ_{\vof[\relax]{σ}}}[\vv{x₊}←\vv{t₊},\vv{x_∘}←\vv{t_∘}]$
  in \cref{eq:part:1}.
  Following the observation of \cref{eq:auxiliary:lemma}, these forms encode the unground types~$\vv{τ}$
  assigned with types~$\vv t$, while respecting the original variances of~$σ$,~$\vof{σ}$: \[
    \vv{τ_{\vof[\relax]{σ}}}[\vv{x₊}←\vv{t₊},\vv{x_∘}←\vv{t_∘}]=
    (\vv{τ}[\vv x←\vv t])_{\vof{σ}}
\] Let us then simplify \cref{eq:part:1} by writing:
  \begin{equation}\label{eq:part:2}
    (\ref{eq:part:1})~⇔~
γ₊(\vv{x₊},\vv{x_∘}) \produceσ(\vv{τ_{\vof[\relax]{σ}}})∈R~∧~
    (\vv{τ}[\vv x←\vv t])_{\vof{σ}}→_G^* \vvp{t}
  \end{equation}

  Focus on tree forms~$(\vv{τ}[\vv x←\vv t])_{\vof{σ}}$ in \cref{eq:part:2}.
  If tree form~$(τᵢ[\vv x←\vv t])_{\vof[i]{σ}}$ is invariant,
  $\vof[i]{σ}=∘$, then by \cref{eq:fact:1},
  \begin{equation}\label{eq:part:3}
    (τᵢ[\vv x←\vv t])_∘→_G^* {t'}ᵢ~⇔~
τᵢ[\vv x←\vv t]={t'}ᵢ
  \end{equation}
  Otherwise the tree form is covariant,~$\vof[i]{σ}=+$.
  As~${t'}ᵢ$ is a sub-tree of~$t'$, its height is~$n$ at most, so we can apply the inductive assumption
  to get
  \begin{equation}\label{eq:part:4}
    (τᵢ[\vv x←\vv t])₊→_G^* {t'}ᵢ~⇔~τᵢ[\vv x←\vv t] \subtype {t'}ᵢ
  \end{equation}

  Production~$γ₊(\vv{x₊},\vv{x_∘}) \produceσ(\vv{τ_{\vof[\relax]{σ}}})$ is induced by
  the construction following the inheritance rule~$γ(\vv{◇ x}):^*σ(\vv{τ})$.
  Thus, type~$γ(\vv t)$ is a subtype of~$σ(\vv{τ}[\vv x←\vv t])$ by inheritance,
  which is a subtype of~$t'=σ(\vvp{t})$ by decomposition.
  In \cref{eq:part:3,eq:part:4}, we showed that child types~$\vv{τ}[\vv x←\vv t]$ and~$\vvp{t}$
  respect~$σ$'s variances, so decomposition can be applied.
  Let us extend \cref{eq:part:2}:
  \begin{equation}\label{eq:part:5}
    \begin{aligned}
      (\ref{eq:part:2})~&⇔~γ(\vv{◇ x}):^*σ(\vv{τ})~∧~∀ i,~\begin{cases}
τᵢ[ \vv x←\vv t ]={t'}ᵢ & \vof[i]{σ}=∘ ⏎
τᵢ[ \vv x←\vv t ] \subtype {t'}ᵢ & \vof[i]{σ}=+
      \end{cases} ⏎
      &⇔~γ(\vv t) \subtypeσ(\vvp{t})
    \end{aligned}
  \end{equation}
  Overall, from \cref{eq:part:1,eq:part:2,eq:part:5}, we have \[
γ₊(\vv{t₊},\vv{t_∘})→_G^*σ(\vvp{t})~⇔~γ(\vv t) \subtypeσ(\vvp{t}),
\] concluding the induction.

  Finally, we can apply \cref{eq:statement:4} to~$t₀={(t_⊥)}₊$ to get
  $t₀→_G^* t~⇔~t_⊥\subtype t$; thus,~$L(G)=L_{t_⊥}(Δ)$
  by definition.
\end{proof}

From \cref{theorem:cff:gnf:lower,theorem:cff:gnf:upper}, it follows that the computational class
of type system \Txm is described by context-free tree grammars in GNF.
\begin{corollary}\label{theorem:cff:gnf}
  Non-contravariant class tables describe context-free forests
  that have grammars in GNF: \[
𝓛(\Txm)=\|NCF\sub{GNF}|
\] \end{corollary}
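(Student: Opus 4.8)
The plan is to obtain the equality by the same sandwiching argument used for the regular case in \cref{theorem:regf}: bound $𝓛(\Txm)$ by $\|NCF\sub{GNF}|$ from both directions. First I would apply \cref{theorem:cff:gnf:lower} for the lower bound $\|NCF\sub{GNF}|⊆𝓛(\Txm)$, so that every context-free forest presented by a GNF grammar is the language~$L_{t_⊥}(Δ)$ of some class table~$Δ$ with~$\Txm⊢Δ$. Then I would apply \cref{theorem:cff:gnf:upper} for the matching upper bound $𝓛(\Txm)⊆\|NCF\sub{GNF}|$, so that every non-contravariant class table, together with a fixed subtype~$t_⊥$ and super-alphabet~$Σ_⊤$, describes a forest that admits a GNF grammar. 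Chaining the two inclusions yields $\|NCF\sub{GNF}|⊆𝓛(\Txm)⊆\|NCF\sub{GNF}|$, hence $𝓛(\Txm)=\|NCF\sub{GNF}|$.

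The single point that deserves a remark is the bridge between extended and ordinary grammars. \cref{theorem:cff:gnf:lower} consumes an \emph{extended} CFTG in GNF as input, whereas the class $\|NCF\sub{GNF}|$ is defined through ordinary CFTGs in GNF; dually, the grammar produced in \cref{theorem:cff:gnf:upper} is extended, since its initial tree is $t₀=(t_⊥)₊$ rather than a bare variable. Both gaps are closed by \cref{lemma:ecftg}: every ECFTG in GNF has an equivalent CFTG in GNF defining the same forest. Consequently ``forest given by a GNF grammar'' and ``forest given by an extended GNF grammar'' name the \emph{same} class, and the two theorems genuinely bound the same set $\|NCF\sub{GNF}|$.

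I do not expect any real obstacle: the corollary is a pure composition of results already in hand, and whatever difficulty there was has been absorbed into the proofs of \cref{theorem:cff:gnf:lower,theorem:cff:gnf:upper} and of \cref{lemma:ecftg}. The only thing I would make sure is recorded upstream is that the construction of \cref{theorem:cff:gnf:lower} stays inside the fragment \Txm---it uses expansively-recursive inheritance (the X feature) and multiple instantiation inheritance (M) but never a contravariant parameter---so that the lower bound is witnessed within the intended type system, matching the hypothesis $\Txm⊢Δ$ under which \cref{theorem:cff:gnf:upper} establishes the upper bound.
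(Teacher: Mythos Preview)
Your proposal is correct and matches the paper's own proof, which is exactly the two-line sandwich $\|NCF\sub{GNF}|⊆𝓛(\Txm)$ by \cref{theorem:cff:gnf:lower} and $𝓛(\Txm)⊆\|NCF\sub{GNF}|$ by \cref{theorem:cff:gnf:upper}. Your additional remark about bridging extended and ordinary GNF grammars via \cref{lemma:ecftg} is a valid clarification that the paper leaves implicit (it announces the use of ECFTGs at the start of the subsection and relies on \cref{lemma:ecftg} globally rather than re-invoking it in the corollary).
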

\begin{proof}
  $\|NCF\sub{GNF}|⊆𝓛(\Txm)$ by \cref{theorem:cff:gnf:lower};
  $𝓛(\Txm)⊆\|NCF\sub{GNF}|$ by \cref{theorem:cff:gnf:upper};
  $\|NCF\sub{GNF}|⊆𝓛(\Txm)⊆\|NCF\sub{GNF}|~⇒~𝓛(\Txm)=\|CFF\sub{GNF}|$.
\end{proof}

Type system \Tx is the subset of \Txm employing single-instantiation inheritance.
As before (\cref{theorem:dregf}), single instantiation inheritance reduces the complexity of
the type system to its deterministic variant.
\begin{proposition}\label{theorem:dcff}
  Class tables of non-contravariant type systems with single instantiation inheritance
  describe deterministic context-free forests: \[
𝓛(\Tx)=\|DCF|
\] \end{proposition}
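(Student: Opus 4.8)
The plan is to obtain both inclusions of $𝓛(\Tx)=\|DCF|$ from the Greibach-level results \cref{theorem:cff:gnf:lower,theorem:cff:gnf:upper}, via the same correspondence that takes \cref{theorem:regf} to \cref{theorem:dregf}: in the present setting, single instantiation inheritance of a class table is mirrored by determinism of the associated context-free tree grammar, and vice versa.

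For the lower bound $\|DCF|⊆𝓛(\Tx)$, I would start from a deterministic context-free forest. By definition it is generated by a deterministic context-free tree grammar, and by \cite{Guessarian:83} that grammar has an equivalent grammar $G$ in GNF, which may be taken deterministic. Feeding $G$ into the construction of \cref{theorem:cff:gnf:lower} produces a well-formed, non-contravariant class table $\Delta$ whose inheritance rules are exactly the rules $v(\vv{∘ x}):σ(\vv{τ})$, one per production $v(\vv x)\produceσ(\vv{τ})$ of $G$. Since $G$ is deterministic, the tuple $\vv{τ}$ is uniquely determined by $v$ and $σ$, so $\Delta$ additionally obeys single instantiation inheritance; as it may be expansive, $\Tx⊢\Delta$. \cref{theorem:cff:gnf:lower} already gives $L(G)=L_{t_⊥}(\Delta)$ with $t_⊥=v_0$, so this language lies in $𝓛(\Tx)$.

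For the upper bound $𝓛(\Tx)⊆\|DCF|$, I would take a class table $\Delta$ with $\Tx⊢\Delta$ together with a subtype $t_⊥$, and form the extended context-free tree grammar $G$ of \cref{theorem:cff:gnf:upper}. Its productions have two shapes: one rule $σ_∘(\vv{x_∘})\produceσ(\vv{x_∘})$ per terminal $σ$, and a rule $γ₊(\vv{x₊},\vv{x_∘})\produceσ(\vv{τ_{\vof[\relax]{σ}}})$ for each inheritance derivation $γ(\vv x):^*σ(\vv{τ})$. Two rules of the second kind sharing both the left-hand side $γ₊$ and the right-hand side root $σ$ come from derivations $γ(\vv x):^*σ(\vv{τ})$ and $γ(\vv x):^*σ(\vv{τ}')$; if $σ=γ$ then acyclicity of inheritance makes both derivations reflexive, forcing $\vv{τ}=\vv x=\vv{τ}'$, and if $σ≠γ$ then both are proper, $γ(\vv x):^+σ(\vv{τ})$ and $γ(\vv x):^+σ(\vv{τ}')$, so the single instantiation restriction gives $\vv{τ}=\vv{τ}'$. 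In either case the two rules coincide, so $G$ is a deterministic ECFTG in GNF. It then remains to pass from $G$ to an ordinary CFTG: I would invoke \cref{lemma:ecftg} and check, by inspecting its construction, that it carries determinism over — when the initial tree has a variable root $ξ₀$, each new rule $v_0\produceτ[\vv x←\vv{t₀}]$ retains the terminal root of $τ$, and determinism of $G$ leaves at most one $τ$ per root. This yields a deterministic CFTG whose forest equals $L_{t_⊥}(\Delta)$, i.e.\ a member of $\|DCF|$.

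The step I expect to require the most care — though not a deep difficulty — is the bookkeeping around determinism: confirming that Guessarian's GNF normalization of a deterministic grammar stays deterministic, that the ECFTG-to-CFTG elimination of \cref{lemma:ecftg} preserves it as well, and that the uniqueness invoked in the upper bound genuinely concerns the reflexive-transitive closure $:^*$, which is settled by combining the single instantiation restriction on $:^+$ with the reflexive case being pinned down by acyclicity.
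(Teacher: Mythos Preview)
Your proposal is correct and follows essentially the same route as the paper: both directions are obtained by rerunning the constructions of \cref{theorem:cff:gnf:lower,theorem:cff:gnf:upper} and observing that grammar determinism corresponds exactly to single instantiation inheritance, with the Guessarian citation supplying a deterministic GNF grammar for the lower bound. You are in fact slightly more careful than the paper in two places---the case split $σ=γ$ versus $σ\neq γ$ when arguing uniqueness for $:^*$, and the explicit check that the ECFTG-to-CFTG step of \cref{lemma:ecftg} preserves determinism---both of which the paper leaves implicit.
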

\begin{proof}
  Every deterministic context-free forest has a deterministic tree
  grammar in GNF~\cite{Guessarian:83}.
  Again, by repeating the proof of \cref{theorem:cff:gnf:lower,theorem:cff:gnf:upper}
  with single instantiation inheritance, we get the corresponding results regarding
  DCF.

  In the proof of \cref{theorem:cff:gnf:lower}, we encode each production~$v(\vv x) \produceτ$
  directly as the inheritance rule~$v(\vv{∘ x}) \produceτ$.
  Therefore, the grammar's determinism ensures the resulting class table does not use
  multiple instantiation inheritance: \[
    \infer{
      v(\vv{∘ x}):σ(\vv{τ₁}),σ(\vv{τ₂})~⇒~\vv{τ₁}=\vv{τ₂}
    }{
      \begin{array}{l}
        v(\vv x) \produceσ(\vv{τ₁})~∧~v(\vv x) \produceσ(\vv{τ₂})~⇔~
        v(\vv{∘ x}):σ(\vv{τ₁}),σ(\vv{τ₂}) ⏎
        v(\vv x) \produceσ(\vv{τ₁})~∧~v(\vv x) \produceσ(\vv{τ₂})~⇒~\vv{τ₁}=\vv{τ₂}
      \end{array}
    }
\] In the proof of \cref{theorem:cff:gnf:upper}, we derive \1~$γ_∘$ to~$σ$, which is
  always deterministic, and \2~$γ₊$ to~$σ(\vv{τ_{\vof[\relax]{σ}}})$ if and
  only if~$γ$ inherits~$σ(\vv{τ})$, which is also deterministic when the class table employs
  single instantiation inheritance: \[
    \infer{
γ₊(\vv{x₊},\vv{x_∘}) \produceσ(\vv{(τ₁)_{\vof[\relax]{σ}}})~∧~
γ₊(\vv{x₊},\vv{x_∘}) \produceσ(\vv{(τ₂)_{\vof[\relax]{σ}}})~⇔~
      \vv{(τ₁)_{\vof[\relax]{σ}}}=\vv{(τ₂)_{\vof[\relax]{σ}}}
    }{
      \begin{array}{l}
γ(\vv{◇ x}):^*σ(\vv{τ₁}),σ(\vv{τ₂})~⇔~
γ₊(\vv{x₊},\vv{x_∘}) \produceσ(\vv{(τ₁)_{\vof[\relax]{σ}}})~∧~
γ₊(\vv{x₊},\vv{x_∘}) \produceσ(\vv{(τ₂)_{\vof[\relax]{σ}}}) ⏎
γ(\vv{◇ x}):^*σ(\vv{τ₁}),σ(\vv{τ₂})~⇒~\vv{τ₁}=\vv{τ₂}
      \end{array}
    }\qedhere
\] \end{proof}

	\section{Treetop---a Context-Free API Generator}
	\label{section:generator}
	Upon its introduction~\cite{Fowler:2005}, fluent API was recognized as a promising method for
embedding domain-specific languages.
As fluent API methods are called in a chain, and not imperatively, they can
enforce a DSL syntax, or an API protocol in general, at compile time.
(See a brief introduction of the concept in \cref{section:fluent} and more detailed
discussions by \citet{Gil:19} and \citet{Yamazaki:2019}.)
The subject attracted academic interest, as
researchers introduced new fluent API techniques supporting increasingly
complex DSLs~\cite{Xu:2010,Gil:Levy:2016,Nakamaru:17,Gil:19,Yamazaki:2019,Nakamaru:2020,Gil:20}.
Because the study of fluent API is motivated by its practical applications,
results in the field are given in the form of actual tools: fluent API
generators.
A generator accepts a DSL grammar as input, and encodes it as a
fluent API in the target programming language.
The quality of a generator is determined by its domain, i.e.,
the class of languages (and grammars) for which it can generate fluent APIs.
Generated fluent APIs also need to be practical, in terms of code size and compilation time.
The last two significant works on fluent APIs, published by \citet{Gil:19} and \citet{Yamazaki:2019},
presented the generators Fling and TypeLevelLR
that support all deterministic context-free languages (DCFLs). (\citet{Gil:20} later proved that
the methods implemented in these generators cannot support languages beyond DCFL.)
Programs generated by Fling and TypeLevelLR were shown to compile in linear time,
making both tools practical solutions for embedding DSLs.

The idea of integrating subtyping into fluent APIs was first proposed by \citet{Grigore:2017}.
In his design, the role of the fluent API becomes merely to encode the input DSL program in a type,
so it can be accepted by a subtyping machine.
Grigore's subtyping machine implementation of the CYK parser
was introduced as a part of a fluent API generator that supports all context-free languages.
Moving up from DCFL to CFL is a welcome improvement:
CFLs are described by general context-free grammars, which are more expressive than
their deterministic variants (LR, LL, LALR, etc.) and are easier to compose, as they
produce no conflicts.
Unfortunately, it turned out that Grigore's generator is not practical.
The APIs it generates take a lot of time to compile, and cause the compiler to
crash with stack overflow even for basic examples.

\cref{theorem:cff:gnf:lower} presented in \cref{section:expressiveness} states that any tree grammar in
GNF can be recognized by a subtyping machine with expansive inheritance and
multiple instantiation inheritance.
As all context-free languages have grammars in GNF~\cite{Greibach:65}, this
means we just found a new method for encoding CFLs using subtyping that, in contrast to Grigore's
construction, does not employ contravariance.
Nevertheless, can this method be used in a real programming language, such as~\CSharp?
Unlike \Java,~\CSharp endorses multiple instantiation inheritance
(required by our construction)---but its version of this feature is weaker than the one
used in this paper.
In~\CSharp, if class~$γ$ has (transitive) supertypes~$τ$ and~$τ'$, then there must
not be a substitution~$s$ that unifies~$τ$ and~$τ'$:
\begin{equation}\label{eq:unification}
\begin{aligned}
  \infer{γ:⁺τ'~\checkmark}{τ \not= τ' \wedge γ:⁺τ ~\Rightarrow~ ∄ s,~τ[s]=τ'[s]}
\end{aligned}
\end{equation}
(This trait of~\CSharp was also discussed by \citet{Kennedy:Pierce:07}.)
It is possible that \cref{theorem:cff:gnf:lower} cannot be utilized to its full
potential in~\CSharp, as the construction described in its proof does not
conform to the restriction of \cref{eq:unification} in the general case.
Nevertheless, the construction does work for all CFLs. If we encode a context-free
production~$vx::=αx$ as the inheritance rule~$v(x):αx$, then unification with another rule~$v(x):α'x$ can
only occur for~$α'=α$. To prevent this, we simply ignore duplicated productions.

Thus, \cref{theorem:cff:gnf:lower} paved the way for the development of our new API generator,
Treetop, used to embed CFLs in~\CSharp†{The Treetop project is available at \url{https://github.com/OriRoth/treetop}.}.
Treetop is a~\CSharp \emph{source generator}, a project preprocessor that creates source
files, similarly to \Java's annotations processors.
Source generation is an experimental Visual Studio feature for~\CSharp
introduced in 2020†{For more details, see \url{https://devblogs.microsoft.com/dotnet/introducing-c-source-generators/}.}.
After an initial deployment, Treetop automatically transforms DSL grammars
into DSL APIs during the standard compilation cycle.
Given a context-free grammar~$G$ in file \texttt{MyDSL.cfg} describing a DSL, Treetop
converts it into an equivalent grammar in GNF,
constructs class table~$Δ∈\Txm$ that recognizes the DSL,~$L(Δ)=L(G)$, and
encodes the result as~\CSharp interfaces in a new source file \texttt{MyDSLAPI.cs}.
On request, Treetop also generates a subtyping-based fluent API for the DSL, following Grigore's design.

In theory, the APIs generated by Treetop should be efficient in both space (code size)
and time (compilation time of API invocations).
The class table construction described in the proof of \cref{theorem:cff:gnf:lower}
is straightforward, so the crucial step of the encoding is the
conversion of the input grammar to GNF.
This transformation is known to introduce a polynomial increase in the size of
the grammar~\cite{Blum:99}, which is translated to a similar increase in code size.
An API generated by Treetop is invoked in subtyping queries, whose compilation
time depends on the compiler implementation.
A specialized compiler can reduce Treetop's interfaces and queries
back to CFGs and strings, whose membership problem is solved in
polynomial time~\cite{Valiant:75b}.

\subsection{Treetop Use Case: The Canvas API}
\label{section:use}
\emph{Canvas} is an Android library for drawing two-dimensional
graphics\footnote{The Canvas API is described in \url{https://developer.android.com/reference/android/graphics/Canvas}.}.
Besides its various sketching methods, Canvas makes it possible to \emph{save} and \emph{restore} the current
image layout, with the restriction that ``it is an error to call \cc{Restore()} more times than \cc{Save()} was called''.
\citet{Ferles:20} found that this restriction institutes a context-free protocol
in the Canvas interface.
Let us demonstrate Treetop's capabilities by using it to create a Canvas API that enforces this protocol
at compile time.

First, we describe the Canvas API with a context-free grammar.
To make things simple, we use the \cc{Draw} token to represent any Canvas method other
than \cc{Save} and \cc{Restore}:
\begin{equation}\label{eq:canvas:grammar}
\begin{array}{ll}
	\text{(a)~}\-Canvas- \produce \+Draw+~\-Canvas- \qquad &\text{(b)~}\-Canvas- \produce \+Save+~\-Canvas-~\+Restore+~\-Canvas- \\
	\text{(c)~}\-Canvas- \produce \+Save+~\-Canvas- \qquad &\text{(d)~}\-Canvas- \produce \varepsilon
\end{array}
\end{equation}
The Canvas protocol is manifested in rule (b),
which assures that any call to \cc{Restore} is preceded by, and matched with a prior call to \cc{Save}.
In addition, rule (c) makes it possible to save the picture without restoring it later on.
We encode the formal Canvas grammar of \cref{eq:canvas:grammar} in the file
\texttt{Canvas.cfg}, as shown in \cref{lst:grammar}.
The line \inline{Canvas ::=} represents the $\varepsilon$ production (d).
\begin{code}[backgroundcolor=\color{perfect-grey},caption={Treetop grammar file \texttt{Canvas.cfg}, encoding the Canvas grammar in \cref{eq:canvas:grammar}},label={lst:grammar}]
Canvas ::= Draw Canvas
Canvas ::= Save Canvas Restore Canvas
Canvas ::= Save Canvas
Canvas ::=
\end{code}

When the project is re-compiled, Treetop detects the grammar in file \texttt{Canvas.cfg} 
and generates a new source file \texttt{CanvasAPI.cs} that
contains the Canvas subtyping machine and fluent API.
The content of \texttt{CanvasAPI.cs} is shown in \cref{lst:canvas:generated}.

\begin{code}[float=p,style=csharp,caption={Source file \texttt{CanvasAPI.cs} generated by Treetop, containing the Canvas subtyping
             machine and fluent API},label={lst:canvas:generated}]
namespace CanvasAPI {
  public interface Draw<out _x> {}
  public interface Restore<out _x> {}
  public interface Save<out _x> {}
  public interface Canvas2<_x> : Draw<_x>, Draw<Canvas3<_x>>, Restore<Save<_x>>,
    Restore<Save<Canvas3<_x>>>, Restore<Canvas<Save<_x>>>,
    Restore<Canvas<Save<Canvas3<_x>>>>, Save<_x>, Save<Canvas3<_x>> {}
  public interface Canvas<_x> : Draw<_x>, Draw<Canvas3<_x>>, Restore<Save<_x>>,
    Restore<Save<Canvas3<_x>>>, Restore<Canvas<Save<_x>>>,
    Restore<Canvas<Save<Canvas3<_x>>>>, Save<_x>, Save<Canvas3<_x>> {}
  public interface Canvas3<_x> : Draw<_x>, Draw<Canvas3<_x>>, Restore<Save<_x>>,
    Restore<Save<Canvas3<_x>>>, Restore<Canvas<Save<_x>>>,
    Restore<Canvas<Save<Canvas3<_x>>>>, Save<_x>, Save<Canvas3<_x>> {}
  public interface BOTTOM {}
  public interface Canvas : Canvas2<BOTTOM> {}
  namespace FluentAPI {
    public class Wrapper<T> {
      public readonly System.Collections.Generic.List<CanvasToken> values =
        new System.Collections.Generic.List<CanvasToken>();
      public Wrapper<T> AddRange<S>(Wrapper<S> other) {
        this.values.AddRange(other.values);
        return this;
      }
      public Wrapper<T> Add(CanvasToken value) {
        values.Add(value);
        return this;
      }
      public System.Collections.Generic.List<CanvasToken> Done<API>() where API : T {
        return values;
      }
    }
    public enum CanvasToken { Draw, Restore, Save, }
    public static class Start {
      public static Wrapper<Draw<BOTTOM>> Draw() {
        return new Wrapper<Draw<BOTTOM>>().Add(CanvasToken.Draw); }
      public static Wrapper<Draw<_x>> Draw<_x>(this Wrapper<_x> _wrapper) {
        return new Wrapper<Draw<_x>>().AddRange(_wrapper).Add(CanvasToken.Draw); }
      public static Wrapper<Restore<BOTTOM>> Restore() {
        return new Wrapper<Restore<BOTTOM>>().Add(CanvasToken.Restore); }
      public static Wrapper<Restore<_x>> Restore<_x>(this Wrapper<_x> _wrapper) {
        return new Wrapper<Restore<_x>>().AddRange(_wrapper).Add(CanvasToken.Restore); }
      public static Wrapper<Save<BOTTOM>> Save() {
        return new Wrapper<Save<BOTTOM>>().Add(CanvasToken.Save); }
      public static Wrapper<Save<_x>> Save<_x>(this Wrapper<_x> _wrapper) {
        return new Wrapper<Save<_x>>().AddRange(_wrapper).Add(CanvasToken.Save); }
      public static System.Collections.Generic.List<CanvasToken> Done<Canvas>() {
        return new System.Collections.Generic.List<CanvasToken>(); }
    }
  }
}
\end{code}

Treetop's encoding algorithm comprises three main steps.
First, Treetop \emph{reverses} the input grammar and converts it to GNF.
Recall that a CFG is reversed by reversing the right-hand side of each of its
productions---the reason why the grammar is reversed is explained below.
Second, Treetop applies the construction in the proof of \cref{theorem:cff:gnf:lower}
to convert the grammar into a subtyping machine (lines 2--15 of \cref{lst:canvas:generated}).
Third, Treetop constructs a fluent API (lines 16--49 of \cref{lst:canvas:generated}).
This fluent API utilizes the subtyping machine generated earlier to recognize the Canvas grammar.
A fluent chain of method calls starts with variable \cc{Start};
subsequent method calls encode the terminals of the input grammar, \+Draw+, \+Save+, and \+Restore+:
\[\inline{Start.Draw().Draw().Save().Draw().Restore().Save().Draw()$\ldots$}\]
The intermediate chain type is \cc{Wrapper<T>}, where type variable \cc{T}
captures the identities of the methods called thus far, e.g., \cc{T}=\cc{Draw<Save<Restore<$\ldots$>{}>{}>}.
When, for instance, method \cc{Draw()} is called, type \cc{Draw} is appended to \cc{T},
yielding the type \cc{Wrapper<Draw<T>{}>}.
Note that \cc{Wrapper}'s type argument records the method calls in reverse order,
i.e., the last method called is placed on the top of the type.
Treetop solves this issue by reversing the original input grammar $G$.
Then, a reversed type \cc{T} belongs to the reverse of $L(G)$
if and only if the fluent API method calls, in order of invocation, belong to $L(G)$.

A fluent chain is terminated by a call to method \cc{Done} of class \cc{Wrapper},
whose signature is
\[\inline[style=csharp]{List<CanvasToken> Done<API>() where API:T}\]
Recall that \cc{T} is substituted by a type describing the fluent chain.
When method \cc{Done} is called with type argument \cc{API}=\cc{Canvas}, declared in line 15
of \cref{lst:canvas:generated}, the method's constraint \cc{API:T} invokes the
subtyping query
\begin{equation}\label{eq:implied:query}
	\cc{Canvas} \subtype \cc{T}
\end{equation}
Type \cc{Canvas} is a part of the subtyping machine that encodes the initial grammar tree $t_0$.
Thus, the subtyping query in \cref{eq:implied:query} initiates the Canvas subtyping machine,
which accepts type \cc{T} if and only if the fluent chain that it encodes conforms to the Canvas
API protocol.
Finally, method \cc{Done} returns a list of tokens representing the fluent chain
that can be analyzed at runtime.

A use example of the Canvas fluent API is shown in \cref{lst:usage:example},
describing the content of another source file \texttt{UseExample.cs}.
The code demonstrates two invocations of the Canvas fluent API: the first
breaks the Canvas protocol, so it does not compile (line 5), and the second
adheres to the protocol, so it compiles properly (lines 6--7).
The loop in line 8 simply lists the tokens returned from the chain above,
printing to the console
\[\inline{Draw Draw Save Draw Restore Draw Save Draw Draw}\]

\begin{code}[float=h,style=csharp,caption={Source file \texttt{UseExample.cs} invoking the fluent API of \protect\cref{lst:canvas:generated}},
	label={lst:usage:example}]
using CanvasAPI; using CanvasAPI.FluentAPI;
class UseExample {
  public static void Main(string[] _) {
    // Start.Save().Restore().Draw().Restore().Save().Done<Canvas>(); // Does not compile
    var canvas = Start.Draw().Draw().Save().Draw().Restore()
                      .Draw().Save().Draw().Draw().Done<Canvas>();
    foreach (var token in canvas) System.Console.Write(token + " ");
  }
}
\end{code}

\paragraph*{Treetop in runtime}
Treetop implements the construction in the proof of \cref{theorem:cff:gnf:lower} that
relies on expansive inheritance.
While expansive inheritance is supported in \CSharp, the popular runtime environment~\@.NET prohibits it
\cite[\S{}II.9.2]{cli:05}.
This means that the~\@.NET framework cannot execute programs that invoke Treetop-generated APIs.
To run Treetop programs, one must use an alternative runtime environment that
supports expansive inheritance, e.g., Mono\footnote{See \url{https://www.mono-project.com/}.
Recursively expansive class tables run successfully in version 6.8.0.123.}.

\subsection{Treetop Benchmark}
\label{section:benchmark}
This section describes the results of an experiment that tested the compilation
time of different subtyping machines generated by Treetop.
The experiment resources, including the Treetop executable, scripts, example grammars,
generated subtyping machines, final results, and a detailed description of the
experiment, are available online\footnote{See \url{https://doi.org/10.5281/zenodo.5091711}.}.

Our experiment included four context-free grammars,
shown in \cref{lst:grammar:palindrome,lst:grammar:canvas,lst:grammar:dot,lst:grammar:ambiguous}.
Recall that the line \inline{v ::=} encodes an $\varepsilon$-production, $v \produce \varepsilon$.
\begin{description}
	\item[Palindrome] (\cref{lst:grammar:palindrome})
		This is yet another grammar describing the language of palindromes over $\{a,b\}$.
	\item[Canvas] (\cref{lst:grammar:canvas})
		This context-free grammar describes the Canvas API protocol, discussed in detail
		in \cref{section:use}.
	\item[DOT] (\cref{lst:grammar:dot})
		The DOT DSL is used to draw graphs \cite{Gansner:2000}.
		The deterministic grammar presented here is adapted from \citet{Yamazaki:2019}.
	\item[Ambiguous] (\cref{lst:grammar:ambiguous})
		This grammar describes the inherently ambiguous language \[
			\ell=\{a^nb^mc^md^n~|~n,m \ge 1\}\cup \{a^nb^nc^md^m~|~n,m \ge 1\}
		\]
		discussed by \citet[\S5.4.4]{Hopcroft:Motwani:Ullman:07}.
		Every grammar of an inherently ambiguous context-free language must be ambiguous
		and thus non-deterministic.
\end{description}

\begin{figure}
\begin{tabular}{cc}
	\begin{minipage}{.5\linewidth}
	\begin{code}[backgroundcolor=\color{perfect-grey},caption={Grammar file \texttt{Palindrome.cfg}},label={lst:grammar:palindrome}]
S ::= a S a
S ::= b S b
S ::= a
S ::= b
S ::=
\end{code}
	\end{minipage}
	&
	\begin{minipage}{.45\linewidth}
	\begin{code}[backgroundcolor=\color{perfect-grey},caption={Grammar file \texttt{Canvas.cfg}},label={lst:grammar:canvas}]
Canvas ::= Draw Canvas
Canvas ::= Save Canvas Restore Canvas
Canvas ::= Save Canvas
Canvas ::=
\end{code}
	\end{minipage}
	\\
	\begin{minipage}{.5\linewidth}
	\begin{code}[backgroundcolor=\color{perfect-grey},caption={Grammar file \texttt{DOT.cfg}},label={lst:grammar:dot}]
Graph ::= digraph Statements
Graph ::= graph Statements
Statements ::= Statement Statements
Statements ::=
Statement ::= node Ands NodeAttrs
Ands ::= and Ands
Ands ::=
Statement ::= edge Ands to Ands EdgeAttrs
NodeAttrs ::= NodeAttr NodeAttrs
NodeAttrs ::=
EdgeAttrs ::= EdgeAttr EdgeAttrs
EdgeAttrs ::=
NodeAttr ::= color
NodeAttr ::= shape
EdgeAttr ::= color
EdgeAttr ::= style
\end{code}
	\end{minipage}
	&
	\begin{minipage}{.45\linewidth}
	\begin{code}[backgroundcolor=\color{perfect-grey},caption={Grammar file \texttt{Ambiguous.cfg}},label={lst:grammar:ambiguous}]
S ::= X
S ::= Y
X ::= a X d
X ::= F
Y ::= E G
E ::= a E b
E ::=
F ::= b F c
F ::=
G ::= c G d
G ::=
\end{code}
	\end{minipage}
\end{tabular}
\end{figure}

We used the Treetop command line tool to convert each of the four grammars into
a \CSharp subtyping machine.
Next, we modified the \CSharp programs by adding subtyping queries that invoke the
subtyping machines.
The queries were encoded as dummy variable assignments.
The right-hand side of each query contained types of various sizes, and its contents
were picked at random.
For example, the Palindrome test case could use types \cc{a<a<$\cdot$>{}>} for size $n=2$,
\cc{b<a<b<$\cdot$>{}>{}>} for $n=3$, \cc{a<b<b<a<$\cdot$>{}>{}>{}>} for $n=4$, etc.
Finally, we compiled the subtyping queries on a contemporary laptop using the CSC
compiler\footnote{Windows 10 OS (x64), 16GB RAM, Intel core i7-6700hq processor, CSC version 4.8.4084.0.}
and measured their compilation time.
The results of our experiment are depicted in \cref{figure:Treetop:graph}.
The graph's x-axis shows the size $n$ of the subtyping query, and
the y-axis shows the compilation time of the query in seconds.

\begin{figure}[ht]
	\centering
	\small
	\begin{tikzpicture}[scale=.75]
		\begin{axis}[
			xlabel={Subtyping query size},
			ylabel={Compilation time [s]},
			xmin=-100, xmax=2800,
			ymin=.15, ymax=.8,
			legend pos=outer north east,
		]
		\addplot[green,mark=x] table [col sep=comma] {Ambiguous.csv};
		\addlegendentry{Ambiguous}
		\addplot[red,mark=+] table [col sep=comma] {Canvas.csv};
		\addlegendentry{Canvas}
		\addplot[blue,mark=o] table [col sep=comma] {DOT.csv};
		\addlegendentry{DOT}
		\addplot[black,mark=asterisk] table [col sep=comma] {Palindrome.csv};
		\addlegendentry{Palindrome}
		\end{axis}
	\end{tikzpicture}
	\caption{Compilation times of \protect\CSharp subtyping machines generated by Treetop}
	\label{figure:Treetop:graph}
\end{figure}
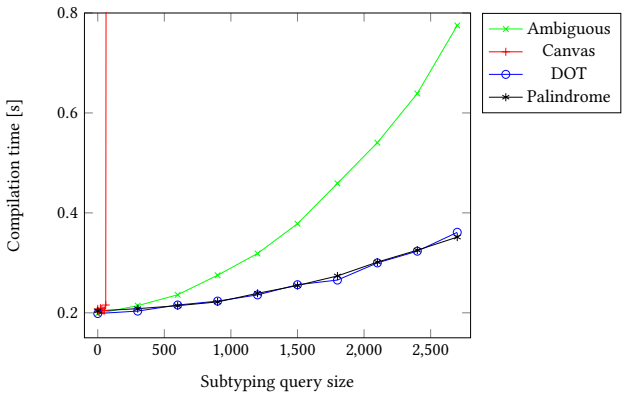

The compilation time of the Ambiguous, DOT, and Palindrome subtyping machines is polynomial
in $n$.
The polynomial trend is expected, considering the correspondence between
covariant subtyping and CFL membership discussed in \cref{section:generator}.
We also see that Ambiguous takes more time to compile than the other two cases,
indicating that different subtyping machines vary in compilation time.
Let us compare these results with those of \citet{Yamazaki:2019} and TypeLevelLR,
a fluent API generator for LR grammars.
The compilation times of TypeLevelLR are linear in $n$, as it only supports deterministic CFLs.
Yet, this linear growth is quite steep:
Using their \emph{expr} API, for instance,
compilation time increases by around 1.5sec between $n=0$ and $n=200$ in \NonCitingUse{Scala},
by 500ms in \NonCitingUse{Haskell}, and by 15sec in \NonCitingUse{CC}.
Therefore, while the compilation times of Treetop are expected to pass those of TypeLevelLR
as the input becomes longer, they are actually much lower for ``small'' inputs ($n\le 2700$, and probably beyond that).
To put this into perspective, in a recent empirical study, \citet{Nakamaru:2020b} found that only
4.98\% of the repositories in their data set use fluent chains with 42 method calls or more.

Nevertheless, the forth test case, Canvas, produced different results.
The Canvas subtyping query of size $n=70$ took 2.8sec to compile, around 2.6sec more
than the previous query of size $n=60$.
This query, generated at random, contained 33 instances of \texttt{Save} and \texttt{Restore},
the critical tokens of the Canvas grammar.
The seemingly exponential growth in compilation time of Canvas suggests that
some subtyping queries generated by Treetop may not be practical.
The difference between the compilation times of Canvas and the other three grammars
originate from the subtyping algorithm implemented in the CSC compiler.
The high compilation time could be caused by a bug, similar to the Javac compiler bug found by \citet{Gil:19}.
It is also possible, however, that the CSC compiler needs to be specialized to optimally resolve covariant subtyping.


	\section{Discussion}
	\label{section:zz}
	\label{section:discussion}
	A program that uses the subtyping algorithm to conduct a computation is called
a subtyping machine. By formalizing subtyping machines as
forest recognizers, we tie the decidable subtyping type systems of Kennedy and Pierce with familiar
families of tree grammars. We proved that non-expansively-recursive subtyping
machines recognize regular forests (\cref{theorem:regf}) and non-contravariant
machines recognize context-free forests with GNF
(\cref{theorem:cff:gnf}).  It was  also shown that multiple instantiation
inheritance corresponds to grammar non-determinism
(\cref{theorem:dregf,theorem:dcff}). The results are depicted in
\cref{figure:lattice} in \cref{section:aa}.

In non-expansive subtyping, all infinite proofs are cyclic, so they can be
detected by recording all visited sub-queries. The construction in the
proof of \cref{theorem:regf:upper} encodes cyclic proofs as
recursive productions~$v \produce v$---which are immediately removed from the
grammar. This means that infinite proofs can be intercepted in
pre-computation, i.e., even before the subtyping query is inspected.
Our construction, however, assumes that one of the types
in the query is known and fixed, following our definition of the language
of a class table (\cref{definition:class:table:language}). In the
future, our method may be developed into a better algorithm for non-expansive
subtyping that does not record sub-queries.

Context-free tree grammars (CFTGs) are recognized by pushdown tree automata
(PDTA), that use an \emph{auxiliary tree store} and accept a \emph{tree
input}~\cite{Guessarian:83}.  The proof of \cref{theorem:cff:gnf:upper}
describes a reduction of non-contravariant subtyping machines to CFTGs. In
contrast to the regular case, here the left-hand side of the subtyping query~$t_\bot$ is encoded into the
grammar initial tree form~$t₀$ and is not involved with grammar productions.
Therefore, we can resolve non-contravariant subtyping using modified PDTAs,
which accept a second input tree and use this as their initial store. Given
query~$t_\bot \subtype t_\top$, we assign~$t_\bot$ to the tree store (possibly after the
transformation of \cref{eq:covariant:encoding}) and run the PDTA on
input~$t_\top$. As our construction yields grammars in GNF, their
corresponding PDTAs run in real time~\cite{Guessarian:83}.
If multiple instantiation inheritance is also restricted, then the grammars
become deterministic and so does the PDTA run time. Thus, subtyping in~\Txm
can be resolved by a PDTA in non-deterministic real time, and in~\Tx in
deterministic real time.

The theoretical results are accompanied by a POC software artifact, Treetop.
Through Treetop, we demonstrate that subtyping can be used to implement
useful and powerful metaprogramming applications even in a decidable type system.
Treetop converts context-free grammars into \CSharp subtyping machines
and can be used to embed context-free DSLs as fluent APIs;
its practical relevance can be argued based on the work of
\citet{Ferles:20}, who found cases of context-free API protocols in the wild.
Comparing Treetop to the fluent API generator of \citet{Grigore:2017}, we see that
both support all CFGs, but Treetop does so in a decidable type system
fragment.
While Grigore's generator constructs very large subtyping machines that take
a very long time to compile, Treetop generates small machines that compile
relatively fast (in some cases).
In theory, Treetop's subtyping machines run in polynomial time.
In practice, our experiment found an example that leads to possibly
exponential compilation time using the CSC compiler.
In addition, Treetop's fluent API generation does not yet match the qualities
of previous practical generators (e.g.,~\cite{Nakamaru:17,Gil:19,Yamazaki:2019}).
More research work (but also implementation) is required to support features
such as auto-completion, early failure, and, run-time parsing to create an AST.

\paragraph*{A note on mixin inheritance}
Kennedy and Pierce included mixin inheritance in their abstract type system, allowing classes
to inherit their type variables,~$γ(x):x$.
Mixin inheritance was not discussed in this paper, given that not many
programming languages actually support it.
As it turns out, most of our theorems hold when this feature
is re-introduced to the type system.
The construction in the proof of \cref{theorem:regf:upper} is left unchanged,
since it relies on the non-expansive guarantee of inheritance in~\Tcm and does
not depend on the structure of inheritance declarations.
In contrast, the proof of \cref{theorem:cff:gnf:upper} breaks when confronted with mixin
inheritance. The construction encodes (transitive) inheritance
relations~$γ(\vv{x}):^*σ(\vv{τ})$ as context-free productions in GNF,~$γ(\vv{x}) \produceσ(\vv{τ})$.
The production corresponding to mixin
inheritance is an~$𝜀$-production,~$γ(\vv{x}) \produce xᵢ$, which does not conform
to GNF.
In this case the resulting grammar describes a general context-free forest,
so the expressiveness of~\Txm is now bounded from above by class NCF instead
of NCF\sub{GNF} (which is the lower bound).
The computational power of type system~\Tx, however, is already
bounded from below by general deterministic NCF, so it too stays the same.
Therefore, the introduction of mixin inheritance to the subtyping-based type
systems in \cref{figure:lattice} only adds a loose bound to~\Txm,~\[ 
  \|NCF\sub{GNF}|⊆𝓛(\Txm)⊆\|NCF|. 
\]

	\bibliographystyle{ACM-Reference-Format}
	\bibliography{00,author-names,publishers,big}


\begin{thebibliography}{32}


\ifx \showCODEN    \undefined \def \showCODEN     #1{\unskip}     \fi
\ifx \showDOI      \undefined \def \showDOI       #1{#1}\fi
\ifx \showISBNx    \undefined \def \showISBNx     #1{\unskip}     \fi
\ifx \showISBNxiii \undefined \def \showISBNxiii  #1{\unskip}     \fi
\ifx \showISSN     \undefined \def \showISSN      #1{\unskip}     \fi
\ifx \showLCCN     \undefined \def \showLCCN      #1{\unskip}     \fi
\ifx \shownote     \undefined \def \shownote      #1{#1}          \fi
\ifx \showarticletitle \undefined \def \showarticletitle #1{#1}   \fi
\ifx \showURL      \undefined \def \showURL       {\relax}        \fi
\providecommand\bibfield[2]{#2}
\providecommand\bibinfo[2]{#2}
\providecommand\natexlab[1]{#1}
\providecommand\showeprint[2][]{arXiv:#2}

\bibitem[\protect\citeauthoryear{Arnold, Gosling, and Holmes}{Arnold
  et~al\mbox{.}}{2006}]%
        {Arnold:Gosling:Holmes:06}
\bibfield{author}{\bibinfo{person}{Ken Arnold}, \bibinfo{person}{James
  Gosling}, {and} \bibinfo{person}{David Holmes}.}
  \bibinfo{year}{2006}\natexlab{}.
\newblock \bibinfo{booktitle}{\emph{The Java Programming Language}
  (\bibinfo{edition}{4} ed.)}.
\newblock \bibinfo{publisher}{Addison-Wesley}, \bibinfo{address}{Upper Saddle
  River, NJ}.
\newblock
\showISBNx{978-0-321-34980-4}


\bibitem[\protect\citeauthoryear{Barendregt}{Barendregt}{1991}]%
        {Barendregt:91}
\bibfield{author}{\bibinfo{person}{Henk Barendregt}.}
  \bibinfo{year}{1991}\natexlab{}.
\newblock \showarticletitle{Introduction to generalized type systems}.
\newblock \bibinfo{journal}{\emph{J. Functional {Prog.}}}  \bibinfo{volume}{1}
  (\bibinfo{year}{1991}), \bibinfo{pages}{125–154}.
\newblock
Issue 2.
\showISSN{0956-7968}
\urldef\tempurl%
\url{https://doi.org/10.1017/s0956796800020025.}
\showDOI{\tempurl}


\bibitem[\protect\citeauthoryear{Blum and Koch}{Blum and Koch}{1999}]%
        {Blum:99}
\bibfield{author}{\bibinfo{person}{Norbert Blum} {and} \bibinfo{person}{Robert
  Koch}.} \bibinfo{year}{1999}\natexlab{}.
\newblock \showarticletitle{Greibach Normal Form Transformation Revisited}.
\newblock \bibinfo{journal}{\emph{Information and Computation}}
  \bibinfo{volume}{150}, \bibinfo{number}{1} (\bibinfo{year}{1999}),
  \bibinfo{pages}{112 -- 118}.
\newblock
\showISSN{0890-5401}
\urldef\tempurl%
\url{https://doi.org/10.1006/inco.1998.2772}
\showDOI{\tempurl}


\bibitem[\protect\citeauthoryear{Chamberlin and Boyce}{Chamberlin and
  Boyce}{1974}]%
        {Chamberlin:Boyce:74}
\bibfield{author}{\bibinfo{person}{Donald~D. Chamberlin} {and}
  \bibinfo{person}{Raymond~F. Boyce}.} \bibinfo{year}{1974}\natexlab{}.
\newblock \showarticletitle{{SEQUEL}: A Structured English Query Language}. In
  \bibinfo{booktitle}{\emph{Proc. ACM SIGFIDET Work. Data Description, Access
  \& Control}} (Ann Arbor, Michigan) \emph{(\bibinfo{series}{SIGFIDET '74})}.
  \bibinfo{publisher}{ACM}, \bibinfo{address}{NY/USA},
  \bibinfo{pages}{249–264}.
\newblock
\showISBNx{9781450374156}
\urldef\tempurl%
\url{https://doi.org/10.1145/800296.811515}
\showDOI{\tempurl}


\bibitem[\protect\citeauthoryear{Comon, Dauchet, Gilleron, Löding, Jacquemard,
  Lugiez, Tison, and Tommasi}{Comon et~al\mbox{.}}{2007}]%
        {Comon:07}
\bibfield{author}{\bibinfo{person}{H. Comon}, \bibinfo{person}{M. Dauchet},
  \bibinfo{person}{R. Gilleron}, \bibinfo{person}{C. Löding},
  \bibinfo{person}{F. Jacquemard}, \bibinfo{person}{D. Lugiez},
  \bibinfo{person}{S. Tison}, {and} \bibinfo{person}{M. Tommasi}.}
  \bibinfo{year}{2007}\natexlab{}.
\newblock \bibinfo{title}{Tree Automata Techniques and {App.}}
\newblock \bibinfo{howpublished}{Available on:
  \url{http://tata.gforge.inria.fr/} (accessed Oct\@.~2020)}.
\newblock


\bibitem[\protect\citeauthoryear{Coplien}{Coplien}{1996}]%
        {Coplien:1996}
\bibfield{author}{\bibinfo{person}{James~O. Coplien}.}
  \bibinfo{year}{1996}\natexlab{}.
\newblock \bibinfo{booktitle}{\emph{Curiously Recurring Template Patterns}}.
\newblock \bibinfo{publisher}{SIGS Publications, Inc.}, \bibinfo{address}{USA},
  \bibinfo{pages}{135–144}.
\newblock
\showISBNx{1884842372}


\bibitem[\protect\citeauthoryear{{ECMA} International}{{ECMA}
  International}{2005}]%
        {cli:05}
{ECMA} International \bibinfo{year}{2005}\natexlab{}.
\newblock \bibinfo{booktitle}{\emph{{ECMA} Standard 335: Common Language
  Infrastructure} (\bibinfo{edition}{3} ed.)}.
\newblock {ECMA} International.
\newblock
\newblock
\shownote{Available at
  \url{http://www.ecma-international.org/publications/standards/Ecma-335.htm}}.


\bibitem[\protect\citeauthoryear{Ferles, Stephens, and Dillig}{Ferles
  et~al\mbox{.}}{2020}]%
        {Ferles:20}
\bibfield{author}{\bibinfo{person}{Kostas Ferles}, \bibinfo{person}{Jon
  Stephens}, {and} \bibinfo{person}{Isil Dillig}.}
  \bibinfo{year}{2020}\natexlab{}.
\newblock \bibinfo{title}{Verifying Correct Usage of Context-Free {API}
  Protocols (Extended Version)}.
\newblock
\newblock
\showeprint[arxiv]{2010.09652}~[cs.PL]


\bibitem[\protect\citeauthoryear{Fowler}{Fowler}{2005}]%
        {Fowler:2005}
\bibfield{author}{\bibinfo{person}{Martin Fowler}.}
  \bibinfo{year}{2005}\natexlab{}.
\newblock \bibinfo{title}{FluentInterface}.
\newblock
\newblock
\urldef\tempurl%
\url{https://www.martinfowler.com/bliki/FluentInterface.html}
\showURL{%
\tempurl}
\newblock
\shownote{(accessed August 12, 2020)}.


\bibitem[\protect\citeauthoryear{Gansner and North}{Gansner and North}{2000}]%
        {Gansner:2000}
\bibfield{author}{\bibinfo{person}{Emden~R. Gansner} {and}
  \bibinfo{person}{Stephen~C. North}.} \bibinfo{year}{2000}\natexlab{}.
\newblock \showarticletitle{An open graph visualization system and its
  applications to software engineering}.
\newblock \bibinfo{journal}{\emph{Software: Practice and Experience}}
  \bibinfo{volume}{30}, \bibinfo{number}{11} (\bibinfo{year}{2000}),
  \bibinfo{pages}{1203--1233}.
\newblock
\urldef\tempurl%
\url{https://doi.org/10.1002/1097-024X(200009)30:11<1203::AID-SPE338>3.0.CO;2-N}
\showDOI{\tempurl}


\bibitem[\protect\citeauthoryear{G\'{e}cseg and Steinby}{G\'{e}cseg and
  Steinby}{1997}]%
        {Gecseg:97}
\bibfield{author}{\bibinfo{person}{Ferenc G\'{e}cseg} {and}
  \bibinfo{person}{Magnus Steinby}.} \bibinfo{year}{1997}\natexlab{}.
\newblock \bibinfo{booktitle}{\emph{Tree Languages}}.
\newblock \bibinfo{publisher}{Springer-Verlag}, \bibinfo{address}{Berlin,
  Heidelberg}, \bibinfo{pages}{1--68}.
\newblock
\showISBNx{3540606491}


\bibitem[\protect\citeauthoryear{Gil and Roth}{Gil and Roth}{2020}]%
        {Gil:20}
\bibfield{author}{\bibinfo{person}{Joseph Gil} {and} \bibinfo{person}{Ori
  Roth}.} \bibinfo{year}{2020}\natexlab{}.
\newblock \showarticletitle{Ties between Parametrically Polymorphic Type
  Systems and Finite Control Automata}.
\newblock \bibinfo{journal}{\emph{arXiv preprint arXiv:2009.04437}}
  (\bibinfo{year}{2020}).
\newblock
\showeprint[arxiv]{2009.04437}~[cs.PL]


\bibitem[\protect\citeauthoryear{Gil and Levy}{Gil and Levy}{2016}]%
        {Gil:Levy:2016}
\bibfield{author}{\bibinfo{person}{Yossi Gil} {and} \bibinfo{person}{Tomer
  Levy}.} \bibinfo{year}{2016}\natexlab{}.
\newblock \showarticletitle{Formal language recognition with the {Java} type
  checker}. In \bibinfo{booktitle}{\emph{\nth{30} Europ. Conf {OO} {Prog.}
  (ECOOP 2016)}} \emph{(\bibinfo{series}{Leibniz International Proceedings in
  Inf. (LIPIcs)}, Vol.~\bibinfo{volume}{56})},
  \bibfield{editor}{\bibinfo{person}{Shriram Krishnamurthi} {and}
  \bibinfo{person}{Benjamin~S. Lerner}} (Eds.). \bibinfo{publisher}{Schloss
  Dagstuhl--Leibniz-Zentrum fuer Informatik}, \bibinfo{address}{Dagstuhl,
  Germany}, \bibinfo{pages}{10:1--10:27}.
\newblock
\showISBNx{978-3-95977-014-9}
\showISSN{1868-8969}
\urldef\tempurl%
\url{https://doi.org/10.4230/LIPIcs.ECOOP.2016.10}
\showDOI{\tempurl}


\bibitem[\protect\citeauthoryear{Gil and Roth}{Gil and Roth}{2019}]%
        {Gil:19}
\bibfield{author}{\bibinfo{person}{Yossi Gil} {and} \bibinfo{person}{Ori
  Roth}.} \bibinfo{year}{2019}\natexlab{}.
\newblock \showarticletitle{{Fling}---a fluent {API} generator}. In
  \bibinfo{booktitle}{\emph{\nth{33} Europ. Conf {OO} {Prog.} (ECOOP 2019)}}
  \emph{(\bibinfo{series}{Leibniz International Proceedings in Inf. (LIPIcs)},
  Vol.~\bibinfo{volume}{134})}, \bibfield{editor}{\bibinfo{person}{Alastair~F.
  Donaldson}} (Ed.). \bibinfo{publisher}{Schloss Dagstuhl--Leibniz-Zentrum fuer
  Informatik}, \bibinfo{address}{Dagstuhl, Germany},
  \bibinfo{pages}{13:1--13:25}.
\newblock
\showISBNx{978-3-95977-111-5}
\showISSN{1868-8969}
\urldef\tempurl%
\url{https://doi.org/10.4230/LIPIcs.ECOOP.2019.13}
\showDOI{\tempurl}


\bibitem[\protect\citeauthoryear{Greenman, Muehlboeck, and Tate}{Greenman
  et~al\mbox{.}}{2014}]%
        {Greenman:14}
\bibfield{author}{\bibinfo{person}{Ben Greenman}, \bibinfo{person}{Fabian
  Muehlboeck}, {and} \bibinfo{person}{Ross Tate}.}
  \bibinfo{year}{2014}\natexlab{}.
\newblock \showarticletitle{Getting F-Bounded Polymorphism into Shape}.
\newblock \bibinfo{journal}{\emph{SIGPLAN Not.}} \bibinfo{volume}{49},
  \bibinfo{number}{6} (\bibinfo{date}{June} \bibinfo{year}{2014}),
  \bibinfo{pages}{89–99}.
\newblock
\showISSN{0362-1340}
\urldef\tempurl%
\url{https://doi.org/10.1145/2666356.2594308}
\showDOI{\tempurl}


\bibitem[\protect\citeauthoryear{Greibach}{Greibach}{1965}]%
        {Greibach:65}
\bibfield{author}{\bibinfo{person}{Sheila~A. Greibach}.}
  \bibinfo{year}{1965}\natexlab{}.
\newblock \showarticletitle{A New Normal-Form Theorem for Context-Free Phrase
  Structure Grammars}.
\newblock \bibinfo{journal}{\emph{J. ACM}} \bibinfo{volume}{12},
  \bibinfo{number}{1} (\bibinfo{date}{Jan.} \bibinfo{year}{1965}),
  \bibinfo{pages}{42–52}.
\newblock
\showISSN{0004-5411}
\urldef\tempurl%
\url{https://doi.org/10.1145/321250.321254}
\showDOI{\tempurl}


\bibitem[\protect\citeauthoryear{Grigore}{Grigore}{2017}]%
        {Grigore:2017}
\bibfield{author}{\bibinfo{person}{Radu Grigore}.}
  \bibinfo{year}{2017}\natexlab{}.
\newblock \showarticletitle{Java Generics Are Turing Complete}. In
  \bibinfo{booktitle}{\emph{Proceedings of the 44th ACM SIGPLAN Symposium on
  Principles of Programming Languages}} (Paris, France)
  \emph{(\bibinfo{series}{POPL 2017})}. \bibinfo{publisher}{Association for
  Computing Machinery}, \bibinfo{address}{New York, NY, USA},
  \bibinfo{pages}{73–85}.
\newblock
\showISBNx{9781450346603}
\urldef\tempurl%
\url{https://doi.org/10.1145/3009837.3009871}
\showDOI{\tempurl}


\bibitem[\protect\citeauthoryear{Guessarian}{Guessarian}{1983}]%
        {Guessarian:83}
\bibfield{author}{\bibinfo{person}{Ir{\`e}ne Guessarian}.}
  \bibinfo{year}{1983}\natexlab{}.
\newblock \showarticletitle{Pushdown tree automata}.
\newblock \bibinfo{journal}{\emph{Math. Syst. Theory}} \bibinfo{volume}{16},
  \bibinfo{number}{1} (\bibinfo{year}{1983}), \bibinfo{pages}{237--263}.
\newblock


\bibitem[\protect\citeauthoryear{Hejlsberg, Wiltamuth, Golde, and
  Torgersen}{Hejlsberg et~al\mbox{.}}{2003}]%
        {Hejlsberg:Wiltamuth:Golde:Torgersenby:03}
\bibfield{author}{\bibinfo{person}{Anders Hejlsberg}, \bibinfo{person}{Scott
  Wiltamuth}, \bibinfo{person}{Peter Golde}, {and} \bibinfo{person}{Mads
  Torgersen}.} \bibinfo{year}{2003}\natexlab{}.
\newblock \bibinfo{booktitle}{\emph{The C＃ Programming Language}
  (\bibinfo{edition}{\nth2} ed.)}.
\newblock \bibinfo{publisher}{Addison-Wesley}, \bibinfo{address}{Reading, MA,
  USA}.
\newblock
\showISBNx{0-321-15491-6}


\bibitem[\protect\citeauthoryear{Hopcroft, Motwani, and Ullman}{Hopcroft
  et~al\mbox{.}}{2007}]%
        {Hopcroft:Motwani:Ullman:07}
\bibfield{author}{\bibinfo{person}{John~E. Hopcroft}, \bibinfo{person}{Rajeev
  Motwani}, {and} \bibinfo{person}{Jeffrey~D. Ullman}.}
  \bibinfo{year}{2007}\natexlab{}.
\newblock \bibinfo{booktitle}{\emph{Introduction to automata theory, languages,
  and computation} (\bibinfo{edition}{3} ed.)}.
\newblock \bibinfo{publisher}{Pearson Addison Wesley},
  \bibinfo{address}{Boston, MA}.
\newblock
\showISBNx{0321455371}


\bibitem[\protect\citeauthoryear{Jones}{Jones}{2003}]%
        {Jones:03}
\bibfield{author}{\bibinfo{person}{Simon~Peyon Jones}.}
  \bibinfo{year}{2003}\natexlab{}.
\newblock \bibinfo{booktitle}{\emph{Haskell 98 {L}anguage and {L}ibraries: The
  {R}evisited {R}eport}}.
\newblock \bibinfo{publisher}{Cambridge University Press},
  \bibinfo{address}{Cambridge, England}. 272 pages.
\newblock
\showISBNx{0-521-82614-4}


\bibitem[\protect\citeauthoryear{Kennedy and Pierce}{Kennedy and
  Pierce}{2007}]%
        {Kennedy:Pierce:07}
\bibfield{author}{\bibinfo{person}{Andrew Kennedy} {and}
  \bibinfo{person}{Benjamin Pierce}.} \bibinfo{year}{2007}\natexlab{}.
\newblock \showarticletitle{On decidability of nominal subtyping with
  variance}. In \bibinfo{booktitle}{\emph{Int. Work. Found. ＆ Devel. {OO}
  Lang.}} (Nice, France) \emph{(\bibinfo{series}{FOOL/WOOD`07})}.
  \bibinfo{publisher}{ACM}, \bibinfo{address}{Nice, France}, Article
  \bibinfo{articleno}{5}, \bibinfo{numpages}{12}~pages.
\newblock
\urldef\tempurl%
\url{http://foolwood07.cs.uchicago.edu/program/kennedy-abstract.html}
\showURL{%
\tempurl}


\bibitem[\protect\citeauthoryear{Milner, Tofte, Harper, and MacQueen}{Milner
  et~al\mbox{.}}{1997}]%
        {ml-spec:Milner:Tofte:Harper:MacQueen:97}
\bibfield{author}{\bibinfo{person}{R. Milner}, \bibinfo{person}{M. Tofte},
  \bibinfo{person}{Robert Harper}, {and} \bibinfo{person}{D. MacQueen}.}
  \bibinfo{year}{1997}\natexlab{}.
\newblock \bibinfo{booktitle}{\emph{The Definition of {S}tandard {ML}
  (Revised)}}.
\newblock \bibinfo{publisher}{MIT Press}, \bibinfo{address}{Cambridge, MA,
  USA}.
\newblock


\bibitem[\protect\citeauthoryear{Nakamaru and Chiba}{Nakamaru and
  Chiba}{2020}]%
        {Nakamaru:2020}
\bibfield{author}{\bibinfo{person}{Tomoki Nakamaru} {and}
  \bibinfo{person}{Shigeru Chiba}.} \bibinfo{year}{2020}\natexlab{}.
\newblock \showarticletitle{Generating a generic fluent {API} in {Java}}.
\newblock \bibinfo{journal}{\emph{Art, Sci., ＆ Eng. {Prog.}}}
  \bibinfo{volume}{4}, Article \bibinfo{articleno}{9} (\bibinfo{date}{Feb.}
  \bibinfo{year}{2020}), \bibinfo{numpages}{23}~pages.
\newblock
Issue 3.
\showISSN{2473-7321}
\urldef\tempurl%
\url{https://doi.org/10.22152/programming-journal.org/2020/4/9}
\showDOI{\tempurl}


\bibitem[\protect\citeauthoryear{Nakamaru, Ichikawa, Yamazaki, and
  Chiba}{Nakamaru et~al\mbox{.}}{2017}]%
        {Nakamaru:17}
\bibfield{author}{\bibinfo{person}{Tomoki Nakamaru}, \bibinfo{person}{Kazuhiro
  Ichikawa}, \bibinfo{person}{Tetsuro Yamazaki}, {and} \bibinfo{person}{Shigeru
  Chiba}.} \bibinfo{year}{2017}\natexlab{}.
\newblock \showarticletitle{{Silverchain}: a fluent {API} generator}. In
  \bibinfo{booktitle}{\emph{Proc. \nth{16} ACM SIGPLAN Int. Conf Generative
  Prog.}} \emph{(\bibinfo{series}{GPCE'17})}. \bibinfo{publisher}{ACM},
  \bibinfo{address}{Vancouver, BC, Canada}, \bibinfo{pages}{199--211}.
\newblock
\showISBNx{978-1-4503-5524-7}


\bibitem[\protect\citeauthoryear{Nakamaru, Matsunaga, Yamazaki, Akiyama, and
  Chiba}{Nakamaru et~al\mbox{.}}{2020}]%
        {Nakamaru:2020b}
\bibfield{author}{\bibinfo{person}{Tomoki Nakamaru}, \bibinfo{person}{Tomomasa
  Matsunaga}, \bibinfo{person}{Tetsuro Yamazaki}, \bibinfo{person}{Soramichi
  Akiyama}, {and} \bibinfo{person}{Shigeru Chiba}.}
  \bibinfo{year}{2020}\natexlab{}.
\newblock \showarticletitle{An Empirical Study of Method Chaining in Java}. In
  \bibinfo{booktitle}{\emph{Proceedings of the 17th International Conference on
  Mining Software Repositories}} (Seoul, Republic of Korea)
  \emph{(\bibinfo{series}{MSR '20})}. \bibinfo{publisher}{Association for
  Computing Machinery}, \bibinfo{address}{New York, NY, USA},
  \bibinfo{pages}{93–102}.
\newblock
\showISBNx{9781450375177}
\urldef\tempurl%
\url{https://doi.org/10.1145/3379597.3387441}
\showDOI{\tempurl}


\bibitem[\protect\citeauthoryear{Odersky, Altherr, Cremet, Emir, Maneth,
  Micheloud, Mihaylov, Schinz, Stenman, and Zenger}{Odersky
  et~al\mbox{.}}{2004}]%
  {Odersky:Altherr:Cremet:Emir:Maneth:Micheloud:Mihaylov:Schinz:Stenman:Zenger:04}
\bibfield{author}{\bibinfo{person}{Martin Odersky}, \bibinfo{person}{Philippe
  Altherr}, \bibinfo{person}{Vincent Cremet}, \bibinfo{person}{Burak Emir},
  \bibinfo{person}{Sebastian Maneth}, \bibinfo{person}{St{\'e}phane Micheloud},
  \bibinfo{person}{Nikolay Mihaylov}, \bibinfo{person}{Michel Schinz},
  \bibinfo{person}{Erik Stenman}, {and} \bibinfo{person}{Matthias Zenger}.}
  \bibinfo{year}{2004}\natexlab{}.
\newblock \bibinfo{booktitle}{\emph{An Overview of the {Scala} Programming
  Language}}.
\newblock \bibinfo{type}{{T}echnical {R}eport} IC/2004/64.
  \bibinfo{institution}{EPFL Lausanne, Switzerland}.
\newblock


\bibitem[\protect\citeauthoryear{Osterholzer}{Osterholzer}{2018}]%
        {Osterholzer:18}
\bibfield{author}{\bibinfo{person}{Johannes Osterholzer}.}
  \bibinfo{year}{2018}\natexlab{}.
\newblock \emph{\bibinfo{title}{New Results on Context-Free Tree Languages}}.
\newblock \bibinfo{thesistype}{Ph.\,D. Dissertation}.
  \bibinfo{school}{Technische Universität Dresden},
  \bibinfo{address}{Zellescher Weg 18 01069 Dresden}.
\newblock
\urldef\tempurl%
\url{https://nbn-resolving.org/urn:nbn:de:bsz:14-qucosa-235434}
\showURL{%
\tempurl}


\bibitem[\protect\citeauthoryear{Valiant}{Valiant}{1975}]%
        {Valiant:75b}
\bibfield{author}{\bibinfo{person}{Leslie~G. Valiant}.}
  \bibinfo{year}{1975}\natexlab{}.
\newblock \showarticletitle{General context-free recognition in less than cubic
  time}.
\newblock \bibinfo{journal}{\emph{J. Comput. System Sci.}}
  \bibinfo{volume}{10}, \bibinfo{number}{2} (\bibinfo{year}{1975}),
  \bibinfo{pages}{308 -- 315}.
\newblock
\showISSN{0022-0000}
\urldef\tempurl%
\url{https://doi.org/10.1016/S0022-0000(75)80046-8}
\showDOI{\tempurl}


\bibitem[\protect\citeauthoryear{Viroli}{Viroli}{2000}]%
        {Viroli:00}
\bibfield{author}{\bibinfo{person}{Mirko Viroli}.}
  \bibinfo{year}{2000}\natexlab{}.
\newblock \bibinfo{booktitle}{\emph{On the recursive generation of parametric
  types}}.
\newblock \bibinfo{type}{{T}echnical {R}eport}.
  \bibinfo{institution}{DEIS-LIA-00-002, Universit\'a di Bologna}.
\newblock


\bibitem[\protect\citeauthoryear{Xu}{Xu}{2010}]%
        {Xu:2010}
\bibfield{author}{\bibinfo{person}{Hao Xu}.} \bibinfo{year}{2010}\natexlab{}.
\newblock \showarticletitle{{EriLex}: an embedded domain specific language
  generator}. In \bibinfo{booktitle}{\emph{Objects, Models, Components,
  Patterns}}, \bibfield{editor}{\bibinfo{person}{Jan Vitek}} (Ed.).
  \bibinfo{publisher}{Springer}, \bibinfo{address}{Berlin, Heidelberg},
  \bibinfo{pages}{192--212}.
\newblock
\showISBNx{978-3-642-13953-6}


\bibitem[\protect\citeauthoryear{Yamazaki, Nakamaru, Ichikawa, and
  Chiba}{Yamazaki et~al\mbox{.}}{2019}]%
        {Yamazaki:2019}
\bibfield{author}{\bibinfo{person}{Tetsuro Yamazaki}, \bibinfo{person}{Tomoki
  Nakamaru}, \bibinfo{person}{Kazuhiro Ichikawa}, {and}
  \bibinfo{person}{Shigeru Chiba}.} \bibinfo{year}{2019}\natexlab{}.
\newblock \showarticletitle{Generating a fluent {API} with syntax checking from
  an {LR} grammar}.
\newblock \bibinfo{journal}{\emph{Proc. ACM Program. Lang.}}
  \bibinfo{volume}{3}, Article \bibinfo{articleno}{134} (\bibinfo{date}{Oct.}
  \bibinfo{year}{2019}), \bibinfo{numpages}{24}~pages.
\newblock
\urldef\tempurl%
\url{https://doi.org/10.1145/3360560}
\showDOI{\tempurl}


\end{thebibliography}
	
\end{document}